\pdfoutput=1
\documentclass[
paper=letter,%
pagesize,%
numbers=noendperiod,%
captions=nooneline,%
abstracton%
]{scrartcl}

\usepackage[T1]{fontenc}%
\usepackage{lmodern}%
\usepackage[american]{babel}%
\usepackage{microtype}%

\usepackage[
hyperref,%
table%
]{xcolor}%

\usepackage[nouppercase]{scrpage2}%

\usepackage{eso-pic}%
\usepackage{rotating}%
\usepackage{amsmath}%
\usepackage{mathtools}%
\usepackage{amssymb}%
\usepackage{amsthm}%
\usepackage{etoolbox}%
\usepackage{bm}%
\usepackage{bbm}%
\usepackage{enumitem}%
\usepackage{graphicx}%
\usepackage{grffile}%
\usepackage{tikz}%
\usepackage{wrapfig}%
\usepackage{tabularx}%
\usepackage{bbm}
\usepackage{siunitx}%
\usepackage{booktabs}%
\usepackage{multirow}%
\usepackage{fancyvrb}%
\usepackage{listings}%
\usepackage{csquotes}%
\usepackage[
backend=bibtex,%
style=authoryear,%
dashed=false,%
uniquename=init,%
firstinits=true,%
hyperref=true,%
useprefix=true,%
maxcitenames=2,%
maxbibnames=6,%
date=iso8601,%
urldate=iso8601%
]{biblatex}%
\usepackage[
hypertexnames=false,%
setpagesize=false,%
pdfborder={0 0 0},%
pdfstartview=Fit,%
bookmarksopen=true,%
bookmarksnumbered=true%
]{hyperref}%
\xdefinecolor{lightgray}{RGB}{247, 247, 247}%
\xdefinecolor{semilightgray}{RGB}{240, 240, 240}%
\xdefinecolor{middlegray}{RGB}{127, 127, 127}%
\xdefinecolor{blue}{RGB}{58, 95, 205}%
\xdefinecolor{deepskyblue}{RGB}{0, 154, 205}%
\xdefinecolor{chocolate}{RGB}{205, 102, 29}%

\pagestyle{scrheadings}%
\setkomafont{pageheadfoot}{\normalfont\normalcolor\sffamily}%
\setkomafont{pagenumber}{\normalfont\normalcolor\sffamily}%
\automark{section}%
\setcounter{secnumdepth}{3}%
\setkomafont{captionlabel}{\normalfont\normalcolor\sffamily\bfseries}%

\setlist{%
  align=left,%
  labelsep=*,%
  leftmargin=*,%
  topsep=1mm,%
  itemsep=0mm%
}
\newcommand*{\mysquare}{\rule[0.18em]{0.36em}{0.36em}}
\newcommand*{\mytriangle}{\raisebox{0.12em}{\resizebox{0.48em}{0.48em}{$\blacktriangleright$}}}
\newcommand*{\mybar}{\rule[0.32em]{0.62em}{0.08em}}
\newcommand*{\mydot}{\raisebox{0.14em}{\resizebox{0.44em}{!}{$\bullet$}}}
\setlist[itemize,1]{label={\mysquare\ }}%
\setlist[itemize,2]{label={\mytriangle\ }}%
\setlist[itemize,3]{label={\mybar\ }}%
\setlist[itemize,4]{label={\mydot\ }}%
\setlist[enumerate,1]{label=\arabic*)}%
\setlist[enumerate,2]{label=\arabic{enumi}.\arabic*)}%
\setlist[enumerate,3]{label=\arabic{enumi}.\arabic{enumii}.\arabic*)}%

\makeatletter
\newcommand\myisodate{\number\year-\ifcase\month\or 01\or 02\or 03\or 04\or 05\or 06\or 07\or 08\or 09\or 10\or 11\or 12\fi-\ifcase\day\or 01\or 02\or 03\or 04\or 05\or 06\or 07\or 08\or 09\or 10\or 11\or 12\or 13\or 14\or 15\or 16\or 17\or 18\or 19\or 20\or 21\or 22\or 23\or 24\or 25\or 26\or 27\or 28\or 29\or 30\or 31\fi}%
\makeatother
\newcommand*{\abstractnoindent}{}%
\let\abstractnoindent\abstract
\renewcommand*{\abstract}{\let\quotation\quote\let\endquotation\endquote
  \abstractnoindent}
\deffootnote[1em]{1em}{1em}{\textsuperscript{\thefootnotemark}}%
\pdfstringdefDisableCommands{\let\bm\relax}%

\lstset{%
  basicstyle=\ttfamily\small,%
  frame=lrtb, framerule=0pt, framexleftmargin=1pt,%
  basewidth=0.5em,%
  tabsize=8,%
  showstringspaces=false,%
  captionpos=b,%
  breaklines=true,%
  fancyvrb=true,%
  extendedchars=false,%
  rangeprefix=\#\#'\ \{\ ,%
  rangesuffix=\ \},%
  includerangemarker=false%
}

\lstdefinestyle{input}{
  backgroundcolor=\color{semilightgray},%
  commentstyle=\itshape\color{chocolate},%
  keywordstyle=\color{blue},%
  stringstyle=\color{blue},%
  numbers=left,%
  numbersep=4.8pt,%
  numberstyle=\color{darkgray!80}\tiny%
}
\lstdefinestyle{output}{
  backgroundcolor=\color{lightgray}%
}

\lstdefinestyle{Lstyle}{
  language=[LaTeX]TeX,%
  texcs={},%
  otherkeywords={}%
}

\lstnewenvironment{Linput}[1][]{%
  \lstset{style=input, style=Lstyle}
  #1%
}{\vspace{-0.25\baselineskip}}%

\lstnewenvironment{Loutput}[1][]{%
  \lstset{style=output, style=Lstyle}
  #1%
}{\vspace{-0.25\baselineskip}}%

\lstdefinestyle{Rstyle}{
  language=R,%
  keywords={if, else, repeat, while, function, for, in, next, break},%
  otherkeywords={}%
}

\lstnewenvironment{Rinput}[1][]{%
  \lstset{style=input, style=Rstyle}
  #1%
}{\vspace{-0.25\baselineskip}}%

\lstnewenvironment{Routput}[1][]{%
  \lstset{style=output, style=Rstyle}
  #1%
}{\vspace{-0.25\baselineskip}}%

\setlength{\bibhang}{1em}%
\DeclareNameAlias{sortname}{last-first}%
\DefineBibliographyExtras{american}{\DeclareQuotePunctuation{}}%
\renewbibmacro*{volume+number+eid}{%
  \setunit*{\addcomma\space}%
  \printfield{volume}%
  \printfield{number}}
\DeclareFieldFormat*{number}{(#1)}
\DeclareFieldFormat*{title}{#1}%
\DeclareFieldFormat{doi}{%
  \ifhyperref
    {\href{http://dx.doi.org/#1}{\nolinkurl{doi:#1}}}%
    {\nolinkurl{doi:#1}}}%
\renewbibmacro*{in:}{}%
\DeclareFieldFormat{isbn}{ISBN #1}%
\DeclareFieldFormat{pages}{#1}%
\DeclareFieldFormat{url}{\url{#1}}%
\DeclareFieldFormat{urldate}{\mkbibparens{#1}}%
\addbibresource{bib.bib}%
\renewcommand*{\cite}[2][]{\textcite[#1]{#2}}%

\newif\ifstarttheorem
\newtheoremstyle{mythmstyle}%
{0.5em}%
{0.5em}%
{}%
{}%
{\sffamily\bfseries\global\starttheoremtrue}%
{}%
{\newline}%
{\thmname{#1}\ \thmnumber{#2}\ \thmnote{(#3)}}%
\theoremstyle{mythmstyle}%
\newtheorem{definition}{Definition}[section]%
\newtheorem{proposition}[definition]{Proposition}
\newtheorem{lemma}[definition]{Lemma}

\newtheorem{corollary}[definition]{Corollary}
\newtheorem{remark}[definition]{Remark}
\newtheorem{example}[definition]{Example}
\newtheorem{algorithm}[definition]{Algorithm}

\makeatletter
\preto\itemize{%
  \if@inlabel
    \ifstarttheorem
      \mbox{}\par\nobreak\vskip\glueexpr-\parskip-\baselineskip+0.25em\relax\hrule\@height\z@
    \fi%
  \fi%
  \global\starttheoremfalse%
 \def\tempa{proof}%
 \ifx\tempa\mycurrenvir
    \ifstarttheorem
      \mbox{}\par\nobreak\vskip\glueexpr-\parskip-\baselineskip+0.25em\relax\hrule\@height\z@
    \fi%
 \fi%
 \global\starttheoremfalse%
}
\preto\enditemize{\global\starttheoremfalse}
\makeatother

\makeatletter
\preto\enumerate{%
  \if@inlabel
    \ifstarttheorem
      \mbox{}\par\nobreak\vskip\glueexpr-\parskip-\baselineskip+0.25em\relax\hrule\@height\z@
    \fi%
  \fi%
  \global\starttheoremfalse%
 \def\tempa{proof}%
 \ifx\tempa\mycurrenvir
    \ifstarttheorem
      \mbox{}\par\nobreak\vskip\glueexpr-\parskip-\baselineskip+0.25em\relax\hrule\@height\z@
    \fi%
 \fi%
 \global\starttheoremfalse%
}
\preto\endenumerate{\global\starttheoremfalse}
\makeatother

\newcommand{\ou}[3]{%
  \mathrel{%
    \vcenter{\offinterlineskip
      \ialign{##\cr$#1$\cr\noalign{\kern-#3}$#2$\cr}%
    }%
  }%
}
\newcommand*{\omu}[3]{\underset{#3}{\overset{#1}{#2}}}
\newcommand*{\T}{^{\top}}

\newcommand*{\isim}{\omu{\text{\tiny{ind.}}}{\sim}{}}

\newcommand*{\IR}{\mathbbm{R}}

\newcommand*{\Exp}{\operatorname{Exp}}
\newcommand*{\Poi}{\operatorname{Poi}}

\newcommand*{\U}{\operatorname{U}}

\newcommand*{\N}{\operatorname{N}}

\newcommand*{\I}{\mathbbm{1}}
\newcommand*{\rd}{\mathrm{d}}

\renewcommand*{\P}{\mathbbm{P}}
\newcommand*{\E}{\mathbbm{E}}

\newcommand*{\Var}{\operatorname{Var}}
\newcommand*{\Cov}{\operatorname{Cov}}

\newcommand*{\psii}{{\psi^{-1}}}

\newcommand*{\R}{\textsf{R}}

\hyphenation{Ar-chi-me-dean}

\begin{document}
%
%
%
%
%
%
%
\thispagestyle{plain}
\begin{center}
  \sffamily
  {\bfseries\LARGE A framework for measuring dependence between random
    vectors\par}
  \bigskip\smallskip
  {\Large Marius Hofert\footnote{Department of Statistics and Actuarial Science, University of
    Waterloo, 200 University Avenue West, Waterloo, ON, N2L
    3G1,
    \href{mailto:marius.hofert@uwaterloo.ca}{\nolinkurl{marius.hofert@uwaterloo.ca}}. The
    author would like to thank NSERC for financial support for this work through Discovery
    Grant RGPIN-5010-2015.},
    Wayne Oldford\footnote{Department of Statistics and Actuarial Science, University of
    Waterloo, 200 University Avenue West, Waterloo, ON, N2L
    3G1,
    \href{mailto:woldford@uwaterloo.ca}{\nolinkurl{rwoldford@uwaterloo.ca}}.},
    Avinash Prasad\footnote{Department of Statistics and Actuarial Science, University of
    Waterloo, 200 University Avenue West, Waterloo, ON, N2L
    3G1,
    \href{mailto:a2prasad@uwaterloo.ca}{\nolinkurl{a2prasad@uwaterloo.ca}}. The author would like to thank NSERC for financial support for this work through a PGS D Scholarship.},
    Mu Zhu\footnote{Department of Statistics and Actuarial Science, University of
    Waterloo, 200 University Avenue West, Waterloo, ON, N2L
    3G1,
    \href{mailto:mu.zhu@uwaterloo.ca}{\nolinkurl{mu.zhu@uwaterloo.ca}}. The
    author would like to thank NSERC for financial support for this work through Discovery Grant RGPIN-2016-03876.}
    \par\bigskip
    \myisodate\par}
\end{center}
\par\smallskip
\begin{abstract}
  A framework for quantifying dependence between random vectors is
  introduced. With the notion of a collapsing function, random vectors are
  summarized by single random variables, called collapsed random variables in
  the framework. Using this framework, a general graphical assessment of
  independence between groups of random variables for arbitrary collapsing
  functions is provided. Measures of association computed from the collapsed
  random variables are then used to measure the dependence between random
  vectors. To this end, suitable collapsing functions are
  presented. Furthermore, the notion of a collapsed distribution function and
  collapsed copula are introduced and investigated for certain collapsing
  functions. This investigation yields a multivariate extension of the Kendall
  distribution and its corresponding Kendall copula for which some properties
  and examples are provided. In addition, non-parametric estimators for the
  collapsed measures of dependence are provided along with their corresponding
  asymptotic properties. Finally, data applications to bioinformatics and
  finance are presented.
\end{abstract}
\minisec{Keywords}
Dependence between random vectors, hierarchical models, collapsing
functions, collapsed random variables, multivariate Kendall distribution and
Kendall copula, graphical test of independence.
\minisec{MSC2010}
62H99, 65C60, 62-09%

\section{Introduction}
While there are numerous well established methods to measure dependence between
random variables, the extension to random vectors (for example, for modeling groups of random variables)
poses a significant challenge. This challenge arises from the lack of a unique
axiomatic framework that outlines desirable properties a measure of dependence
between random vectors should exhibit. Moreover, there is no unique
extension of bivariate measures of association to arbitrary dimensions and the
multivariate measures of association available do not naturally capture
dependence between more than one random vector as is of interest in applications
such as bioinformatics, finance, insurance or risk management.

Proposed solutions to this problem are rather difficult to find in the
literature. A classical methodology to summarize (linear) dependence between
random vectors is the well known canonical correlation coefficient; see
\cite{hotelling1936}. A non-linear extension of canonical correlation has been
suggested through the use of kernel functions in \cite{bach2002} and
\cite{ghoraie2015}. A faster version of the kernel canonical correlation method
has been developed by adopting the idea of randomized kernels; see
\cite{lopez2013}. \cite{szekelyrizzobakirov2007} proposed a novel distance
covariance coefficient, defined as a weighted $\mathcal{L}^2$-norm between the
joint characteristic function and product of marginal characteristic functions
of the random vectors under consideration.
In the context of copula modeling,
\cite{grothe2014} recently derived versions of Spearman's rho and Kendall's tau
between random vectors and corresponding estimation
procedures. Our framework will generalize their approach. This generalization
will also allow us to derive a couple of interesting results as by-products of
our framework.

Note that there is neither an inherently correct nor a
canonical %
way of measuring dependence between random vectors. As a result, one
can think of multiple variations of quantifying such dependence. Approaches are
primarily motivated by the purpose, for example, detection or ranking of
dependencies and the salient features of the dataset under investigation. In
this paper, we subsume several such approaches under a general framework which
allows us to detect, quantify, visualize and check dependence between
random vectors.

The paper is organized as follows. In Section~\ref{sec:framework} we present the
said framework for measuring dependence between random vectors and utilize it to
develop a visual assessment of independence between random vectors. Section
~\ref{sec:joint:Kendall} develops the notion of a collapsed distribution
function and collapsed copula with explicit characterizations for some choices
of collapsing functions (to be detailed later). In Section~\ref{sec:fit:S} we
discuss non-parametric estimators for the dependence measures introduced in
Section~\ref{sec:framework} and their corresponding asymptotic
properties. Empirical examples from the realm of bioinformatics and finance are
covered in Section~\ref{sec:ex}. Section 6 concludes.

\section{The framework}\label{sec:framework}

For introducing a framework for measuring dependence between random vectors, it
suffices to consider the case of two, a $p$-dimensional random vector
$\bm{X}=(X_1,\dots,X_p)$ (with continuous marginal distribution functions
$F_{X_1},\dots,F_{X_p}$) and a $q$-dimensional random vector
$\bm{Y}=(Y_1,\dots,Y_q)$ (with continuous marginal distribution functions
$F_{Y_1},\dots,F_{Y_q}$), defined on some probability space with probability
measure $\P$. Our target is to measure dependence between $\bm{X}$ and $\bm{Y}$
with a measure of association
\begin{align*}
  \chi=\chi(\bm{X},\bm{Y})
\end{align*}
mapping to $[-1,1]$; note the missing ``the'' before ``dependence'',
depending on the context, various notions of dependence are possible. As
mentioned before, this is different from multivariate extensions of measures of
association which aim to summarize dependence of a single random vector
(say, just $\bm{X}$); see \cite{schmid2010} and the references therein for a
comprehensive treatment.

A natural first step is to establish properties $\chi$ should
satisfy. For bivariate measures of association, that is, measures of association
between two random variables $X$ and $Y$, such properties were listed in
\cite{renyi1959} and with minor revisions later in \cite{schweizerwolff1981}.
\cite{scarsini1984} introduced the notion of concordance measures by adding the
property that measures should respect a pointwise partial ordering on the set of
copulas also known as concordance ordering; see
\cite{embrechtsmcneilstraumann2002} for more on concordance (or
rank-correlation) measures and their motivation due to pitfalls of (linear)
correlation. Prominent examples are Kendall's tau and Spearman's rho. Another
type of bivariate measure of association, focusing on the (extremal) dependence
in the joint tails of a bivariate distribution, is the (lower or upper)
coefficient of tail dependence.

Note that, more recently, \cite{reshef2011} described an ideal measure of association in
the bivariate case as so-called equitable dependence measures. The notion of
an equitable dependence measure extends the invariance property of concordance
measures to include invariance under non-monotone marginal transforms. However, the
maximal information coefficient (MIC) introduced in \cite{reshef2011}, which
supposedly satisfies this equitability condition, is purely data-driven and
heuristic. As a result, the MIC measure does not naturally fit into our
probabilistic framework. Various versions of this equitability condition have
since been proposed including more mathematically formal definitions; see, for
example, \cite{kinney2014}. Hence, there is some consensus of an ``ideal''
bivariate measure of association but our problem demands generalizations of
these properties to vector-based measures of association, which is non-trivial.

\cite{grothe2014} recently approached this problem and listed properties of a
concordance measure which easily carry over from random variables $X,Y$ to random
vectors $\bm{X},\bm{Y}$. These include:
\begin{enumerate}
\item $\chi(\bm{X},\bm{Y})\in[-1,1]$;
\item $\chi(\bm{X},\bm{Y})$ is invariant to permutations of the components of the
  random vectors $\bm{X}$ and $\bm{Y}$;
\item independence of $\bm{X}$ and $\bm{Y}$ implies $\chi(\bm{X},\bm{Y})=0$.
\end{enumerate}
The translation of more non-trivial properties such as the invariance to (some
sort of) increasing transformations of $\bm{X}$ and $\bm{Y}$ and the concordance ordering to
the vector case is less transparent. One such generalization  of these properties proposed in \cite{grothe2014} is as follows.
\begin{enumerate}
\item \emph{Invariance property}. $\chi(\bm{X},\bm{Y})$ is invariant to
  increasing transformations of the components of the random vectors
  $\bm{X}$ and $\bm{Y}$;
\item \emph{Concordance ordering property}. Suppose one has two distribution
  functions with margins $F_{X_1},\dots,F_{X_p},F_{Y_1},\dots,F_{Y_q}$ and
  copulas $C_1$ and $C_2$, respectively, such that $C_1\preceq C_2$, that is,
  $C_1(\bm{u})\le C_2(\bm{u})$ for all $\bm{u}\in[0,1]^{p+q}$. Then
  $\chi_{C_1}(\bm{X},\bm{Y}) \leq \chi_{C_2}(\bm{X},\bm{Y})$, where
  $\chi_{C_1}$ and $\chi_{C_2}$ denote the measures of association expressed as
  functions of only the copula $C_1$ and $C_2$, respectively.
\end{enumerate}
The difficulty lies in hypothesizing invariance and concordance properties when
the marginal distributions $F_{X_1},\dots,F_{X_p},F_{Y_1},\dots,F_{Y_q}$ and the
copula $C_{\bm X}, C_{\bm Y}$ of $\bm X$, $\bm Y$, respectively, all can vary;
generalizing the concept of equitable dependence faces similar
difficulties.

\subsection{The collapsed random variables}
The framework we suggest consists of collapsing or summarizing the two random
vectors $\bm{X}$ and $\bm{Y}$ to single random variables $S(\bm{X})$ and
$S(\bm{Y})$ referred to as \emph{collapsed random variables}. The function $S$
maps random vectors to random variables and is referred to as a
\emph{collapsing} (or \emph{summary}) \emph{function}. Note that in the
  most general setup, we could use separate collapsing functions,
  $S_{\bm{X}}(\bm{X})$ and $S_{\bm{Y}}(\bm{Y})$. Different collapsing
  functions could be particularly useful when $\bm{X}$ and $\bm{Y}$ lie in
  different domains, that is, continuous vs discrete. However, in what follows, we will
  for the sake of simplicity restrict ourselves to using the same collapsing function $S$ to
  collapse $\bm{X}$ and $\bm{Y}$, and remain in the continuous domain to
  facilitate development of theoretical results. The (bivariate) distribution
function of $(S(\bm{X}),S(\bm{Y}))$ is called the \emph{collapsed distribution
  function} in our framework and its copula (if unique) is termed \emph{collapsed copula}; see
Section~\ref{sec:joint:Kendall} for more details.

Collapsing functions for different random vectors typically are of similar
functional form; see Section~\ref{sec:choose:S} for several examples. However,
they can differ, for example, due to the different dimensions of $\bm{X}$ and
$\bm{Y}$. Furthermore, as we will see in Section~\ref{sec:choose:S}, a
collapsing function for $\bm{X}$ does not necessarily have to be a $p$-variate
function, it can also be a $2p$-variate function (denoted as $S(\bm{X},\bm{X}')$,
where $\bm{X}'$ is an independent copies of $\bm{X}$), for example. As such, the
notion of a collapsing function is quite general, the only requirement being
that a random vector is mapped to a single random variable. For simplicity, we
denote all collapsing functions by $S$ and speak of the collapsing function as
being applied to $\bm{X}$ or to $\bm{Y}$. Options for $S$ are provided in
Section~\ref{sec:choose:S} and estimation is addressed in
Section~\ref{sec:fit:S}; concrete choices of $S$ are also provided and
discussed, for example, in the applications in Section~\ref{sec:ex}. We start by
presenting a general graphical assessment (can be converted to a statistical test if needed) of independence between groups of random variables.

\subsection{A graphical assessment of independence between random vectors}

As mentioned in the introduction, \cite{szekelyrizzobakirov2007}
suggest a formal test of independence between $\bm X$ and $\bm Y$ based on the
distance between the characteristic function of $(\bm X,\bm Y)$ and the product
of the characteristic functions of $\bm X$ and $\bm Y$. Furthermore, they
derived that this distance can equivalently be expressed as a function of the
correlation coefficient of Euclidean distances. Using this formal test as
motivation, \cite{wang2013} introduced a graphical test of independence between
random variables with Euclidean distance and rank transform. We further extend
this work to a graphical assessment of independence between groups of random
variables with various different transforms (that is, collapsing functions).

The method we suggest is based on the Grouping Lemma, see
\cite[Lemma~4.4.1]{resnick2014}, which states that measurable functions of
independent random variables are independent; see also
\cite[Theorem~2.1.6]{durrett2004}. This result can be conveniently used to
construct a test of independence between two or more groups of random variables
by testing independence of the collapsed random variables in our framework; note
that the corresponding hypothesis $\mathcal{H}_{0,c}$ tested (namely the
collapsed random variables to be independent) is only a subset of the hypothesis
$\mathcal{H}_0$ that all random variables are independent.

In principle, all known statistical tests of independence between two
or more random variables can be applied for testing
$\mathcal{H}_{0,c}$. What we suggest here is a graphical assessment
for $\mathcal{H}_{0,c}$. As is typically of interest in practice, see also our
example in Section~\ref{sec:ex:SP500}, we consider $G\ge 2$ random vectors here.
\begin{algorithm}[Graphical assessment of independence of groups of variables]\label{vis:test:indep}
	Let $\bm{X}_{i1},\dots,\bm{X}_{iG}$, $i\in\{1,\dots,n\}$, be a random sample
	from $G$ groups of random variables $\bm{X}_1,\dots,\bm{X}_G$ of dimensions $p_1,\dots,p_G$.
	To visually check independence of the groups of random variables
	$\bm{X}_1,\dots,\bm{X}_G$ based on the sample $\bm{X}_{i1},\dots,\bm{X}_{iG}$,
	$i\in\{1,\dots,n\}$, do:
	\begin{enumerate}
		\item For each group $g\in\{1,\dots,G\}$ of variables, compute the collapsed
		variables $S_{ig}=S(\bm{X}_{ig})$, $i\in\{1,\dots,k\}$, where $k=n$ for $p$-variate functions and $k=\binom{n}{2}$ for $2p$-variate functions.
		\item Compute the pseudo-observations
		\begin{align*}
		U_{k,ig}=\frac{R_{ig}}{k+1},\quad i\in\{1,\dots,k\},\ g\in\{1,\dots,G\},
		\end{align*}
		where, for each $g\in\{1,\dots,G\}$, $R_{ig}$ denotes the rank of $S_{ig}$
		among $S_{1g},\dots,S_{kg}$.
              \item Visualize all pairs of pseudo-observations
                $(U_{k,ig},U_{k,ih})$, $i\in\{1,\dots,k\}$,
                $g,h\in\{1,\dots,G\}:g<h$. This can be done in a scatter-plot
                matrix (for small to moderate $G$) or with a zenplot (for large
                $G$); see \cite{hofertoldford2017} and Section~\ref{sec:ex}
                for the latter. The less the visualized
                samples resemble realizations from $\U(0,1)^2$ the greater the
                evidence against $\mathcal{H}_{0,c}$ and thus $\mathcal{H}_0$.
	\end{enumerate}
\end{algorithm}

Note that we can turn this graphical assessment into a statistical test of
independence by adopting the line up test proposed in \cite{buja2009}. That is,
if in addition to the visualized samples, we also displayed groups of
independent realizations from $\U(0,1)^2$, then actual significance levels could
be determined.

An interesting question is whether our visual assessment of independence is
independent of the marginal distributions of any of the $d=p_1+\dots+p_G$
components of $(\bm X_1,\dots,\bm X_G)$. This certainly depends on the
collapsing function. In general, it does not matter for an assessment of
independence, but for better interpretability (of the visualized pseudo-observations) one could of course build
pseudo-observations of the given data from $(\bm X_1,\dots,\bm X_G)$ before
applying Algorithm~\ref{vis:test:indep}; note that in this case, one would apply
pseudo-observations at two levels, to the original variables and the collapsed
variables.

The distance correlation test developed in \cite{szekelyrizzobakirov2007} is a
notable statistical test of independence between random vectors. In particular,
the distance correlation (population) test statistic possesses the desirable
property that it is zero if and only if $\bm{X}$ and $\bm{Y}$ are independent,
thus making it particularly useful for testing independence. With an
appropriately chosen collapsing function (see Table~\ref{tab:collapsing} for
examples), our framework could yield a more powerful (graphical) test of
independence. The main advantages when compared with distance correlation are
that we are working with pseudo-observations %
and that there are many different types of departures from independence that can
be observed in comparison to a single numerical test statistic.

\subsection{Collapsed measures of association and dependence}
After $\bm{X}$ and $\bm{Y}$ have been collapsed to $S(\bm{X})$ and $S(\bm{Y})$,
respectively, the latter two random variables can be used to detect, quantify and
check dependence between $\bm{X}$ and $\bm{Y}$ using a classical and
well understood bivariate measure of association referred to as \emph{collapsed
  measure of association} in our framework. Although there are various choices
of collapsed measures of association (including, for example, tail dependence;
see also later), we will mainly focus on Pearson's correlation coefficient $\rho$ and
thus consider
\begin{align}
  \chi(\bm{X},\bm{Y})=\rho(S(\bm{X}),S(\bm{Y}))\label{eq:measure}
\end{align}
as a measure of association between $\bm{X}$ and $\bm{Y}$. The choice of
Pearson's correlation coefficient seems careless given the known deficiencies of
correlation for quantifying monotone dependence or concordance (as opposed to
``just'' linear dependence); see, for example, \cite{embrechtsmcneilstraumann2002}. However,
Spearman's rho and Kendall's tau both appear as a special case of \eqref{eq:measure} when choosing appropriate collapsing
functions $S$. This is obvious in the case of Spearman's rho $\rho_{\text{S}}$ which
is simply Pearson's correlation $\rho$ of the (univariate) probability integral
transformed random variables. In other words, if
$F_{S(\bm{X})}$ denotes the daistribution function of the collapsed random
variable $S(\bm{X})$, we can use
the collapsing functions $\tilde{S}(\bm{x})=F_{S(\bm{X})}(S(\bm{x}))$ and $\tilde{S}(\bm{y})=F_{S(\bm{Y})}(S(\bm{y}))$ to obtain
\begin{align*}
  \chi(\bm{X},\bm{Y})=\rho(\tilde{S}(\bm{X}),\tilde{S}(\bm{Y}))=\rho_{\text{S}}(S(\bm{X}),S(\bm{Y})).
\end{align*}
The following lemma shows that also Kendall's tau appears as a special case of
\eqref{eq:measure}; note that the collapsing function
is an example of a $2p$-variate collapsing function as mentioned before.
\begin{lemma}[Kendall's tau as a special case of
  \eqref{eq:measure}]\label{lem:repr:tau:cor}
  Let $\bm X$ and $\bm Y$ be continuously distributed random vectors and let
  $\bm X'$ and $\bm Y'$ be independent copies of $\bm X$ and $\bm Y$,
  respectively. Under our framework, the collapsing function
  $\tilde{S}(\bm{X},\bm{X}')=\I_{\{S(\bm{X})\le S(\bm{X}')\}}$ leads to
  \begin{align*}
    \chi(\bm{X},\bm{Y})=\rho(\tilde{S}(\bm{X},\bm{X}'),\tilde{S}(\bm{Y},\bm{Y}'))=\tau(S(\bm{X}),S(\bm{Y})). 
  \end{align*}
\end{lemma}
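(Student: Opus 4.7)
The plan is to unpack both sides of the claimed identity in terms of the Bernoulli indicators $A=\tilde S(\bm X,\bm X')=\I_{\{S(\bm X)\le S(\bm X')\}}$ and $B=\tilde S(\bm Y,\bm Y')=\I_{\{S(\bm Y)\le S(\bm Y')\}}$, and then observe that Pearson's correlation of $A,B$ equals the probabilistic formula for Kendall's tau of the continuous random variables $U:=S(\bm X)$ and $V:=S(\bm Y)$. I would introduce $U'=S(\bm X')$ and $V'=S(\bm Y')$, so that $(U,V)$ and $(U',V')$ are i.i.d.\ copies on the space in question, and $A,B$ become indicators based on them.

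First I would compute the marginal laws of $A$ and $B$. Since $\bm X$ is continuously distributed and $\bm X'$ is an independent copy, $U$ and $U'$ are i.i.d.\ with continuous marginal law; by symmetry $\P(U\le U')=1/2$ and $\P(U=U')=0$, so $A\sim\B(1,1/2)$, and similarly $B\sim\B(1,1/2)$. Hence $\E[A]=\E[B]=1/2$ and $\Var(A)=\Var(B)=1/4$. The only non-trivial quantity is the joint probability $\E[AB]=\P(U\le U',\,V\le V')$.

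Next I would express Kendall's tau for $(U,V)$ in the standard way, namely $\tau(U,V)=\P((U-U')(V-V')>0)-\P((U-U')(V-V')<0)$. Continuity of $S(\bm X)$ and $S(\bm Y)$ makes ties null events, so concordance and discordance probabilities sum to one, giving $\tau(U,V)=2\,\P((U-U')(V-V')>0)-1$. A symmetry argument (swap the roles of the primed and unprimed pair) then shows $\P((U-U')(V-V')>0)=2\,\P(U<U',\,V<V')=2\,\E[AB]$, so $\E[AB]=(1+\tau(U,V))/4$. Substituting into the definition of $\rho$ yields
\begin{align*}
\rho(A,B)=\frac{\E[AB]-\E[A]\E[B]}{\sqrt{\Var(A)\Var(B)}}=\frac{(1+\tau(U,V))/4-1/4}{1/4}=\tau(U,V),
\end{align*}
which is precisely the claim.

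The only real subtlety is bookkeeping on the joint law of $(\bm X,\bm X',\bm Y,\bm Y')$: one must ensure that $(\bm X',\bm Y')$ is an independent copy of the \emph{pair} $(\bm X,\bm Y)$, so that $(U',V')\deq(U,V)$ and the two pairs are independent; only then do the symmetry and continuity arguments above give the ``factor of two'' between $\E[AB]$ and the concordance probability. Nothing else is technical---the argument is essentially the observation that Kendall's tau of continuous variables is, by construction, the Pearson correlation of the concordance/discordance indicators.
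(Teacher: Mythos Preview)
Your argument is correct and follows essentially the same route as the paper: both compute the Bernoulli moments of the indicators, use the concordance formula $\tau=2\,\P((U-U')(V-V')>0)-1=4\,\P(U\le U',\,V\le V')-1$, and plug into $\rho(A,B)=(\E[AB]-1/4)/(1/4)$. Your explicit remark that $(\bm X',\bm Y')$ must be an independent copy of the \emph{pair} $(\bm X,\bm Y)$ is a welcome clarification; one small quibble is that ``$\bm X$ continuously distributed'' does not by itself force $U=S(\bm X)$ to be continuous, but the paper's own proof makes the same tacit assumption on $S(\bm X),S(\bm Y)$.
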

\begin{proof}
  Let $S_1,S_2$ be continuously distributed random variables and let $S_1',S_2'$ be
  independent copies of $S_1,S_2$, respectively. Then
  \begin{align*}
    \tau (S_1,S_2)&=\P((S_2-S_2')(S_1-S_1')>0) - \P((S_2-S_2')(S_1-S_1')<0)\\
         &=2\P((S_2-S_2')(S_1-S_1')>0) - 1=4 \P\bigl(S_1 \le S_1',\ S_2 \le
           S_2'\bigr) - 1\\
         &= \frac{\E(\I_{\{S_1\le S_1',\ S_2\le
           S_2'\}})-\frac{1}{2}\cdot\frac{1}{2}}{\sqrt{\frac{1}{2}\Bigl(1-\frac{1}{2}\Bigr)\frac{1}{2}\Bigl(1-\frac{1}{2}\Bigr)}}=\frac{\E(\I_{\{S_1\le
           S_1'\}}\I_{\{S_2\le S_2'\}})-\E(\I_{\{S_1\le
           S_1'\}})\E(\I_{\{S_2\le S_2'\}})}{\sqrt{\Var(\I_{\{S_1\le
           S_1'\}})\Var(\I_{\{S_2\le S_2'\}})}}\\
         &=\rho(\I_{\{S_1\le S_1'\}}, \I_{\{S_2\le S_2'\}}),
  \end{align*}
  that is, Kendall's tau equals the correlation coefficient of the indicators
  $\I_{\{S_1\le S_1'\}}$ and $\I_{\{S_2\le S_2'\}}$. With the collapsing function
  $\tilde{S}$ as claimed (and $S_1=S(\bm{X})$, $S_1'=S(\bm{X}')$, $S_2=S(\bm{Y})$, $S_2'=S(\bm{Y}')$), we thus obtain that
  \begin{align*}
    \chi(\bm{X},\bm{Y})=\rho(\tilde{S}(\bm{X},\bm{X}'),\tilde{S}(\bm{Y},\bm{Y}'))=
                         \rho(\I_{\{S(\bm X)\le S(\bm X')\}}, \I_{\{S(\bm Y)\le S(\bm
                         Y')\}})=\tau(S(\bm X), S(\bm Y)).
  \end{align*}
\end{proof}

Finally let us briefly address the notion of tail dependence; see, for example,
\cite[Section~2.1.10]{joe1997} or \cite[Section~5.4]{nelsen2006}. Although there
are multivariate notions of tail dependence, see
\cite[Chapter~10]{jaworskidurantehaerdlerychlik2010} for an overview, there is,
to the best of our knowledge, no notion of tail dependence between two random
vectors $\bm{X},\bm{Y}$. An intuitive choice in our framework is simply
\begin{align}
\chi(\bm{X},\bm{Y})=\lambda(S(\bm{X}),S(\bm{Y}))\label{chi:lam}
\end{align}
where $\lambda$ denotes the lower or upper coefficient of tail dependence as
implied by the collapsed copula. This concept can be extended to more than two
random vectors by considering matrices; see \cite{embrechtshofertwang2016}.

Let us now go back to the case where the collapsed measure of association is
Pearson's correlation coefficient.

\subsection{Choosing the collapsing function $S$}\label{sec:choose:S}
The choice of the collapsing function $S$ for measuring dependence between
random vectors is fairly open ended. We start by introducing various
options for $S$, summarized in Table~\ref{tab:collapsing}. As before,
$\bm{X}'$ is used to denote an independent copy of the random vector $\bm{X}$.
\begin{table}[htbp]
  \centering
  \begin{tabular}{ll}
    \toprule
    \multicolumn{1}{c}{Type of $S$}&\multicolumn{1}{c}{Collapsing function $S$}\\
    \midrule
    Weighted average&$S(\bm{X})=\bm{w}\T\bm{X}\ \text{for}\ \bm{w}=(w_1,\dots,w_p),\ \sum_{j=1}^{p}w_j=1$\\
    Maximum (or minimum)&$S(\bm{X})=\max\limits_{1\le j\le p}\{X_j\}$
                          (or $S(\bm{X})=\min\limits_{1\le j\le p}\{X_j\}$)\\
    Distance&$S(\bm{X},\bm{X}')=D(\bm{X},\bm{X}')$\\
    Kernel similarity&$S(\bm{X},\bm{X}')=K(\bm{X},\bm{X}')$\\
    Multivariate rank&$S(\bm{X},\bm{X}')=\I_{\{\bm{X}\le \bm{X}'\}}$\\
    Probability integral transform&$S(\bm{X})=F_{\bm{X}}(\bm{X})$\\
    \bottomrule
  \end{tabular}
  \caption{Examples of collapsing functions $S$ of a random vector $\bm{X}$ (with independent copy $\bm{X}'$); note that the inequality $\bm{X}\leq \bm{X}'$ in the multivariate rank collapsing function is understood componentwise.}
  \label{tab:collapsing}
\end{table}
Note that if $F_{X_1},\dots,F_{X_p}$ are continuous, the multivariate rank transform satisfies
\begin{align*}
  S((X_1,\dots,X_p),\ (X_1',\dots,X_p'))&=\I_{\{X_1\le X_1',\dots,X_p\le
                                          X_p'\}}\\
  &=\I_{\{F_{X_1}(X_1)\le F_{X_1}(X_1'),\dots,F_{X_p}(X_p)\le F_{X_p}(X_p')\}}\\
  &=S\bigl((F_{X_1}(X_1),\dots,F_{X_p}(X_p)),\ (F_{X_1}(X_1'),\dots,F_{X_p}(X_p'))\bigr)
\end{align*}
and thus does not depend on the specific marginal distributions $F_{X_1},\dots,F_{X_p}$
involved; furthermore, $F_{X_j}(X_j)\sim\U(0,1)$, $j\in\{1,\dots,p\}$. Similarly for
the probability integral transform (PIT), by Sklar's Theorem, see \cite{sklar1959},
\begin{align*}
  S(X_1,\dots,X_p)&=F_{(X_1,\dots,X_p)}(X_1,\dots,X_p)=C_{\bm X}(F_{X_1}(X_1),\dots,F_{X_p}(X_p))\\
  &=S(F_{X_1}(X_1),\dots,F_{X_p}(X_p)),
\end{align*}
Therefore, the PIT as collapsing function also does not depend on the specific
marginal distributions involved.

For collapsing functions which are not invariant under the marginal
distributions (such as weighted average, maximum, minimum, distance and kernel
similarity), one can easily introduce such property by replacing
$(X_1,\dots,X_p)$ by $(F_{X_1}(X_1),\dots,F_{X_p}(X_p))$ in $S$. This is often
useful for getting a (rank-based) picture of dependence independently of the
margins which can be of interest for visualization or estimation
purposes (empirically, this means computing pseudo-observations).

The following sections consider each of the collapsing functions listed in Table~\ref{tab:collapsing} in more detail.

\subsubsection{The weighted average collapsing function}
The weighted average function is a classical choice of collapsing function.
Here are a few ways how the weights $w_1,\dots,w_p$ can be chosen:
\begin{enumerate}
\item \textbf{Equal weights}. For equal weights $w_j=\frac{1}{p}$,
  $j\in\{1,\dots,p\}$, we obtain the simple average as collapsing function. %
\item \textbf{Application-specific weights}. One can choose weights $w_j$
  which are tailor-made for a specific application in mind. See
  Section~\ref{sec:ex} for examples.
\item \textbf{Dimension reduction weights}. Typically by adopting any dimension
  reduction technique, one could use some normalized version of the measure
  (usually singular values of a specific matrix depending on the technique) used
  to order the dimensions as weights in our framework.
\item \textbf{Optimal weights}. One can choose optimal weights with respect
  to some objective function. For example, analogously to the notion of canonical correlation, one could empirically choose the weights for every pair of random vectors such that the resulting measure of association $\chi$ is maximized. For example, if $\bm{X},\bm{Y}$ are elliptical, one could consider
  \begin{align*}
    \chi(\bm{X},\bm{Y})=\sup_{\bm{w}_1\in\IR^p \, \bm{w}_2 \in \IR^q} \rho(\bm{w}_1\T\bm{X},\bm{w}_2\T\bm{Y})
   \end{align*}
\item \textbf{Extreme weights}. One can consider the \emph{$m$-largest} (or \emph{$m$-smallest})
  \emph{weighted average}, that is, the average over the $m$ largest (or $m$ smallest) order statistics
  per group of random variables. This could be of interest in the
  context of financial risk management, where one needs to keep track of the $m$ largest (or $m$ smallest)
  losses in two or more portfolios or asset classes.
\end{enumerate}

\subsubsection{The maximum collapsing function}
The componentwise maximum (or minimum) is a special case of the aforementioned
extreme weighted case, with 1-largest (or 1-smallest) weighted average as collapsing function, that is,
\begin{align*}
  S(\bm{X})=\max\{X_1,\dots,X_p\}\quad\text{(or $S(\bm{X})=\min\{X_1,\dots,X_p\}$)}.
\end{align*}
This requires all dimensions of the random vector $\bm{X}$ to have a comparable
interpretation and makes sense, for example, for quantifying dependence between
market return data grouped into sectors. In this case we would be measuring dependence between different market sectors through the best (or worst) performer in each sector.

\subsubsection{The pairwise distance collapsing function}
The population version of the pairwise distance collapsing function requires
invoking an independent copy of the random vector. For the sample version based
on sample size $n$, this implies that distances are computed between the
$\binom{n}{2}$ distinct pairs of the $n$ samples. One can choose virtually any type of distance $D$.  For example, some
of the standard distance functions we experimented with in Section~\ref{sec:ex}
are Euclidean, Manhattan, Canberra, and Minkowski. Some distance functions for
non-continuous measurements include cosine distance (suitable for text data),
hamming distance (datasets in information theory), and Jaccard distance.

Ideally one should have a data- or application-specific reason to choose distances other than Euclidean distance (but numerical experiments have shown that it can
sometimes be advantageous to choose the Canberra distances to avoid
issues due to large distances).

\subsubsection{The pairwise kernel collapsing function}
Similar to the distance transform, the kernel collapsing function $K$ results in
$\binom{n}{2}$ samples in the transformed space. As for the choice of $K$, one
can choose any kernel function, some of which are listed in Table~\ref{tab:kernel}.
\begin{table}[htbp]
  \centering
  \begin{tabular}{ll}
    \toprule
    \multicolumn{1}{c}{Type of $K$}&\multicolumn{1}{c}{Kernel function $K(\cdot\,;\cdot):\IR^p\times\IR^p\rightarrow\IR$}\\
    \midrule
    Linear (trivial)&$K(\bm{x}_i,\bm{x}_k)=\bm{x}_i\T\bm{x}_k$\\
    Polynomial (of order $d$)&$K(\bm{x}_i,\bm{x}_k)=(1+\bm{x}_i\T\bm{x}_k)^d$\\
    Gaussian &$K(\bm{x}_i,\bm{x}_k)=\exp\bigl(-(\|\bm{x}_i-\bm{x}_k\|_2^2/(2\sigma^2))\bigr)$\\
    von Mises &$K(\bm{x}_i,\bm{x}_k)= \prod_{t=1}^{p} \exp(\kappa_t\cos(x_{it}-x_{kt})) $\\
    \bottomrule
  \end{tabular}
  \caption{Examples of kernel functions.}
  \label{tab:kernel}
\end{table}
By default one can choose the Gaussian kernel unless the
peculiarities of a dataset or application context suggest a potential
alternative like the von Mises kernel which was used in Section~\ref{sec:ex:prot} for the protein dataset.

\subsubsection{The multivariate rank collapsing function} \label{subsec:multirank}
Utilizing the multivariate rank collapsing function $S(\bm{X},\bm{X}')=\I_{\{\bm{X}\le\bm{X}'\}}$ to summarize multidimensional
random vectors to a single dimension yields a rank-based measure of association
$\chi$; as usual, the inequality $\bm{X}\le\bm{X}'$, is understood
componentwise. The resulting association measure was first introduced in
\cite{grothe2014} as one possible multivariate extension of Kendall's tau. As is
evident, this particular multivariate Kendall's tau naturally fits into our
framework with the aforementioned choice of collapsing function. Rank-based measures of association possess certain attractive properties,
including the invariance property and the concordance ordering
property as outlined at the onset of this section. Furthermore, as argued in
\cite{grothe2014}, this particular dependence measure $\chi$ can effectively
detect negative association between random vectors.

\subsubsection{The probability integral transform collapsing function}
The probability integral transform collapsing function bears some resemblance to
the multivariate extension of Spearman's rho discussed in
\cite{grothe2014}. However, the definition according to our framework of the
population version of $\chi$ and hence the estimation procedure differs. The
PIT-transformed collapsed random variable $F_{\bm{X}}(\bm{X})$ has distribution
function $K_{\bm{X}}(t)=\P(F_{\bm{X}}(\bm{X})\le t)$, $t\in[0,1]$, known as
\emph{Kendall distribution}. Since
$F_{\bm{X}}(\bm{X})=C_{\bm{X}}(F_{X_1}(X_1),\dots,F_{X_p}(X_p))=C_{\bm{X}}(U_1,\dots,U_p)$
for $\bm{U}=(U_1,\dots,U_p)\sim C_{\bm{X}}$, $K_{\bm{X}}$ only depends on the
copula $C_{\bm{X}}$ of $\bm{X}$ and can thus be viewed as a summary of the
dependence among the components of $\bm X$ in the form of a $p$-variate
function. Unfortunately, $K_{\bm{X}}$ itself is rarely analytically tractable
for dimensions of $\bm{X}$ larger than two, an exception being Archimedean
copulas $C_{\bm{X}}$ with generators $\psi$ for which a calculation based on the
stochastic representation and a connection with the Poisson distribution
function can be used conveniently to show that
\begin{align}
 K_{\bm{X}}(t)=\sum_{k=0}^{p-1}\frac{\psi^{(k)}(\psii(t))}{k!}(-\psii(t))^k,\quad t\in[0,1];\label{eq:univ:K:AC}
\end{align}
see the proof of Proposition~\ref{prop:joint:K:AC} below for this approach or
\cite{barbegenestghoudiremillard1996} for the first appearance of this
result. Working with the multivariate PIT collapsing function and
corresponding Kendall distribution naturally motivates a multivariate
extension of the latter. Various properties and examples associated with such joint Kendall distributions, viewed as an example of a collapsed distribution function, are presented in \ref{sec:PIT:collapsed}.

\section{Collapsed distribution functions and copulas}\label{sec:joint:Kendall}

While we can always compute and visualize realizations from the empirical
collapsed copula (see Algorithm~\ref{vis:test:indep}), deriving an explicit
characterization of the collapsed distribution function or copula in terms of
the joint distribution of $\bm{Z}=(\bm{X},\bm{Y})$ is challenging. To this end,
we present some results for the maximum and PIT collapsing functions. Most notably,
characterizing the collapsed distribution function of the PIT collapsed random
variables, yields a multivariate extension of the Kendall distribution.

\subsection{Maximum collapsing function}
\begin{proposition}[The collapsed distribution and its copula for the maximum
	collapsing function]\label{prop:S:max:cop}
	Let $X_1,\dots,X_p,Y_1,\dots,Y_q$ be continuously distributed random variables
	with distribution functions $F_{X_1},\dots,F_{X_p},F_{Y_1},\dots,F_{Y_q}$,
	respectively.  Furthermore, let $F_{\bm{X},\bm{Y}}$ denote the
	distribution function of $(\bm{X},\bm{Y})$ and consider the maximum collapsing
	function $S$. Then the collapsed distribution function
	$F_{(S(\bm{X}),S(\bm{Y}))}$ is
	$F_{(S(\bm{X}),S(\bm{Y}))}(x,y)=F_{\bm{X},\bm{Y}}(x,\dots,x,y,\dots,y)$ with
	corresponding collapsed copula
	\begin{align*}
	C_{S(\bm{X}),S(\bm{Y})}(u,v)=F_{\bm{X},\bm{Y}}(F_{S(\bm{X})}^-(u),\dots,F_{S(\bm{X})}^-(u),F_{S(\bm{Y})}^-(v),\dots,F_{S(\bm{Y})}^-(v)),\quad u,v\in[0,1],
	\end{align*}
	where $F_{S(\bm{X})}^-(u)$ and $F_{S(\bm{Y})}^-(v)$ denote the quantile functions of the distribution
	functions $F_{S(\bm{X})}(x)=F_{\bm{X},\bm{Y}}(x,\dots,x,\infty,\dots,\infty)$
	and $F_{S(\bm{Y})}(y)=F_{\bm{X},\bm{Y}}(\infty,\dots,\infty,y,\dots,y)$, respectively.
\end{proposition}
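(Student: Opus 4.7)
The plan is to exploit the elementary event identity
\begin{align*}
\{S(\bm X) \le x,\ S(\bm Y)\le y\} = \{X_1\le x,\dots,X_p\le x,\ Y_1\le y,\dots,Y_q\le y\},
\end{align*}
which converts statements about the collapsed maxima directly into statements about the underlying joint distribution of $(\bm X,\bm Y)$. Taking probabilities gives the joint collapsed distribution formula $F_{(S(\bm X),S(\bm Y))}(x,y)=F_{\bm X,\bm Y}(x,\dots,x,y,\dots,y)$ on the nose. The two marginal formulas then follow by sending one of the arguments to $\infty$ and applying continuity of probability to the resulting increasing union of events.

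Second, to obtain the copula expression I would invoke Sklar's theorem applied to the bivariate law of $(S(\bm X),S(\bm Y))$. This requires checking that the collapsed marginals $F_{S(\bm X)}$ and $F_{S(\bm Y)}$ are continuous, so that the representing copula is unique on all of $[0,1]^2$ and given by $C_{S(\bm X),S(\bm Y)}(u,v)=F_{(S(\bm X),S(\bm Y))}(F_{S(\bm X)}^-(u),F_{S(\bm Y)}^-(v))$. Continuity of $F_{S(\bm X)}$ is the only place the hypothesis on the $F_{X_j}$ really enters: any atom of $S(\bm X)$ at $x_0$ would entail $\P(X_j=x_0)>0$ for some $j$, by the union bound $\P(\max_j X_j=x_0)\le\sum_{j=1}^{p}\P(X_j=x_0)$, contradicting the assumed continuity of $F_{X_j}$. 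The same argument handles $F_{S(\bm Y)}$. Substituting the joint formula from the first step into Sklar's representation then yields exactly the claimed expression.

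The proof is almost entirely bookkeeping; the one genuine technical point is the continuity of the collapsed marginals, which is needed both to ensure Sklar uniqueness and to make sense of the quantile functions $F_{S(\bm X)}^-,F_{S(\bm Y)}^-$ plugged into $F_{\bm X,\bm Y}$. Since that point reduces to a one-line union-bound argument, I do not anticipate any serious obstacle.
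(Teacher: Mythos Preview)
Your proposal is correct and follows essentially the same route as the paper: the event identity $\{\max_j X_j\le x,\ \max_k Y_k\le y\}=\{X_1\le x,\dots,X_p\le x,\ Y_1\le y,\dots,Y_q\le y\}$ gives the joint formula, the marginals drop out, and Sklar's theorem produces the copula expression. The only difference is that you explicitly verify continuity of $F_{S(\bm X)}$ and $F_{S(\bm Y)}$ via the union bound $\P(\max_j X_j=x_0)\le\sum_j\P(X_j=x_0)$, whereas the paper simply invokes Sklar without comment; your added sentence is a worthwhile clarification but does not change the argument.
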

\begin{proof}
	Since
	$F_{(S(\bm{X}),S(\bm{Y}))}(x,y)=\P(\max\{X_1,\dots,X_p\}\le x,\
	\max\{Y_1,\dots,Y_q\}\le y)=\P(X_1\le x,\dots, X_p\le x,Y_1\le y,\dots,Y_q\le
	y)=F_{\bm{X},\bm{Y}}(x,\dots,x,y,\dots,y)$ with margins
	$F_{S(\bm{X})}(x)=F_{\bm{X},\bm{Y}}(x,\dots,x,\infty,\dots,\infty)$ and
	$F_{S(\bm{Y})}(y)=F_{\bm{X},\bm{Y}}(\infty,\dots,\infty,y,\dots,y)$, Sklar's
	Theorem implies that the collapsed copula $C_{S(\bm{X}),S(\bm{Y})}$ is given
	as stated.
\end{proof}

Using this setup, we can derive some properties of the collapsed copula to
demonstrate that the maximum collapsing function can intuitively capture
dependence between random vectors. To this end, we call $\bm{X}$ and
  $\bm{Y}$ \emph{comonotone} (\emph{countermonotone}) if
  $\bm{X}=(F_{X_1}^-(U),\dots,F_{X_p}^-(U))$ and
  $\bm{Y}=(F_{Y_1}^-(U),\dots,F_{Y_q}^-(U))$ ($\bm{X}=(F_{X_1}^-(U),\dots,F_{X_p}^-(U))$ and
  $\bm{Y}=(F_{Y_1}^-(1-U),\dots,F_{Y_q}^-(1-U))$) for $U\sim\U(0,1)$; see also Proposition~\ref{prop:prop:Kendall:cop} where this concept is used.
\begin{proposition}[Basic properties of maximum collapsed copulas]
  Let $\bm{X}\sim F_{\bm{X}}$ be a $p$-dimensional and $\bm{Y}\sim F_{\bm{Y}}$
  be a $q$-dimensional random vector, both with continuously distributed
  margins (denoted as before).
  \begin{enumerate}
  \item If $\bm{X}$ and $\bm{Y}$ are independent, then $C_{S(\bm{X}),S(\bm{Y})}(u,v)=uv$ for $u,v \in [0,1]$.
  \item If $\bm{X}$ and $\bm{Y}$ are comonotone and each have equal margins, then
    $C_{S(\bm{X}),S(\bm{Y})}(u,v)=\min\{u,v\}$
    and thus the collapsed copula in this case is the upper Fr\'echet--Hoeffding bound.
  \item If $\bm{X}$ and $\bm{Y}$ are countermonotone and each have equal margins, then
    $C_{S(\bm{X}),S(\bm{Y})}(u,v)=\max\{u+v-1, 0\}$ and thus the
    collapsed copula in this case is the lower Fr\'echet--Hoeffding
    bound.
  \end{enumerate}
  \begin{proof}
    \begin{enumerate}
    \item Let $F_{\bm{X},\bm{Y}}$ denote the distribution function of $(\bm{X},\bm{Y})$.
      Independence and Proposition~\ref{prop:S:max:cop} imply that
      \begin{align*}
        C_{S(\bm{X}),S(\bm{Y})}(u,v)&=F_{\bm{X},\bm{Y}}(F_{S(\bm{X})}^-(u),\dots,F_{S(\bm{X})}^-(u),F_{S(\bm{Y})}^-(v),\dots,F_{S(\bm{Y})}^-(v))\\
                                    &=F_{\bm{X}}(F_{S(\bm{X})}^-(u),\dots,F_{S(\bm{X})}^-(u))\,F_{\bm{Y}}(F_{S(\bm{Y})}^-(v),\dots,F_{S(\bm{X})}^-(v))\\
                                    &=F_{\bm{X},\bm{Y}}(F_{S(\bm{X})}^-(u),\dots,F_{S(\bm{X})}^-(u),\infty,\dots,\infty)\\
                                    &\phantom{{}=}\cdot F_{\bm{X},\bm{Y}}(\infty,\dots,\infty,F_{S(\bm{Y})}^-(v),\dots,F_{S(\bm{X})}^-(v))\\
                                    &=F_{S(\bm{X})}(F_{S(\bm{X})}^-(u))F_{S(\bm{Y})}(F_{S(\bm{Y})}^-(v))=uv.
      \end{align*}
    \item\label{prop:max:coll:cop:2} $F_{S(\bm X)}(x)=\min_{1\le j\le p}\{F_{X_j}(x)\}=F_{X}(x)$ so that $F_{S(\bm X)}^-(u)=\max_{1\le j\le p}\{F_{X_j}^-(u)\}=F_X^-(u)$ (similarly, $F_{S(\bm Y)}^-(v)=F_Y^-(v)$). Therefore,
      \begin{align*}
        &\phantom{{}={}}C_{S(\bm{X}),S(\bm{Y})}(u,v)=F_{\bm{X},\bm{Y}}(F_{S(\bm{X})}^-(u),\dots,F_{S(\bm{X})}^-(u),F_{S(\bm{Y})}^-(v),\dots,F_{S(\bm{Y})}^-(v))\\
        &=F_{\bm{X},\bm{Y}}(F_{X}^-(u),\dots,F_{X}^-(u),F_{Y}^-(v),\dots,F_{Y}^-(v))\\
        &=\P(F_{X}^{-}(U)\leq F_{X}^{-}(u),\dots,F_{X}^{-}(U)\leq F_{X}^{-}(u),\ F_{Y}^{-}(U)\leq F_{Y}^{-}(v),\dots,F_{Y}^{-}(U)\leq F_{Y}^{-}(v))\\
        &=\P(U \leq u,\ U\leq v)=\P(U\leq\min\{u,v\})=\min\{u,v\}.
      \end{align*}
    \item Similarly as in Part~\ref{prop:max:coll:cop:2}. \qedhere
    \end{enumerate}
  \end{proof}
\end{proposition}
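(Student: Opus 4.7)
The plan is to apply Proposition~\ref{prop:S:max:cop} in each of the three cases and to simplify the resulting expression using the structural assumption (independence, comonotonicity, countermonotonicity). The workhorse identity throughout is
\begin{align*}
C_{S(\bm{X}),S(\bm{Y})}(u,v)=F_{\bm{X},\bm{Y}}(F_{S(\bm{X})}^-(u),\dots,F_{S(\bm{X})}^-(u),F_{S(\bm{Y})}^-(v),\dots,F_{S(\bm{Y})}^-(v)),
\end{align*}
so the work reduces to (a) identifying the marginal distributions $F_{S(\bm X)}$, $F_{S(\bm Y)}$ of the maxima under the respective assumptions, and (b) simplifying the joint evaluation on the right-hand side.

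For Part~1, I would use that under independence of $\bm X$ and $\bm Y$, the joint distribution factors: $F_{\bm X,\bm Y}(\bm x,\bm y)=F_{\bm X}(\bm x)F_{\bm Y}(\bm y)$. Applying the identity above, each factor becomes $F_{S(\bm X)}(F_{S(\bm X)}^-(u))=u$ and $F_{S(\bm Y)}(F_{S(\bm Y)}^-(v))=v$ by continuity of the underlying margins, yielding $uv$.

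For Part~2, the key simplification is that when $\bm X$ is comonotone with equal margins $F_{X_1}=\dots=F_{X_p}=:F_X$, one has $X_1=\dots=X_p=F_X^-(U)$ almost surely for $U\sim\U(0,1)$, so $S(\bm X)=F_X^-(U)$ and hence $F_{S(\bm X)}=F_X$ and $F_{S(\bm X)}^-=F_X^-$ (and analogously for $\bm Y$ with $F_Y$). Since $\bm X$ and $\bm Y$ are jointly comonotone, they share the \emph{same} $U$, so the joint evaluation collapses to $\P(U\le u,\,U\le v)=\min\{u,v\}$. Part~3 proceeds in exactly the same way, except that countermonotonicity replaces $U$ by $1-U$ on the $\bm Y$ side, giving $\P(U\le u,\,1-U\le v)=\P(1-v\le U\le u)=\max\{u+v-1,0\}$.

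No part of this looks like it will fight back. The only mild subtlety is making sure that in Parts~2 and~3 the ``equal margins'' hypothesis is used correctly to reduce $F_{S(\bm X)}^-$ to $F_X^-$: without it, $F_{S(\bm X)}^-(u)=\max_j\{F_{X_j}^-(u)\}$ cannot be pulled inside the componentwise comparison uniformly. Since this reduction is already carried out in the proof sketch of part~(\ref{prop:max:coll:cop:2}) in the displayed work, I would simply invoke it and state part~3 as ``similarly, using that countermonotonicity gives $\bm Y=(F_Y^-(1-U),\dots,F_Y^-(1-U))$.''
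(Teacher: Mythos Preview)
Your proposal is correct and follows essentially the same route as the paper: invoke Proposition~\ref{prop:S:max:cop} to express $C_{S(\bm X),S(\bm Y)}$ in terms of $F_{\bm X,\bm Y}$ evaluated at the quantiles of the maxima, then factor (Part~1), or use the equal-margin reduction $F_{S(\bm X)}^-=F_X^-$ together with the shared uniform $U$ (Parts~2 and~3). The paper likewise dispatches Part~3 with ``similarly as in Part~2''; your explicit computation $\P(U\le u,\,1-U\le v)=\max\{u+v-1,0\}$ is a welcome addition but not a different argument.
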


Deriving the collapsed copula in special cases can provide a concrete
understanding of how and to what extent the maximum collapsing function
summarizes dependence between $\bm{X}$ and $\bm{Y}$ which we pre-specify. Here
is an example.

\begin{example}[Meta nested Archimedean copula model and the maximum collapsing function]
  Let
  $\bm{Z}=(\bm{X},\bm{Y})\sim
  F_{\bm{X},\bm{Y}}(\bm{x},\bm{y})=C_0\bigl(C_1(F_X(\bm{x})),C_2(F_Y(\bm{y}))\bigr)$,
  where $X_j\sim F_X$, $j\in\{1,\dots,p\}$, and $Y_k\sim F_Y$, $k\in\{1,\dots,q\}$.
  Furthermore, interpret $F_X(\bm{x})$ and $F_Y(\bm{y})$ as
  $(F_X(x_1),\dots,$ $F_X(x_p))$ and $(F_Y(y_1),\dots,F_Y(y_q))$,
  respectively. Let $C_0,C_1,C_2$ be Archimedean copulas with
  generators $\psi_0,\psi_1,\psi_2$, respectively, satisfying the sufficient
  nesting condition; see \cite{mcneil2008} or \cite{hofert2012b} for more
  details. Consider the maximum collapsing function $S$. Since
  \begin{align*}
    S(\bm{X})&\sim F_{S(\bm{X})}(\bm{x})=C_1(F_X(\bm{x}))=\psi_1\Bigl(\,\sum_{j=1}^p\psi_1^{-1}(F_X(x_j))\Bigr),\\
    S(\bm{Y})&\sim F_{S(\bm{Y})}(\bm{y})=C_2(F_Y(\bm{y}))=\psi_2\Bigl(\,\sum_{k=1}^q\psi_2^{-1}(F_Y(y_k))\Bigr),
  \end{align*}
  with diagonals
  $F_{S(\bm{X})}(x,\dots,x)=\psi_1\bigl(p\psi_1^{-1}(F_X(x))\bigr)$ and
  $F_{S(\bm{Y})}(y,\dots,y)=\psi_2\bigl(q\psi_2^{-1}(F_Y(y))\bigr)$, the
  corresponding quantile functions are
  \begin{align*}
    F_{S(\bm{X})}^-(u)=F_X^-\bigl(\psi_1(\psi_1^{-1}(u)/p)\bigr),\quad F_{S(\bm{Y})}^-(v)=F_Y^-\bigl(\psi_2(\psi_2^{-1}(v)/q)\bigr),
  \end{align*}
  respectively. Proposition~\ref{prop:S:max:cop} implies that the collapsed
  copula equals
  \begin{align*}
    C_{S(\bm{X}),S(\bm{Y})}(u,v)&=F_{\bm{X},\bm{Y}}(F_{S(\bm{X})}^-(u),\dots,F_{S(\bm{X})}^-(u),F_{S(\bm{Y})}^-(v),\dots,F_{S(\bm{Y})}^-(v)),\\
                                &=C_0\Bigl(C_1\Bigl(F_X\bigl(F_X^-\bigl(\psi_1(\psi_1^{-1}(u)/p)\bigr)\bigr),\dots,F_X\bigl(F_X^-\bigl(\psi_1(\psi_1^{-1}(u)/p)\bigr)\bigr)\Bigr),\\
                                &\phantom{{}=C_0\Bigl(}C_2\Bigl(F_Y\bigl(F_Y^-\bigl(\psi_2(\psi_2^{-1}(v)/q)\bigr)\bigr),\dots,F_Y\bigl(F_Y^-\bigl(\psi_2(\psi_2^{-1}(v)/q)\bigr)\bigr)\Bigr)\Bigr)\\
                                &=C_0\bigl(C_1\bigl(\psi_1(\psi_1^{-1}(u)/p),\dots,\psi_1(\psi_1^{-1}(u)/p)\bigr),\\
                                &\phantom{{}=C_2(}C_2\bigl(\psi_2(\psi_2^{-1}(v)/q),\dots,\psi_2(\psi_2^{-1}(v)/q)\bigr)\bigr)\\
                                &=C_0(u,v),\quad u,v\in[0,1].
  \end{align*}
  This is an intuitive result, as any two random variables $(X_j,Y_k)$ have
  marginal copula $C_0$ under this model and so do the group maxima (as long as
  the marginal distributions are equal per group). This implies that any
  collapsed measure of concordance is precisely the one corresponding to the
  copula $C_0$ in this case.
\end{example}

\subsection{PIT collapsing function}\label{sec:PIT:collapsed}
For the PIT collapsing function, the collapsed distribution function and copula
have notable terminology and notation following from the copula literature. In
that spirit, we will present them as extensions of the Kendall distribution
presented previously and as such adopt the same notation.
\subsubsection{Definition}
A natural extension of the univariate Kendall distribution
$K_{\bm{X}}(t)=\P(F_{\bm{X}}(\bm{X})\le t)$, $t\in[0,1]$, to the multivariate
case is the \emph{multivariate} (or \emph{joint}) \emph{Kendall distribution}, given by
\begin{align*}
  K_{\bm{X},\bm{Y}}(t_1,t_2)=\P(F_{\bm{X}}(\bm{X})\le t_1,\ F_{\bm{Y}}(\bm{Y})\le t_2)=\P(C_{\bm{X}}(\bm{U})\le t_1,\
  C_{\bm{Y}}(\bm{V})\le t_2),
\end{align*}
for all $t_1,t_2\in[0,1]$, where $\bm{U}\sim C_{\bm{X}}$ and
$\bm{V}\sim C_{\bm{Y}}$ for the copulas $C_{\bm{X}}$ and $C_{\bm{Y}}$ of
$\bm{X}$ and $\bm{Y}$, respectively; it is straightforward to define
higher-dimensional Kendall distributions. By definition,
multivariate Kendall distributions have univariate Kendall distributions as
margins. The copula of
$K_{\bm{X},\bm{Y}}(t_1,t_2)$, if uniquely determined, follows from Sklar's
Theorem via
\begin{align}
  C_K(u_1,u_2)=K_{\bm{X},\bm{Y}}(K_{\bm{X}}^-(u_1),K_{\bm{Y}}^-(u_2)),\quad u_1,u_2\in[0,1], \label{eq:Kendall_cop}
\end{align}
where $K_{\bm{X}}^-$ and $K_{\bm{Y}}^-$ denote the quantile functions of the
marginal Kendall distributions $K_{\bm{X}}$ and $K_{\bm{Y}}$, respectively. We
refer to $C_K$ as \emph{Kendall copula}. Note that Kendall copulas have previously
appeared in \cite{brechmann2014} as hierarchical Kendall copulas without
explicitly investigating the notion of joint Kendall distributions; the
latter naturally appear in our framework for measuring dependence between
random vectors.

\subsubsection{Properties}
We now briefly discuss some basic properties of multivariate Kendall
distributions and Kendall copulas. As before, we focus on the bivariate case.

As we have seen in \eqref{eq:univ:K:AC}, there is an analytical formula for
(univariate) Kendall distributions for Archimedean copulas. As we will now see,
there is also an explicit form for multivariate Kendall distributions in this
case.
\begin{proposition}[Multivariate Kendall distributions in the Archimedean case]\label{prop:joint:K:AC}
  Let $(\bm{X},\bm{Y})$ be a $(p+q)$-dimensional random vector with Archimedean
  copula $C$ with completely monotone generator $\psi$. Then, for all
  $t_1,t_2\in[0,1]$,
  \begin{align}
   \!\!\! K_{\bm{X},\bm{Y}}(t_1,t_2)=\!\!\!\!\sum_{m=0}^{(p-1)(q-1)}\!\!\!\Bigl(\,\sum_{n=0}^m\frac{(\psii(t_1))^n}{n!}\frac{(\psii(t_2))^{m-n}}{(m-n)!}\Bigr)(-1)^m\psi^{(m)}(\psii(t_1)+\psii(t_2)).\label{eq:joint:K:AC}
  \end{align}
\end{proposition}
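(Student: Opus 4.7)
The plan is to mimic the proof sketch of the univariate formula \eqref{eq:univ:K:AC} using the McNeil--Nešlehová stochastic representation of a $(p+q)$-variate Archimedean copula with completely monotone generator. Because $\psi$ is completely monotone, by Bernstein's theorem it is the Laplace--Stieltjes transform of the distribution function $F_V$ of some positive random variable $V$. Let $E_1,\dots,E_p,F_1,\dots,F_q$ be i.i.d.\ $\Exp(1)$, independent of $V$, and set $U_j=\psi(E_j/V)$, $W_k=\psi(F_k/V)$. Then $(\bm U,\bm W)\sim C$, and the two ``partial PIT'' random variables can be written as
\begin{align*}
  C_{\bm X}(\bm U)=\psi\Bigl(\sum_{j=1}^p \psii(U_j)\Bigr)=\psi(\Gamma_p/V),\qquad
  C_{\bm Y}(\bm W)=\psi(\Gamma_q'/V),
\end{align*}
where $\Gamma_p=\sum_{j=1}^p E_j\sim\operatorname{Gamma}(p,1)$ and $\Gamma_q'=\sum_{k=1}^q F_k\sim\operatorname{Gamma}(q,1)$. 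The crucial structural point (which is what makes an analytic formula exist) is that $\Gamma_p$ and $\Gamma_q'$ are \emph{conditionally independent given $V$}, because the building exponentials are disjoint and independent of $V$.

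Next I would use monotonicity of $\psi$ to rewrite
\begin{align*}
  K_{\bm X,\bm Y}(t_1,t_2)=\P\bigl(\Gamma_p\ge V\psii(t_1),\ \Gamma_q'\ge V\psii(t_2)\bigr),
\end{align*}
condition on $V$, and invoke the Poisson--Gamma duality $\P(\Gamma_n\ge s)=\sum_{k=0}^{n-1}e^{-s}s^k/k!$ for each marginal tail:
\begin{align*}
  K_{\bm X,\bm Y}(t_1,t_2)=\E\Bigl[\Bigl(\sum_{n=0}^{p-1}\!\frac{e^{-V\psii(t_1)}(V\psii(t_1))^n}{n!}\Bigr)\!\Bigl(\sum_{k=0}^{q-1}\!\frac{e^{-V\psii(t_2)}(V\psii(t_2))^k}{k!}\Bigr)\Bigr].
\end{align*}
Expanding the product, pulling the deterministic $\psii(t_i)^{\,\cdot}$ out, and using the key identity $\E[V^r e^{-sV}]=(-1)^r\psi^{(r)}(s)$ with $s=\psii(t_1)+\psii(t_2)$ then yields the double sum
\begin{align*}
  \sum_{n=0}^{p-1}\sum_{k=0}^{q-1}\frac{(\psii(t_1))^n}{n!}\frac{(\psii(t_2))^k}{k!}(-1)^{n+k}\psi^{(n+k)}\bigl(\psii(t_1)+\psii(t_2)\bigr).
\end{align*}

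The last step is to repackage this as a single sum in $m=n+k$, grouping by the order of the derivative of $\psi$; this is where the inner convolution sum $\sum_{n=0}^{m}\frac{(\psii(t_1))^n}{n!}\frac{(\psii(t_2))^{m-n}}{(m-n)!}$ appears, and the upper summation index records the range $0\le m\le(p-1)+(q-1)$ (with the understanding that terms for which $n>p-1$ or $m-n>q-1$ are excluded, as they would not arise in the double sum). The main obstacle is mostly bookkeeping: one must justify passing the conditional expectation through the product (immediate by Fubini once the conditional independence is established), and remember to cite Bernstein's theorem and the Poisson--Gamma identity. The conceptual heart of the argument is the conditional independence of $\Gamma_p$ and $\Gamma_q'$ given the shared frailty $V$; once this is in place, the rest is algebraic expansion.
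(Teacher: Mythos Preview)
Your proposal is correct and follows essentially the same route as the paper's proof: both use the Marshall--Olkin frailty representation via Bernstein's theorem, condition on the frailty $V$ to exploit conditional independence of the two Erlang sums, apply the Poisson--Erlang tail identity, and then use $\E[V^m e^{-sV}]=(-1)^m\psi^{(m)}(s)$ before regrouping by $m=n+k$. Your remark that the outer index should run to $(p-1)+(q-1)$ with the inner sum implicitly truncated to $n\le p-1$ and $m-n\le q-1$ is well taken; the paper's stated upper limit $(p-1)(q-1)$ and unrestricted inner sum are a bookkeeping slip.
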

\begin{proof}
  Let $V\sim F_V$, where $F_V$ is the Laplace--Stieltjes
  inverse of $\psi$ and let
  $E_{11},\dots,E_{1p},E_{21},$ $\dots,E_{2q}\isim\Exp(1)$.
  Furthermore, let
  \begin{align*}
    \bm{U}=\Bigl(\psi\Bigl(\frac{E_{11}}{V}\Bigr),\dots,\psi\Bigl(\frac{E_{1p}}{V}\Bigr)\Bigr)\quad\text{and}\quad
    \bm{V}=\Bigl(\psi\Bigl(\frac{E_{21}}{V}\Bigr),\dots,\psi\Bigl(\frac{E_{2q}}{V}\Bigr)\Bigr).
  \end{align*}
  Note that $(\bm{U},\bm{V})\sim C$ and that $\bm{U}\sim C_{\bm{X}}$ and
  $\bm{V}\sim C_{\bm{Y}}$, where $C_{\bm{X}},C_{\bm{Y}}$ are (also) Archimedean
  copulas with generator $\psi$. Under our assumptions, $(\bm{X},\bm{Y})$
  allows for the stochastic representation
  \begin{align*}
    (\bm{X},\bm{Y})=(F_{X_1}^-(U_{11}),\dots,F_{X_p}^-(U_{1p}),\ F_{Y_1}^-(U_{21}),\dots,F_{Y_q}^-(U_{2q})),
  \end{align*}
  and thus
  {\allowdisplaybreaks%
  \begin{align*}
    &\phantom{{}={}}K_{\bm{X},\bm{Y}}(t_1,t_2)=\P(F_{\bm{X}}(\bm{X})\le t_1,\
      F_{\bm{Y}}(\bm{Y})\le t_2)=\P(C_{\bm{X}}(\bm{U})\le t_1,\ C_{\bm{Y}}(\bm{V})\le t_2)\\
    &=\P(E_{11}+\dots+E_{1p}>V\psii(t_1),\ E_{21}+\dots+E_{2q}>V\psii(t_2))\\
    &=\int_0^\infty\P(E_{11}+\dots+E_{1p}>v\psii(t_1),\ E_{21}+\dots+E_{2q}>v\psii(t_2))\,\rd F_V(v)\\
    &=\int_0^\infty\P(E_{11}+\dots+E_{1p}>v\psii(t_1))\,\P(E_{21}+\dots+E_{2q}>v\psii(t_2))\,\rd F_V(v)\\
    &=\int_0^\infty
      F_{\Poi(v\psii(t_1))}(p-1)F_{\Poi(v\psii(t_2))}(q-1)\,\rd F_V(v)\\
    &=\int_0^\infty\exp(-v\psii(t_1))\sum_{k=0}^{p-1}\frac{(v\psii(t_1))^k}{k!}\exp(-v\psii(t_2))\sum_{l=0}^{q-1}\frac{(v\psii(t_2))^l}{l!}\,\rd F_V(v)\\
    &=\int_0^\infty\exp\bigl(-v(\psii(t_1)+\psii(t_2))\bigr)\!\!\!\sum_{m=0}^{(p-1)(q-1)}\!\!\!\Bigl(\,\sum_{n=0}^m\frac{(\psii(t_1))^n}{n!}\frac{(\psii(t_2))^{m-n}}{(m-n)!}\Bigr)v^m\,\rd F_V(v)\\
    &=\sum_{m=0}^{(p-1)(q-1)}\!\!\!\Bigl(\,\sum_{n=0}^m\frac{(\psii(t_1))^n}{n!}\frac{(\psii(t_2))^{m-n}}{(m-n)!}\Bigr)\psi^{(m)}(\psii(t_1)+\psii(t_2))(-1)^m,
  \end{align*}}%
  where we used the fact that the survival function of an Erlang distribution can be
  expressed as the distribution function $F_{\Poi}$ of a Poisson distribution.
\end{proof}
Note that \eqref{eq:univ:K:AC} follows from \eqref{eq:joint:K:AC} as a special
case. Moreover, it is straightforward to extend \eqref{eq:joint:K:AC} to higher
dimensions. In this case, each random vector in the construction
  corresponds to one dimension of the multivariate Kendall distribution. As a
  special case, when each such random vector consists of only a single random
  variable, the multivariate Kendall distribution equals the copula of these
  random variables.

Figures~\ref{fig:Kendallcop} and \ref{fig:Kendallcop2} display scatter plots of
$n=1000$ independent observations of the bivariate Gumbel and Clayton Kendall
copulas (with parameter of the underlying Gumbel and Clayton generator chosen such that Kendall's tau equals 0.5),
respectively. The different plots depict how varying dimensions $p,q$ impact the
dependence structure between the two random vectors. This difference manifests
in the form of asymmetry (lower vs upper tails) and the strength of dependence
(comparing the cases $(p,q)=(2,2)$ versus $(p,q)=50$). Furthermore, note that
there is asymmetry in the pull (that is, stronger towards the lower of $p$ and $q$
dimensions) of the realizations to the diagonal (perfect dependence).

\begin{figure}[htbp]
  \begin{center}
    \includegraphics[width=\linewidth]{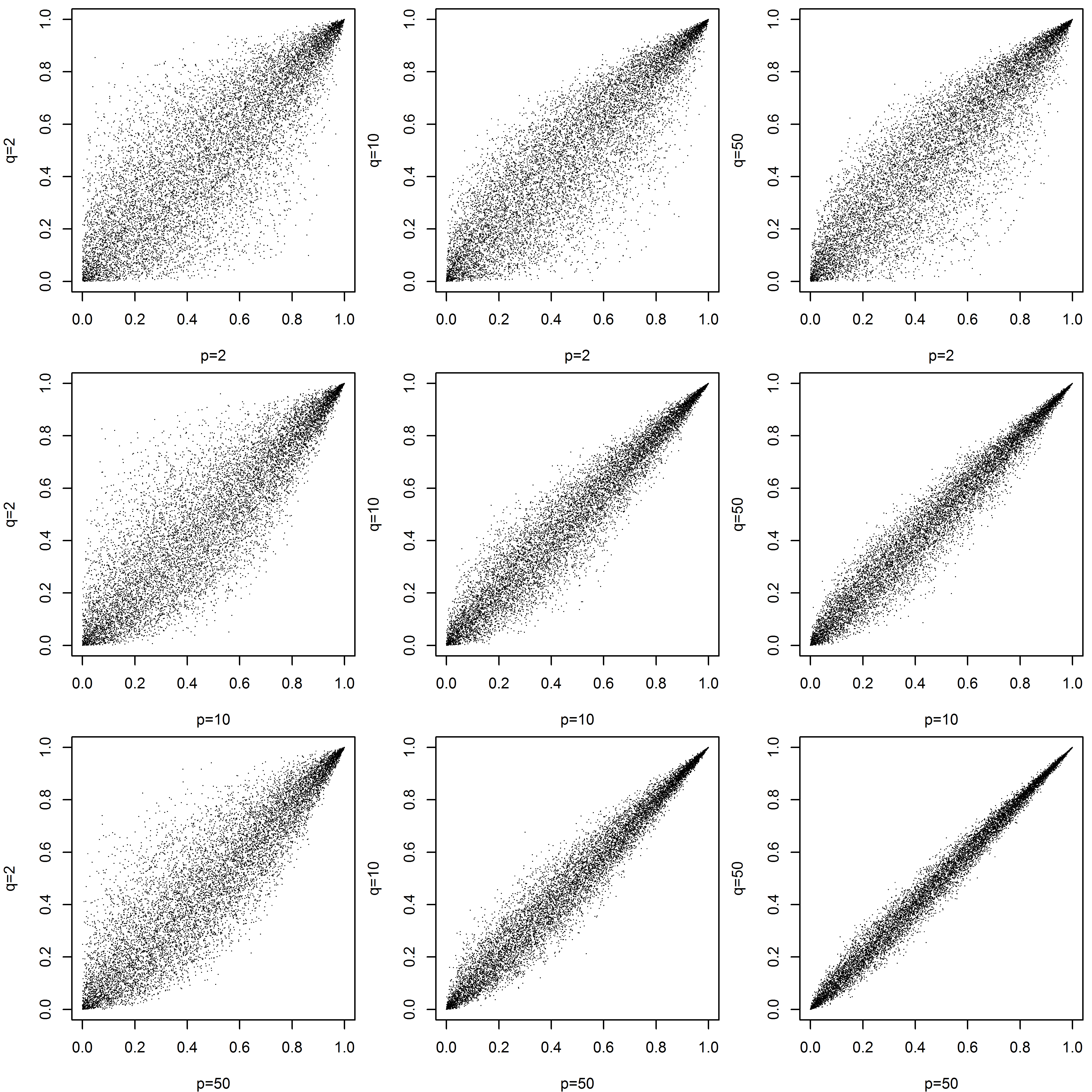}%
  \end{center}
  \caption{$n=1000$ independent observations from
    different Gumbel Kendall copulas (with Gumbel parameter chosen such that
    Kendall's tau of the underlying generator equals 0.5) corresponding to the joint Kendall
    distribution function as specified in \eqref{eq:joint:K:AC}. Note the
    dimensions of the two sectors are varied from $p \in \{2,10,50\}$ and
    $q \in \{2,10,50\}$ thus leading to nine different variations.}
  \label{fig:Kendallcop}
\end{figure}

\begin{figure}[htbp]
  \begin{center}
    \includegraphics[width=\linewidth]{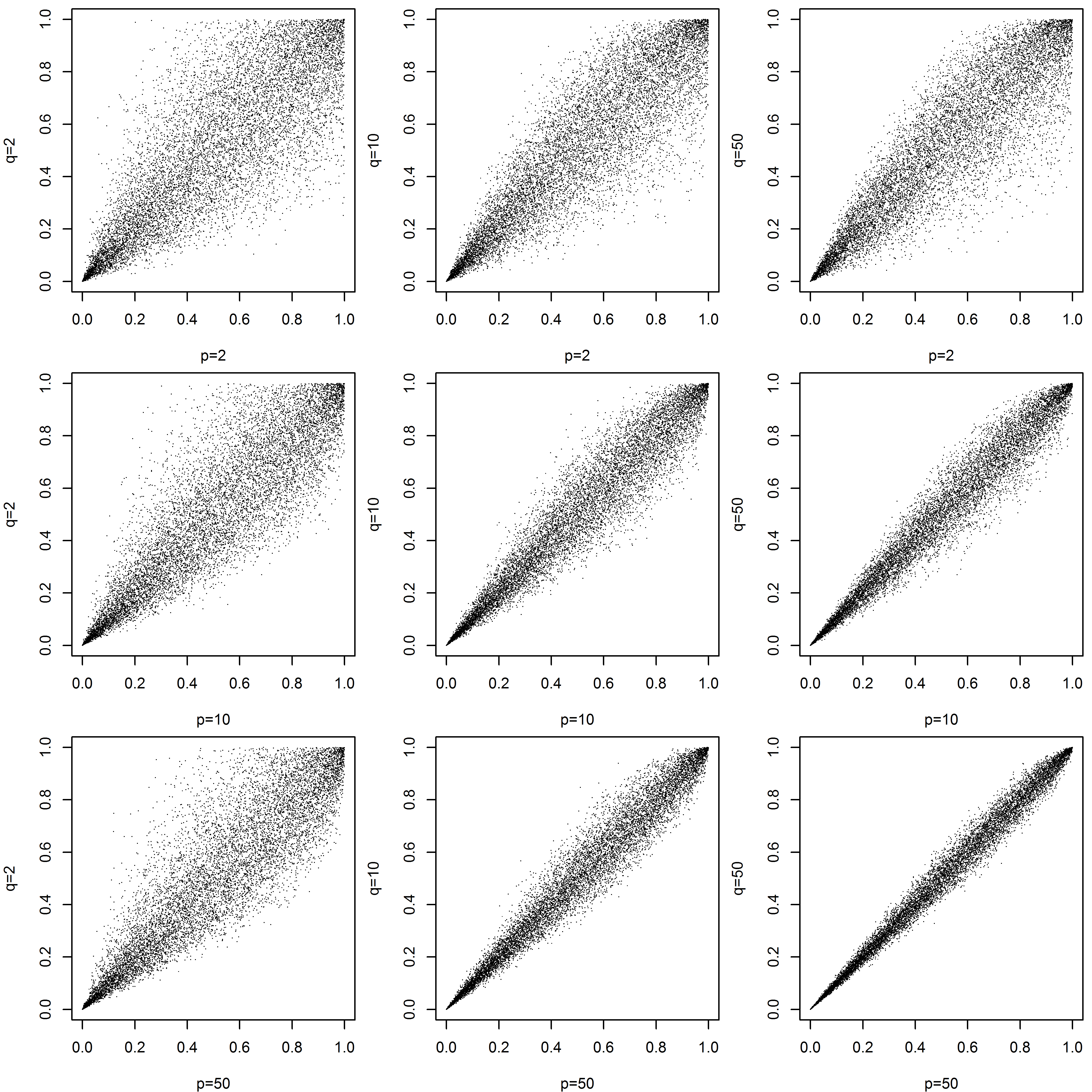}%
  \end{center}
  \caption{$n=1000$ independent observations from
    different Clayton Kendall copulas (with Clayton parameter chosen such that
    Kendall's tau of the underlying generator equals 0.5) corresponding to the joint Kendall
    distribution function as specified in \eqref{eq:joint:K:AC}. Note the
    dimensions of the two sectors are varied from $p \in \{2,10,50\}$ and
    $q \in \{2,10,50\}$ thus leading to nine different variations.}
  \label{fig:Kendallcop2}
\end{figure}

The following proposition briefly addresses basic
properties of Kendall copulas and shows that they can intuitively capture
dependence between random vectors.
\begin{proposition}[Basic properties of Kendall copulas]\label{prop:prop:Kendall:cop}
  Let $\bm{X}\sim F_{\bm{X}}$ be a $p$-dimensional and $\bm{Y}\sim F_{\bm{Y}}$
  be a $q$-dimensional random vector, both with continuously distributed
  margins (denoted as before).
  \begin{enumerate}
  \item If $\bm{X}$ and $\bm{Y}$ are independent, then
    $K_{\bm{X},\bm{Y}}(t_1,t_2)=K_{\bm{X}}(t_1)K_{\bm{Y}}(t_2)$ and thus the
    Kendall copula is the independence copula.
  \item If $\bm{X}$ and $\bm{Y}$ are comonotone,
    then $K_{\bm{X},\bm{Y}}(t_1,t_2)=\min\{t_1,t_2\}$
    and thus the Kendall copula (which equals the multivariate Kendall distribution in
    this case) is the upper Fr\'echet--Hoeffding bound.
  \item If $\bm{X}$ and $\bm{Y}$ are countermonotone,
    then $K_{\bm{X},\bm{Y}}(t_1,t_2)=\max\{t_1+t_2-1, 0\}$ and thus the Kendall
    copula (which equals the multivariate Kendall distribution in this case) is the
    lower Fr\'echet--Hoeffding bound.
  \end{enumerate}
\end{proposition}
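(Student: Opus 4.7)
The plan is to exploit the stochastic representations of independence, comonotonicity and countermonotonicity in order to reduce each of the three claims to a direct computation of a probability involving at most one uniform random variable $U$.

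For part (1), I would apply the Grouping Lemma already invoked in Section~\ref{sec:framework}: if $\bm{X}$ and $\bm{Y}$ are independent, then so are the (measurable) collapsed random variables $F_{\bm{X}}(\bm{X})$ and $F_{\bm{Y}}(\bm{Y})$. The joint probability defining $K_{\bm{X},\bm{Y}}(t_1,t_2)$ then factors into $K_{\bm{X}}(t_1)K_{\bm{Y}}(t_2)$, which, after passing to \eqref{eq:Kendall_cop}, is the independence copula composed with the marginal Kendall distributions.

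For parts (2) and (3), the main observation is that under the comonotone representation $\bm{X}=(F_{X_1}^-(U),\dots,F_{X_p}^-(U))$ with $U\sim\U(0,1)$ the copula $C_{\bm{X}}$ is the Fr\'echet--Hoeffding upper bound $M(u_1,\dots,u_p)=\min_j u_j$. Combining Sklar's Theorem with the continuity of each $F_{X_j}$ (which yields $F_{X_j}(F_{X_j}^-(U))=U$ almost surely) gives
\begin{align*}
  F_{\bm{X}}(\bm{X})=C_{\bm{X}}(F_{X_1}(X_1),\dots,F_{X_p}(X_p))=\min_{1\le j\le p}F_{X_j}(F_{X_j}^-(U))=U\quad\text{a.s.}
\end{align*}
The same reasoning applied to $\bm{Y}$ produces $F_{\bm{Y}}(\bm{Y})=U$ in the comonotone case and $F_{\bm{Y}}(\bm{Y})=1-U$ in the countermonotone case. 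Substituting these identities into the definition of $K_{\bm{X},\bm{Y}}$ yields $\P(U\le t_1,\ U\le t_2)=\min\{t_1,t_2\}$ for part (2) and $\P(U\le t_1,\ 1-U\le t_2)=\max\{t_1+t_2-1,0\}$ for part (3). Since the marginal Kendall distributions of the two comonotone vectors coincide with the $\U(0,1)$ distribution in these extremal scenarios, the joint Kendall distribution is already in copula form, so \eqref{eq:Kendall_cop} identifies $C_K$ with the corresponding Fr\'echet--Hoeffding bound.

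No real obstacle is expected: the only subtle steps are recalling that the copula of any comonotone random vector is $M$, and that continuity of the $F_{X_j}$ permits the almost sure simplification $F_{X_j}(F_{X_j}^-(U))=U$. Everything else amounts to elementary manipulations of uniform random variables.
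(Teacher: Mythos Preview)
Your proposal is correct and follows essentially the same route as the paper: independence is handled by factoring the joint probability (the paper writes this directly rather than citing the Grouping Lemma), and the comonotone/countermonotone cases are reduced via Sklar's Theorem to $F_{\bm{X}}(\bm{X})=U$ and $F_{\bm{Y}}(\bm{Y})=U$ (respectively $1-U$), after which the computation of $\min\{t_1,t_2\}$ and $\max\{t_1+t_2-1,0\}$ is identical. Your write-up is slightly more explicit about why $F_{X_j}(F_{X_j}^-(U))=U$ holds and why the marginals are already $\U(0,1)$, but there is no substantive difference in strategy.
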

\begin{proof}
  Let $C_{\bm{X}},C_{\bm{Y}}$ denote the copulas of $\bm{X},\bm{Y}$, respectively.
  \begin{enumerate}
  \item $K_{\bm{X},\bm{Y}}(t_1,t_2)=\P(F_{\bm{X}}(\bm{X})\le
    t_1,F_{\bm{Y}}(\bm{Y})\le t_2)\omu{\text{ind.}}{=}{}\P(F_{\bm{X}}(\bm{X})\le
    t_1)\P(F_{\bm{Y}}(\bm{Y})\le t_2)=K_{\bm{X}}(t_1)K_{\bm{Y}}(t_2)$, $t_1,t_2\in[0,1]$.
  \item\label{prop:prop:Kendall:cop:2} With $\bm{X}=(F_{X_1}^-(U),\dots,F_{X_p}^-(U))$ and
    $\bm{Y}=(F_{Y_1}^-(U),\dots,F_{Y_q}^-(U))$ for $U\sim\U(0,1)$, Sklar's
    Theorem implies that $F_{\bm{X}}(\bm{X})=C_{\bm{X}}(U,\dots,U)=\min\{U,\dots,U\}=U$ and
    $F_{\bm{Y}}(\bm{Y})=C_{\bm{Y}}(U,\dots,U)=\min\{U,\dots,U\}=U$. Therefore, $K_{\bm{X},\bm{Y}}(t_1,t_2)=\P(C_{\bm{X}}(U,\dots,$ $U)\le
    t_1,C_{\bm{Y}}(U,\dots,U)\le t_2)=\P(U\le t_1,U\le t_2)=\min\{t_1,t_2\}$. Note that
    $K_{\bm{X},\bm{Y}}(t_1,t_2)$ has $\U(0,1)$ margins in this case and thus
    equals its Kendall copula.
  \item Similarly as in Part~\ref{prop:prop:Kendall:cop:2}. \qedhere
  \end{enumerate}
\end{proof}

Nonparametric estimators of univariate Kendall distributions based on a
random sample $(\bm{X}_i,\bm{Y}_i)$, $i\in\{1,\dots,n\}$, can be constructed as follows.
Let
\begin{align*}
  \bm{W}_i=(W_{i1},W_{i2})=\Bigl(\frac{1}{n-1}\sum_{\substack{k=1\\ k\neq i}}^{n}\I_{\{\bm{X}_k\le\bm{X}_i\}},\ \frac{1}{n-1}\sum_{\substack{k=1\\ k\neq i}}^{n}\I_{\{\bm{Y}_k\le\bm{Y}_i\}}\Bigr),
\end{align*}
where, as usual, the inequalities are understood componentwise. Analogously to
\cite{genestrivest1993} and \cite{barbegenestghoudiremillard1996} in the
univariate case, one can use the multivariate empirical distribution function
\begin{align*}
  K_n(\bm{t})=K_n(t_1,t_2)=\frac{1}{n}\sum_{i=1}^n\I_{\{\bm{W}_i\le \bm{t}\}}=\frac{1}{n}\sum_{i=1}^n\I_{\{W_{i1}\le t_1, W_{i2}\le t_2\}},\quad \bm{t}=(t_1,t_2)\in[0,1]^2,
\end{align*}
as a nonparametric estimator of $K_{\bm{X},\bm{Y}}(t_1,t_2)$.

\subsubsection{Dependence measures related to the multivariate Kendall
  distribution}

We now turn to a link between multivariate Kendall distributions and
dependence measures of the form $\chi(\bm{X},\bm{Y})$.  Below are a few examples
starting with the dependence measure resulting from the PIT collapsing function.
\begin{example}[Correlation via the joint Kendall distribution]\label{example:correlation_jointKendall}
  Since $K_{\bm{X}}(t_1)$ and $K_{\bm{Y}}(t_2)$ are the distribution functions of
  $F_{\bm{X}}(\bm{X})$ and $F_{\bm{Y}}(\bm{Y})$, respectively, and $K_{\bm{X},\bm{Y}}(t_1,t_2)$ is
  the joint distribution function of
  $(F_{\bm{X}}(\bm{X}),F_{\bm{Y}}(\bm{Y}))$, Hoeffding's Identity implies that
  \begin{align*}
    \chi(\bm{X},\bm{Y})&=\rho(F_{\bm{X}}(\bm{X}),F_{\bm{Y}}(\bm{Y}))=\frac{\Cov (F_{\bm{X}}(\bm{X}),F_{\bm{Y}}(\bm{Y}))}{\sqrt{\Var(F_{\bm{X}}(\bm{X}))\Var(F_{\bm{Y}}(\bm{Y}))}}\\
    &=\frac{\iint_{[0,1]^2} K_{\bm{X},\bm{Y}}(t_1,t_2)-K_{\bm{X}}(t_1)K_{\bm{Y}}(t_2) \,\rd t_1\rd t_2}{\sqrt{\int_{[0,1]}K_{\bm{X}}(t_1)-K^2_{\bm{X}}(t_1)\,\rd t_1\int_{[0,1]}K_{\bm{Y}}(t_2)-K^2_{\bm{Y}}(t_2)\,\rd t_2}}.
  \end{align*}
  Note that the numerator is the (integrated) difference between the joint Kendall
  distribution of $\bm{X}$ and $\bm{Y}$ and the joint Kendall
  distribution under independence of $\bm{X}$ and $\bm{Y}$;
  $\chi(\bm{X},\bm{Y})$ thus represents in some sense how far on average the random vectors $\bm{X}$ and
  $\bm{Y}$ are from independence, thus mimicking the construction of
  standard bivariate dependence measures.
\end{example}

\begin{example}[Spearman's rho via the joint Kendall distribution]\label{example: spearman_jointKendall}
  One drawback of the measure presented in Example
  \ref{example:correlation_jointKendall} is that it depends on the marginal
  distributions of the collapsed random variables. To rectify this, we can apply
  the marginal Kendall distributions $K_{\bm{X}}$ and $K_{\bm{Y}}$ to the
  collapsed random variables $F_{\bm{X}}(\bm{X})$ and $F_{\bm{Y}}(\bm{Y})$,
  respectively. The measure will then be a natural multivariate extension of
  Spearman's rho as it only depends on the Kendall copula. To this end, let
  $U=K_{\bm{X}}(F_{\bm{X}}(\bm{X}))$ and
  $V=K_{\bm{Y}}(F_{\bm{Y}}(\bm{Y}))$. Then,
  \begin{align*}
    \chi(\bm{X},\bm{Y})&=\rho\bigl(K_{\bm{X}}(F_{\bm{X}}(\bm{X})),\ K_{\bm{Y}}(F_{\bm{Y}}(\bm{Y}))\bigr)=\rho(U,V)=\frac{\E[UV] -1/4}{1/12}=12\E[UV]-3\\
    &=12 \iint_{[0,1]^2} uv\,\rd C_K(u,v)-3=\rho_S(F_{\bm{X}}(\bm{X}),F_{\bm{Y}}(\bm{Y})),
  \end{align*}
  where $C_K(u,v)$ denotes the Kendall copula introduced
  in \eqref{eq:Kendall_cop}. Thus, $\chi(\bm{X},\bm{Y})$ equals Spearman's rho
  of $F_{\bm{X}}(\bm{X})$ and $F_{\bm{Y}}(\bm{Y})$.
\end{example}

\begin{example}[Kendall's tau via the joint Kendall distribution]
  Similarly, with $U$ and $V$ as defined in Example~\ref{example: spearman_jointKendall}, for Kendall's tau we have
  \begin{align*}
    \!\!\!\chi(\bm{X},\bm{Y})&=\rho(\I_{\{F_{\bm{X}}(\bm X)\le F_{\bm{X}}(\bm X')\}},\I_{\{F_{\bm{Y}}(\bm Y)\le F_{\bm{Y}}(\bm Y')\}}) =\tau(F_{\bm{X}}(\bm{X}),F_{\bm{Y}}(\bm{Y}))\\
    &=4 \iint_{[0,1]^2} C_K(u,v)\,\rd C_K(u,v)-1,
  \end{align*}
  where $C_K(u,v)$ denotes the Kendall copula as before. The first equality
  follows from Lemma \ref{lem:repr:tau:cor} and the last equality follows by
  definition of Kendall's tau of the collapsed random variables in the bivariate
  case. This measure forms a multivariate extension of Kendall's tau which only
  depends on the Kendall copula. An alternative albeit very similar extension
  was formed via the multivariate rank collapsing function presented in Section~\ref{subsec:multirank}.
\end{example}

\begin{example}[Tail dependence via Kendall copulas] \label{ex:tail_Kendallcop}
  In light of using \eqref{chi:lam} for measuring tail dependence between the
  collapsed random variables, it is easy to see that when using the PIT
  collapsing function, \eqref{chi:lam} as measure of association corresponds to
  computing (classical) coefficients of tail dependence of the underlying
  Kendall copula $C_K$. For example, if $\bm{X}\sim F_{\bm{X}}$, $\bm{Y}\sim
  F_{\bm{Y}}$ with Kendall distributions $K_{\bm{X}}$, $K_{\bm{Y}}$,
  respectively, and if $U=K_{\bm{X}}(F_{\bm{X}}(\bm{X}))$,
  $V=K_{\bm{Y}}(F_{\bm{Y}}(\bm{Y}))$ (note that $(U,V)\sim C_K$ in this case),
  then the coefficient of upper tail dependence can be expressed as
  \begin{align*}
    \lambda_{U}(F_{\bm{X}}(\bm{X}),F_{\bm{Y}}(\bm{Y}))&=\lim_{u
      \uparrow 1} \P(F_{\bm{Y}}(\bm{Y}) > K_{\bm{Y}}^{-}(u)\,|\,F_{\bm{X}}(\bm{X}) > K_{\bm{X}}^{-}(u))\\
    &=\lim_{u \uparrow 1}\P(V>u\,|\,U>u)= \lim_{u \uparrow 1}\frac{1-2u + C_K(u,u)}{1-u}.
  \end{align*}
\end{example}

\section{Estimation and properties}\label{sec:fit:S}
In this section, we study sample estimators of Equation~\eqref{eq:measure}, and
derive asymptotic results which can be used to compute their standard errors.

\subsection{Estimation for general collapsing functions $S$}
Assume we have a random sample $(\bm{X}_1,\bm{Y}_1),\dots,(\bm{X}_n,\bm{Y}_n)$
from $F_{\bm{X},\bm{Y}}$. An estimator $\chi_n$ of
$\chi(\bm{X},\bm{Y})=\rho(S(\bm X), S(\bm Y))$ can be constructed by replacing
$\rho$ by the sample correlation coefficient. The following section investigates some properties of this estimator for general $S$ (but not the PIT
collapsing function which, due to its nature, is treated in the section thereafter).

\subsection{Asymptotic result for general $S$}
In this section we closely follow \cite{grothe2014} in order to construct $\chi_n$ through the lens of U-statistics for deriving its asymptotic distribution. To this end we have the
following proposition.
\begin{proposition}[Asymptotic distribution of $\chi_n$]\label{Prop:Asymptotics}
  Suppose $\chi_n(\bm{X},\bm{Y})$ is defined as above. Then, as $n\to\infty$,
  \begin{align*}
    \sqrt{n} ({\chi}_n-\chi)\overset {d} {\longrightarrow} \N(0,\sigma^2_{\chi}),
  \end{align*}
  where
  \begin{align*}
    \sigma_{\chi}^2=\begin{cases}
      (\nabla f_{5 \times 1}|_{\bm{\mu}})'\Sigma_1(\nabla f_{5 \times 1}|_{\bm{\mu}}),&\text{if $S$ is a $p$-variate function},\\
      4(\nabla f_{5 \times 1}|_{\bm{\mu}})'\Sigma_2(\nabla f_{5 \times 1}|_{\bm{\mu}}),&\text{if $S$ is a $2p$-variate function}.
    \end{cases}
  \end{align*}
  Here, $\nabla f_{5 \times 1}|_{\bm{\mu}}$ denotes the gradient vector
  of the function
  \begin{align*}
    f(a,b,c,d,e)=\frac {e-ab} {\sqrt{c-a^2} \sqrt{d-b^2}},
  \end{align*}
  evaluated at the population mean
  $\bm{\mu}=(\mu_x,\mu_y,\mu_{xx},\mu_{yy},\mu_{xy})$, where
  $\mu_x=\E[S(\bm{X})]$, $\mu_y=\E[S(\bm{Y})]$, $\mu_{xx}=\E[S(\bm{X})^2]$,
  $\mu_{yy}=\E[S(\bm{Y})^2]$, $\mu_{xy}=\E[S(\bm{X})S(\bm{Y})]$.
  Furthermore, $\Sigma_1$ denotes the covariance matrix of
  $(S(\bm{X}),S(\bm{Y}),S(\bm{X})^{2},S(\bm{Y})^{2},S(\bm{X})S(\bm{Y}))$ and
  $\Sigma_2$ denotes the covariance matrix of
  $(\E_{\tiny \bm{X}'}[S(\bm{X},\bm{X}')],$ $\E_{\tiny
    \bm{Y}'}[S(\bm{Y},\bm{Y}')],$ $\E_{\tiny
    \bm{X}'}[S(\bm{X},\bm{X}')^2],$ $\E_{\tiny
    \bm{Y}'}[S(\bm{Y},\bm{Y}')^2],$ $\E_{\tiny
    (\bm{X}',\bm{Y}')}[S(\bm{X},\bm{X}')S(\bm{Y},\bm{Y}')])$.
\end{proposition}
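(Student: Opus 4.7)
The plan is to write $\chi_n$ as a smooth function of five sample moments and combine an underlying joint CLT with the Delta method; the two cases of Proposition~\ref{Prop:Asymptotics} differ only in which CLT (classical iid versus Hoeffding's for U-statistics) is invoked. With the function $f$ from the statement we have $\chi=f(\bm{\mu})$ and $\chi_n=f(\hat{\bm{\mu}}_n)$, where $\hat{\bm{\mu}}_n=(\hat\mu_x,\hat\mu_y,\hat\mu_{xx},\hat\mu_{yy},\hat\mu_{xy})$ is the vector of natural plug-in estimators of the five population moments in $\bm{\mu}$. That $\chi_n$ admits this factorization is immediate from the definition of the sample correlation, and $f$ is continuously differentiable at $\bm{\mu}$ as soon as $\Var(S(\bm{X}))>0$ and $\Var(S(\bm{Y}))>0$, which we assume.

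For the first case ($S$ being $p$-variate), each coordinate of $\hat{\bm{\mu}}_n$ is an iid sample mean, for instance $\hat\mu_{xy}=n^{-1}\sum_{i=1}^{n}S(\bm{X}_i)S(\bm{Y}_i)$. The multivariate CLT applied jointly to the random vector $(S(\bm{X}),S(\bm{Y}),S(\bm{X})^{2},S(\bm{Y})^{2},S(\bm{X})S(\bm{Y}))$ (assumed square-integrable) yields $\sqrt{n}(\hat{\bm{\mu}}_n-\bm{\mu})\overset{d}{\to}\N(0,\Sigma_1)$. The Delta method with gradient $\nabla f_{5\times 1}|_{\bm{\mu}}$ then produces the stated asymptotic variance $(\nabla f_{5\times 1}|_{\bm{\mu}})'\Sigma_1(\nabla f_{5\times 1}|_{\bm{\mu}})$.

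For the second case ($S$ being $2p$-variate), each plug-in estimator becomes a U-statistic of degree two over the $\binom{n}{2}$ unordered pairs $\{(\bm{X}_i,\bm{Y}_i),(\bm{X}_j,\bm{Y}_j)\}$. For instance, $\hat\mu_{xy}=\binom{n}{2}^{-1}\sum_{i<j}S(\bm{X}_i,\bm{X}_j)S(\bm{Y}_i,\bm{Y}_j)$, with symmetric kernel $h_{xy}(\bm{z}_1,\bm{z}_2)=S(\bm{x}_1,\bm{x}_2)S(\bm{y}_1,\bm{y}_2)$ and $\bm{z}_i=(\bm{x}_i,\bm{y}_i)$. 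Hoeffding's CLT for vector-valued U-statistics then gives $\sqrt{n}(\hat{\bm{\mu}}_n-\bm{\mu})\overset{d}{\to}\N(0,4\Sigma_2)$, where the factor $4$ is the standard contribution for non-degenerate degree-two U-statistics and $\Sigma_2$ is the covariance matrix of the first Hoeffding projections $g_\cdot(\bm{Z}_1)=\E[h_\cdot(\bm{Z}_1,\bm{Z}_2)\,|\,\bm{Z}_1]$. A direct calculation identifies these projections as precisely the five conditional expectations listed in the statement, and one more application of the Delta method yields $\sigma_\chi^2=4(\nabla f_{5\times 1}|_{\bm{\mu}})'\Sigma_2(\nabla f_{5\times 1}|_{\bm{\mu}})$.

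The main obstacle is the second case: establishing the joint (not merely marginal) asymptotic normality of the five U-statistics with the precise covariance $4\Sigma_2$, and ensuring that the factor of $4$ propagates coherently through the Delta method. Particular care is needed for the cross-moment: the estimator $\hat\mu_{xy}$ is a genuine U-statistic on the same pair index $(i,j)$ shared by both $S(\bm{X}_i,\bm{X}_j)$ and $S(\bm{Y}_i,\bm{Y}_j)$, so that its first projection reduces to $\E_{(\bm{X}',\bm{Y}')}[S(\bm{X},\bm{X}')S(\bm{Y},\bm{Y}')]$ under the joint distribution of $(\bm{X}',\bm{Y}')$ rather than any product measure. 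I expect the rest of the argument to follow a template essentially identical to the one used by \cite{grothe2014}, which can be cited for the U-statistic CLT step.
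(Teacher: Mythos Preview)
Your proposal is correct and follows essentially the same route as the paper: express $\chi_n=f(\hat{\bm\mu}_n)$, obtain the joint CLT for $\hat{\bm\mu}_n$ (via the ordinary multivariate CLT in Case~1 and Hoeffding's decomposition/CLT for degree-two U-statistics in Case~2, yielding the factor~$4$), and conclude with the Delta method. The only cosmetic difference is that the paper writes out the Hoeffding decomposition coordinatewise and then stacks the five asymptotically linear representations, whereas you invoke the vector-valued U-statistic CLT in one step; the content is the same.
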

\begin{proof}
  Refer to Appendix \ref{proof:Prop 4.1} for the details.
\end{proof}

\begin{remark}[Estimation of $\sigma^2_{\chi}$]\label{remark:est}
  To estimate the asymptotic variance $\sigma_{\chi}^2$ we adopt a plug-in
  approach as was suggested by \cite{grothe2014}. This procedure has
    two key ingredients as summarized below and it will slightly differ between
    the two cases given in the proof of
    Proposition~\ref{Prop:Asymptotics}. Further, note that the notation below is
    also explained in the proof of Proposition~\ref{Prop:Asymptotics} in
    Appendix~\ref{proof:Prop 4.1}. The two cases referred to below correspond
  to when $S$ is a $p$-variate (Case~1) or a $2p$-variate (Case~2) function.
  \begin{enumerate}
  \item In a first step, evaluate the gradient vector $\nabla f_{5 \times 1}$ at
    $\bm{m^{(k)}}=(m^{(k)}_x,m^{(k)}_y,m^{(k)}_{xx},m^{(k)}_{yy},m^{(k)}_{xy})$,
    $k\in\{1,2\}$ corresponding to the sample quantities in Case~$k$. The
    analytical form of the gradient vector evaluated at the
    appropriate values is given in Appendix~\ref{App:add_details}.
  \item Now distinguish the two cases: In Case~1, estimate $\Sigma_1$ by the sample covariance matrix $\Sigma_{n,1}$ of $\big(S(\bm{X}_i),S(\bm{Y}_i),
    S(\bm{X}_i)^{2},S(\bm{Y}_i)^2,S(\bm{X}_i)S(\bm{Y}_i)\big)$,
    $i\in\{1,\dots,n\}$. In Case~2, estimate $\Sigma_2$ by the sample
    covariance matrix $\Sigma_{n,2}$ of
    $\big(g_x(\bm{X}_i),g_y(\bm{Y}_i),g_{xx}(\bm{X}_i),g_{yy}(\bm{Y}_i),g_{xy}(\bm{X}_i,\bm{Y}_i)\big)$,
    $i\in\{1,\dots,n\}$, where
    \begin{align*}
      &g_{x}(\bm{X}_i)=\frac {1}{n-1} \sum_{\substack{j=1\\ j\ne i}}^n S(\bm{X}_i,\bm{X}_j),\quad g_{y}(\bm{Y}_i)=\frac {1}{n-1} \sum_{\substack{j=1\\ j\ne i}}^n S(\bm{Y}_i,\bm{Y}_j),\\
      &g_{xx}(\bm{X}_i)=\frac {1}{n-1} \sum_{\substack{j=1\\ j\ne i}}^n S(\bm{X}_i,\bm{X}_j)^2,\quad g_{yy}(\bm{Y}_i)=\frac {1}{n-1} \sum_{\substack{j=1\\ j\ne i}}^n S(\bm{Y}_i,\bm{Y}_j)^2, \\
      &g_{xy}(\bm{X}_i,\bm{Y}_i)=\frac {1}{n-1} \sum_{\substack{j=1\\ j\ne i}}^n S(\bm{X}_i,\bm{X}_j)S(\bm{Y}_i,\bm{Y}_j).
    \end{align*}
    The quantities $g_x$, $g_y$, $g_{xx}$, $g_{yy}$, $g_{xy}$ estimate the
    conditional expectations, $(\E_{\tiny \bm{X}'}[S(\bm{X},\bm{X}')],$ $\E_{\tiny
    	\bm{Y}'}[S(\bm{Y},\bm{Y}')],$ $\E_{\tiny
    	\bm{X}'}[S(\bm{X},\bm{X}')^2],\E_{\tiny
    	\bm{Y}'}[S(\bm{Y},\bm{Y}')^2],\E_{\tiny
    	(\bm{X}',\bm{Y}')}[S(\bm{X},\bm{X}')S(\bm{Y},\bm{Y}')])$, and can be motivated from a
    jackknife methodology as \cite{grothe2014} identified.
  \item Then,
    $\sigma^2_{n,\chi}=(\nabla f\rvert_{\bm{m}^{(1)}})'\Sigma_{n,1}(\nabla f\rvert_{\bm{m}^{(1)}})$ in Case~1 and
    $\sigma^2_{n,\chi}=4(\nabla
    f\rvert_{\bm{m}^{(2)}})'\Sigma_{n,2}(\nabla
    f\rvert_{\bm{m}^{(2)}})$ in Case~2.
  \end{enumerate}
\end{remark}

\subsection{Estimation for the PIT collapsing function}
We now discuss the construction of an estimator for
$\chi(\bm{X},\bm{Y})=\rho(F_{\bm{X}}(\bm{X}),F_{\bm{Y}}(\bm{Y}))$. In contrast
to the asymptotic U-statistics framework developed earlier, we cannot
directly express $\chi(\bm{X},\bm{Y})$ as a function of
U-statistics in this case.

To begin with, let $W_1=F_{\bm{X}}(\bm{X})$ and $W_2=F_{\bm{Y}}(\bm{Y})$. As
in \cite{barbegenestghoudiremillard1996}, we consider the
pseudo-observations
\begin{align*}
  W_{i1}=\frac{1}{n-1}\sum_{\substack{k=1\\ k\neq i}}^{n}\I_{\{\bm{X}_k\le\bm{X}_i\}},\quad W_{i2}=\frac{1}{n-1}\sum_{\substack{k=1\\ k\neq i}}^{n}\I_{\{\bm{Y}_k\le\bm{Y}_i\}}, \quad i\in\{1,\dots,n\},
\end{align*}
where the inequalities are understood componentwise. Similarly as before, an
estimator for the dependence measure $\chi(\bm{X},\bm{Y})$ can simply be
constructed via the sample correlation coefficient, that is,
$\chi_n(\bm{X},\bm{Y})=\rho_n(W_{i1},W_{i2})$. As this particular estimator does
not fit in the U-statistics framework, it is harder to derive
asymptotic normality with an expression for the asymptotic variance
for this collapsing function. However, we can always construct bootstrap
confidence intervals if required.

Based on the pseudo-observations defined above, one can also estimate
$\chi(\bm{X},\bm{Y})=\rho_S(F_{\bm{X}}(\bm{X}),F_{\bm{Y}}(\bm{Y}))$ by
$\chi_n(\bm{X},\bm{Y})=\rho_n(K_{n,\bm{X}}(W_{i1}),K_{n,\bm{Y}}(W_{i2}))$, where
\begin{align*}
  K_{n,\bm{X}}(t_1)=\frac{1}{n}\sum_{j=1}^{n} \I_{\{W_{i1} \le t_1\}}, \quad K_{n,\bm{Y}}(t_2)=\frac{1}{n}\sum_{j=1}^{n} \I_{\{W_{i2} \le t_2\}}, \quad t_1,t_2 \in [0,1]. 
\end{align*}

\section{Applications}\label{sec:ex}
We now present two applications, one in bioinformatics
(see Section~\ref{sec:ex:prot}) and another in finance (see Section~\ref{sec:ex:SP500}).

\subsection{Protein data: An application from bioinformatics}\label{sec:ex:prot}
\subsubsection{Introduction}

Proteins are complex molecules composed of sequences of amino acid
residues. There are 20 different types of amino acids. All of them have the same
generic structure, R-CH(NH$_2$)-COOH, where the component labelled ``R'', also
known as a side chain, identifies the specific type of amino acid. In
bioinformatics, scientists are interested in understanding how conformational
changes at different side chains may be coupled together; see, for example,
\cite{ghoraie2015}. For example, if two residues are far apart in the sequence
but their side chains tend to change conformation together, it may be an
indication that they are close in 3D. In turn, this may shed light on the
all-important underlying protein folding process.

The conformation of a side chain can be characterized by a set of dihedral
angles. To understand this, picture a side chain as a sequence of atoms spanning
off the backbone of the protein. The angle between planes formed by atoms 1--3
and atoms 2--4 in the sequence is referred to as the \emph{first}
dihedral angle, and so on. Typically, there are zero to four such dihedral angles
depending on the size of the underlying amino acid.

Thus, let $\bm{X}= (X_1,\dots,X_p)$, $0 \le p \le 4$, and
$\bm{Y}=(Y_1,\dots,Y_q)$, $0\le q \le 4$, represent the dihedral angles of two
side chains, respectively. We are then in need of a measure of
dependence between the two random vectors $\bm{X}$ and $\bm{Y}$. To quantify their
dependence, \cite{ghoraie2015sparse} applied the Graphical LASSO (GLASSO)
developed by \cite{friedman2008}, while \cite{ghoraie2015} used ``kernelized
partial canonical correlation analysis'' (KPCCA). Here, we apply
our framework of collapsing functions.

\subsubsection{Analysis}
Below, we will report results using various collapsing functions -- in
particular, the weighted average, the pairwise distance, the pairwise kernel,
and the PIT.

For the weighted average, we put more weight on the first few dihedral angles,
starting with the extreme case of $\bm{w}=(1,0,...,0)$, that is, full weight on
the first dimension. This is because, biologically, the dihedral angles
closer to the backbone of the protein are more meaningful than those
further away.

For the pairwise distance, we include only the Euclidean distance because, after experimenting with other distance functions, there was little to no difference for this application.

For the pairwise kernel, we follow \cite{ghoraie2015} and use a multivariate von-Mises kernel,
\begin{align*}
K(\bm{x}_i,\bm{x}_j)= \prod_{t=1}^{p} \exp(\kappa_t\cos(x_{it}-x_{jt})),
\end{align*}
where $\bm{x}_i, \bm{x}_j \in \IR^p$ are two different conformations of a given side chain.
We simply use the same concentration parameters as those adopted and justified by \cite{ghoraie2015}, so $\kappa_1=8$, $\kappa_2=8$, $\kappa_3=4$ and $\kappa_4=2$. These choices were because atoms farther away from the backbone have more freedom of motion.

Finally, the PIT is a general purpose choice of collapsing function
that can capture both positive and negative association. However, for the
purpose of ranking dependencies we are just interested in the strength (not the
direction) of dependence, so we use $|\chi(\bm{X},\bm{Y})|$ as the ranking criteria.

We use the same dataset as in \cite{ghoraie2015} which allows for
a direct comparison of the results. Altogether,
\cite{ghoraie2015} studied eight different types of proteins from three
different families (Ras, Rho and Rab). Each protein has a varying number of
residues approximately in the range of 160--190. Through a specific procedure
explained in \cite{ghoraie2015}, roughly 16,000--18,000 sample conformations for
these proteins were generated.

 Note that working with the pair-wise distance and kernel
collapsing functions is computationally prohibitive. Since, for each protein, we
have up to 18,000 sample conformations, this would have resulted in
$18,000 \choose 2$ samples in the collapsed space. We thus consider ten random
subsets of size 2500 (without replacement to avoid pair-wise distance or kernel
similarity values of zero) from the original dataset and compute the relevant
evaluation criteria as an average across the obtained subsets.

The objective here is to rank all pairs of residues in a protein according to
various dependence measures, and to verify whether ``known couplings'' appear in
the top-ranked pairs. Following \cite{ghoraie2015}, ``known couplings'' were
based on the Contact Rearrangement Network (CRN) method from \cite{daily2008}. The receiver-operating characteristic (ROC) curve -- in
particular, the area under the ROC curve (AUC) -- is used as a summarizing
evaluation criterion to determine how well the rankings produced by different
dependence measures agree with the CRN method's results.

\subsubsection{Results and discussion}

We compare the resulting AUC values from the chosen collapsing functions with results from KPCCA (see \cite{ghoraie2015}) and GLASSO
(see \cite{ghoraie2015sparse}). This comparison is summarized in Table~\ref{tab:AUC_protein}.
\begin{table}[htbp]
  \centering
  \begin{tabular}{@{\extracolsep{-1pt}}lcccccccc}
    \toprule
    \multicolumn{1}{c}{Protein} & \multicolumn{1}{c}{H-Ras}   & \multicolumn{1}{c}{RhoA}   & \multicolumn{1}{c}{Rap2A}   & \multicolumn{1}{c}{Rheb}   & \multicolumn{1}{c}{Sec4}  & \multicolumn{1}{c}{Cdc42}   & \multicolumn{1}{c}{Rac1}   & \multicolumn{1}{c}{Ypt7p} \\
    \multicolumn{1}{c}{PBD ID} & \multicolumn{1}{c}{4Q21} &  \multicolumn{1}{c}{1FTN} &  \multicolumn{1}{c}{1KAO} &   \multicolumn{1}{c}{1XTQ} &   \multicolumn{1}{c}{1G16} &   \multicolumn{1}{c}{1ANO} &  \multicolumn{1}{c}{1HH4(A)} & \multicolumn{1}{c}{1KY3} \\
    \midrule
    KPCCA      &                 80 & 75 &  69 & 70 & 68 &  68 &    67 &  72\\
    Weighted Average  &    78 & 74 &  72 & 65 & 71 &  66 &    67 &  59\\
    GLASSO      &              78 & 72 &  68 & 71 & 68 &  68 &    59 &  67\\
    Kernel &                       78 & 72 &  71 & 69 & 66 &  65 &    63 &  61\\
    Distance      &               77 & 72 &  71 & 69 & 65 &  65 &    63 &  61\\
    PIT &                            74 & 68 &  70 & 71 & 68 &  61 &    59 &  57\\
    \bottomrule
  \end{tabular}
  \caption{AUC (values in percent) against CRN. The rows and columns are
    organized in decreasing order of row and column means. Note that the PDB ID
    is a unique identifier of the inactive state of the protein; see
    \cite{berman2006}.}
  \label{tab:AUC_protein}
\end{table}

Firstly, from all AUC values in Table~\ref{tab:AUC_protein},
we see that dependence measures resulting from all collapsing functions often
possessed significantly better allosteric coupling detection power than at
random (when AUC is 50\%). Furthermore, very simple yet meaningful collapsing
functions, such as the weighted average (with $\bm{w}=(1,0,...,0)$), often
yielded comparable (for 1G16 and 1KAO even better) results to KPCCA and
GLASSO. This is an interesting observation, given that this particular
collapsing function is considerably faster and easier to understand than the
mathematically sophisticated KPCCA or GLASSO methods.

\subsection{S\&P~500: An application from finance}\label{sec:ex:SP500}
\subsubsection{Introduction}
There are numerous problems in finance and risk management that require to study
the dependence between random vectors or groups of random variables. In this
section, we explore such a problem in the setting of investigating dependence
between S\&P~500 business sectors. Furthermore, as we are dealing with time
series data, this problem can be viewed both through the prism of static and
dynamic dependence. Fixing a time period, we can assess whether the business
sectors are independent by visualizing the dependence between
them. Additionally, we can compute time-varying dependence measures to
dynamically capture dependence between business sectors.

\subsubsection{S\&P~500 constituent data}
For the static case, we consider the 465 available constituent time series from
the S\&P~500 in the time period from 2007-01-01 to 2009-12-31 (756
  trading days); see the \R\ package \texttt{qrmdata}. For the dynamic case, we
consider 461 (due to missing data) of these constituent time
series. We use the 10 Global Industry Classification Standard (GICS) sectors as
business sectors. Nine GICS sectors have Exchange Traded Funds (ETFs) which
track the performance of each business sector. These marketable securities are
referred to as sector SPDR ETFs. We use a bivariate measure of dependence
between any two sector ETFs as a market-determined benchmark for
comparisons.

Turning to pre-processing of the dataset, we work with negative log-returns for
each constituent. Furthermore, we fit ARMA(1,1)-GARCH(1,1) models to each time
series and extract the corresponding standardized residuals to investigate
dependence between the component series; see \cite{patton2006} for this
procedure. Note that we apply the same pre-processing to the nine ETF time series.
\subsubsection{A snapshot of S\&P~500 sector dependence}
Following Algorithm \ref{vis:test:indep}, we can perform an assessment of
independence between business sectors. In particular, we use (Euclidean)
distance, weighted average (equal weights), maximum, and PIT collapsing
functions. We also visualize the dependence between all 36~ETF sector pairs
(notice the Telecommunications sector does not have an ETF) for comparison.
\begin{figure}[htbp]
  \begin{center}
    \includegraphics[width=0.45\linewidth]{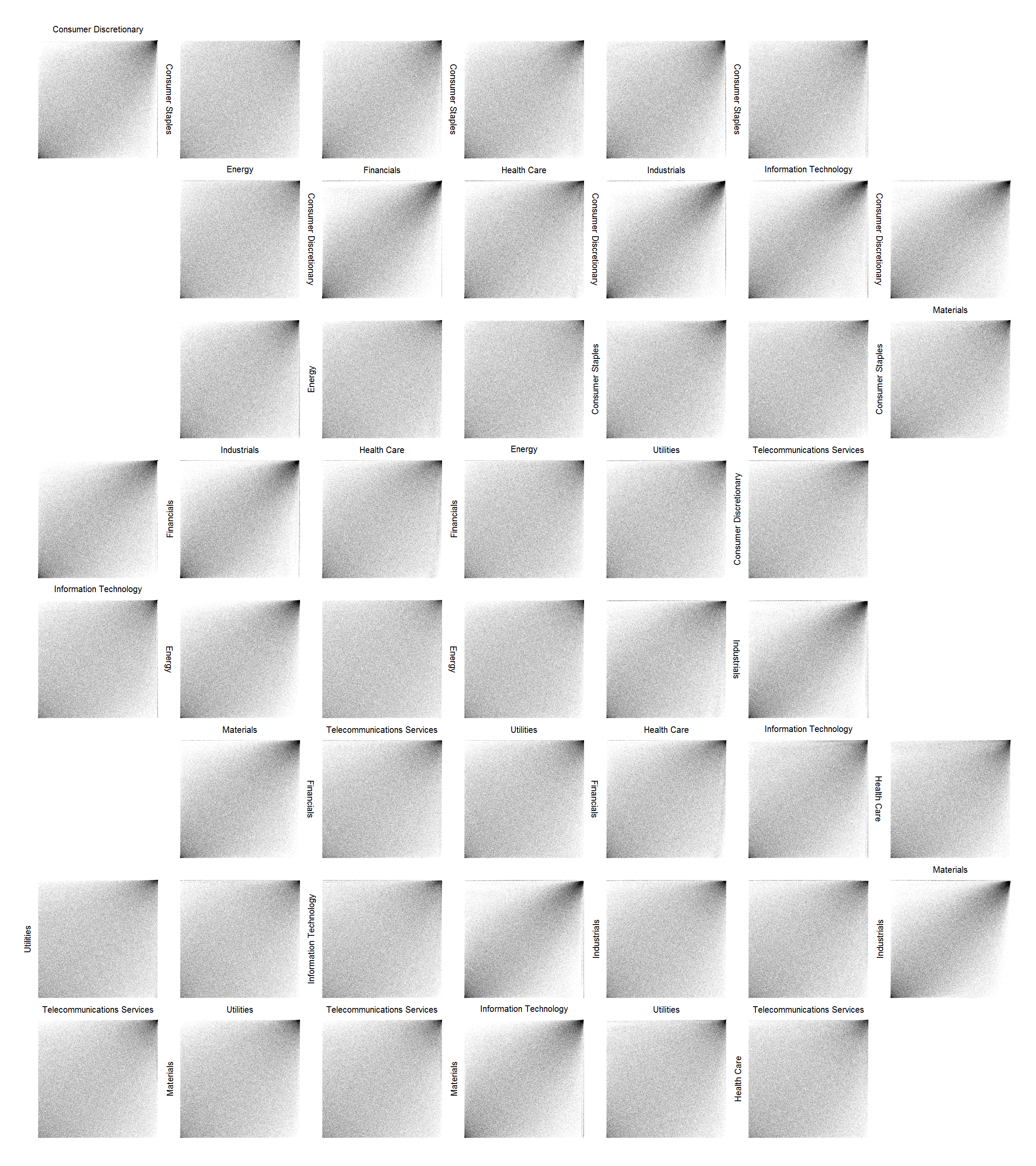}%
    \hspace{1mm}
    \includegraphics[width=0.45\linewidth]{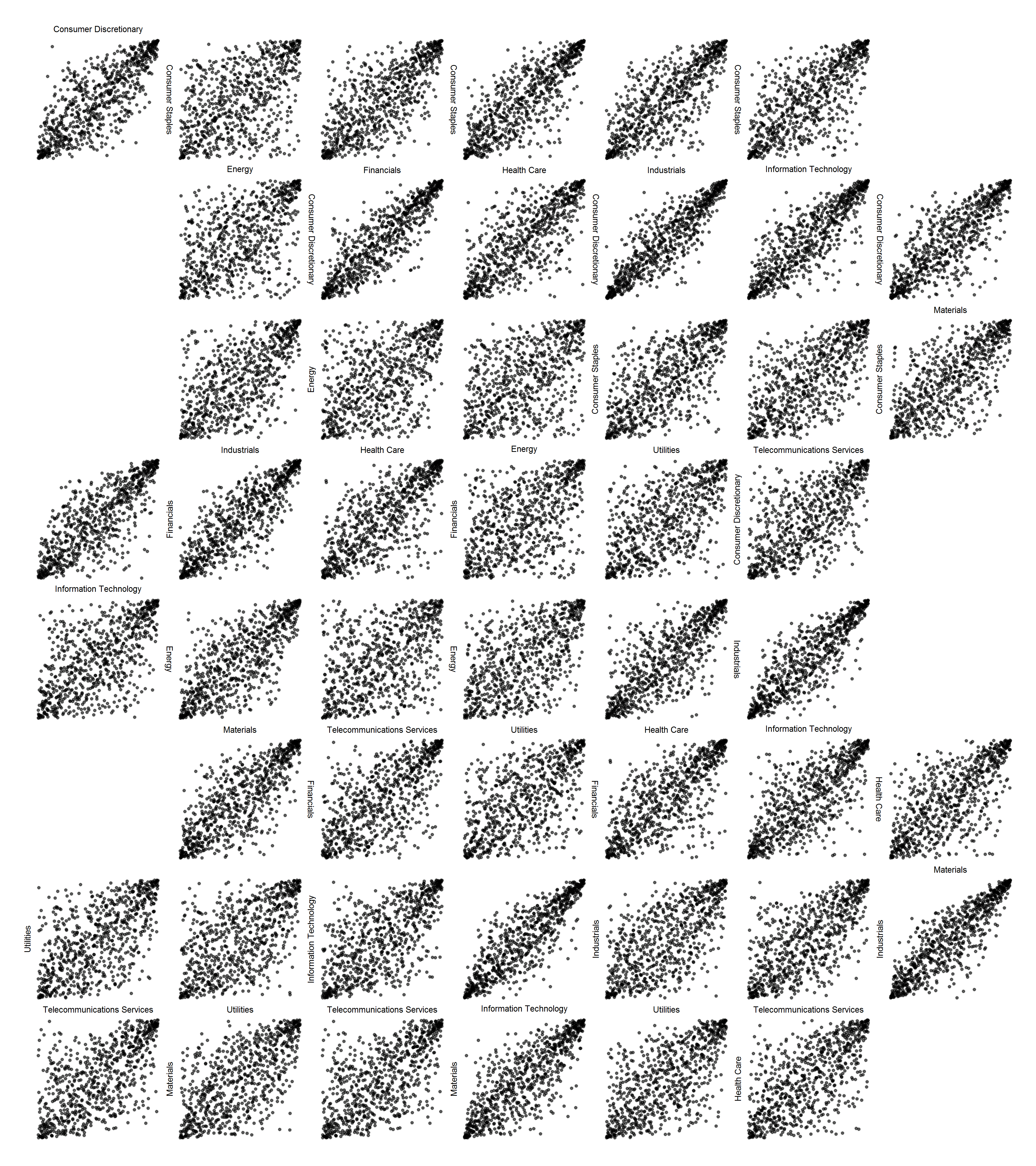}\\

    \vspace{1mm}
    \includegraphics[width=0.45\linewidth]{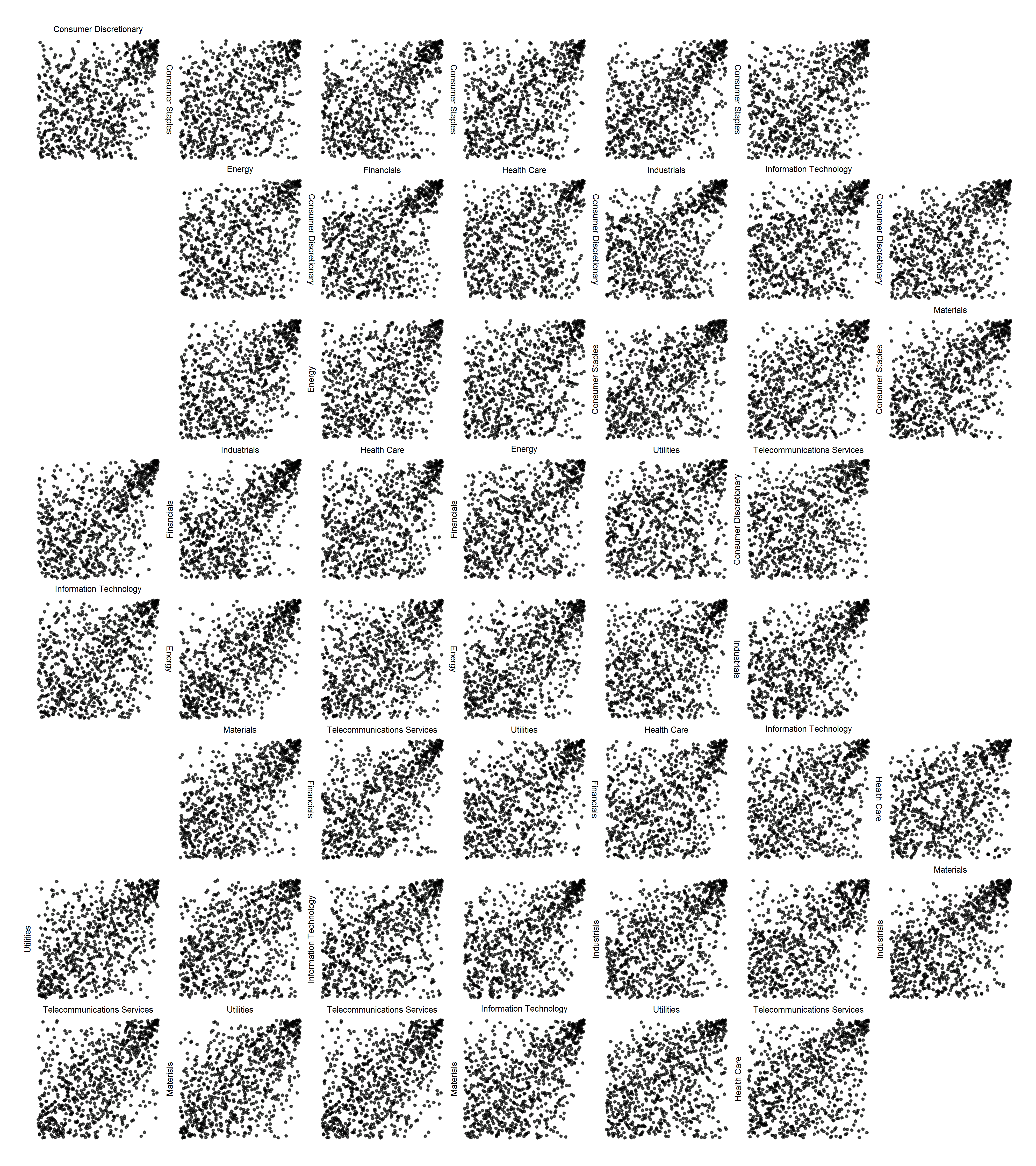}%
    \hspace{1mm}
    \includegraphics[width=0.45\linewidth]{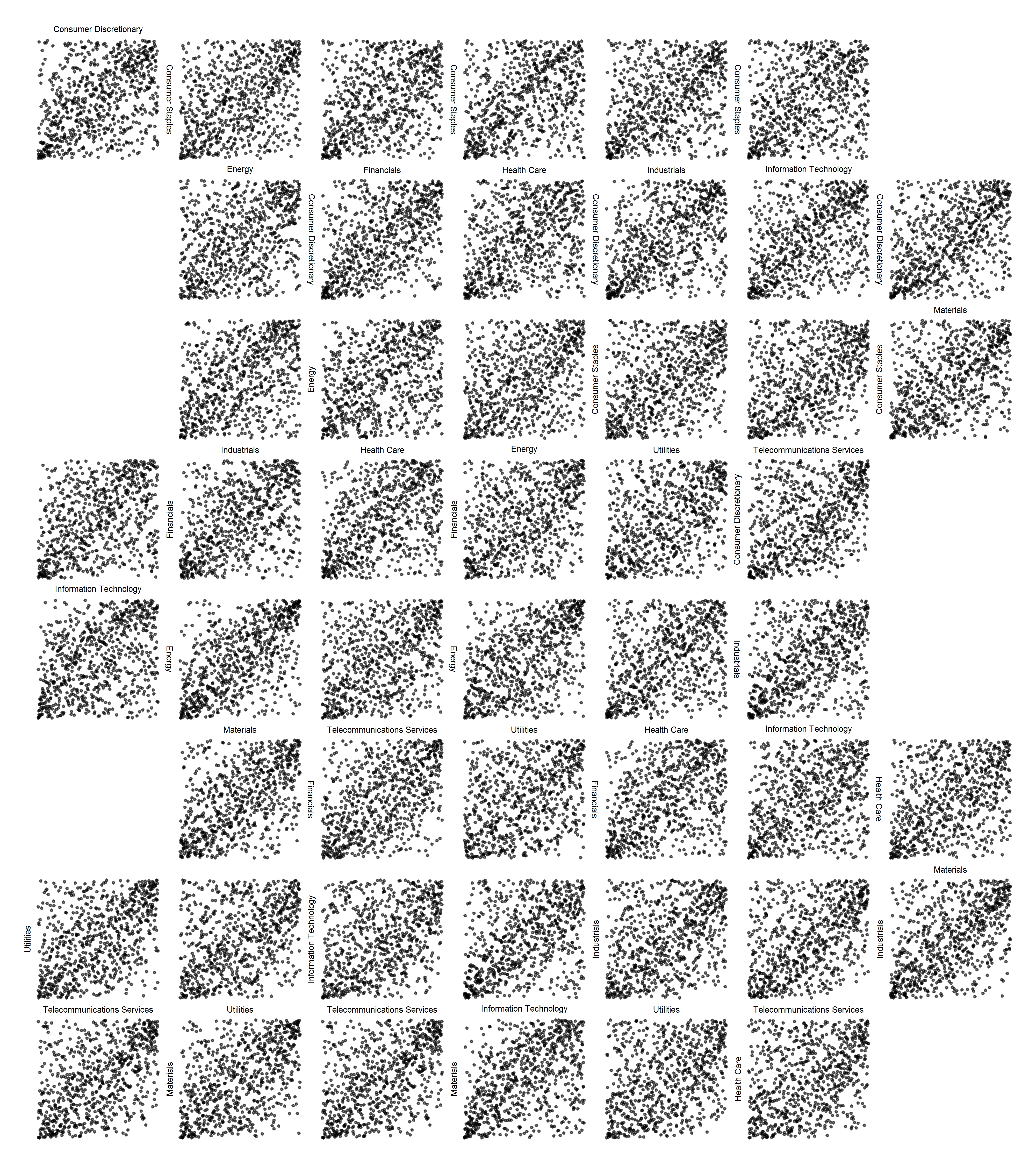}%
  \end{center}
  \caption{Zenplots displaying all pairs of pseudo-observations for the 10
    GICS sectors of the 465-dimensional S\&P~500 data based on the (Euclidean) distance (top
    left), weighted average (top right), PIT (bottom left), and maximum (bottom
    right) collapsing functions.}
  \label{fig:SP500:zenplot:independence:test:groups}
\end{figure}

\begin{figure}[htbp]
  \begin{center}
    \includegraphics[width=0.95\linewidth]{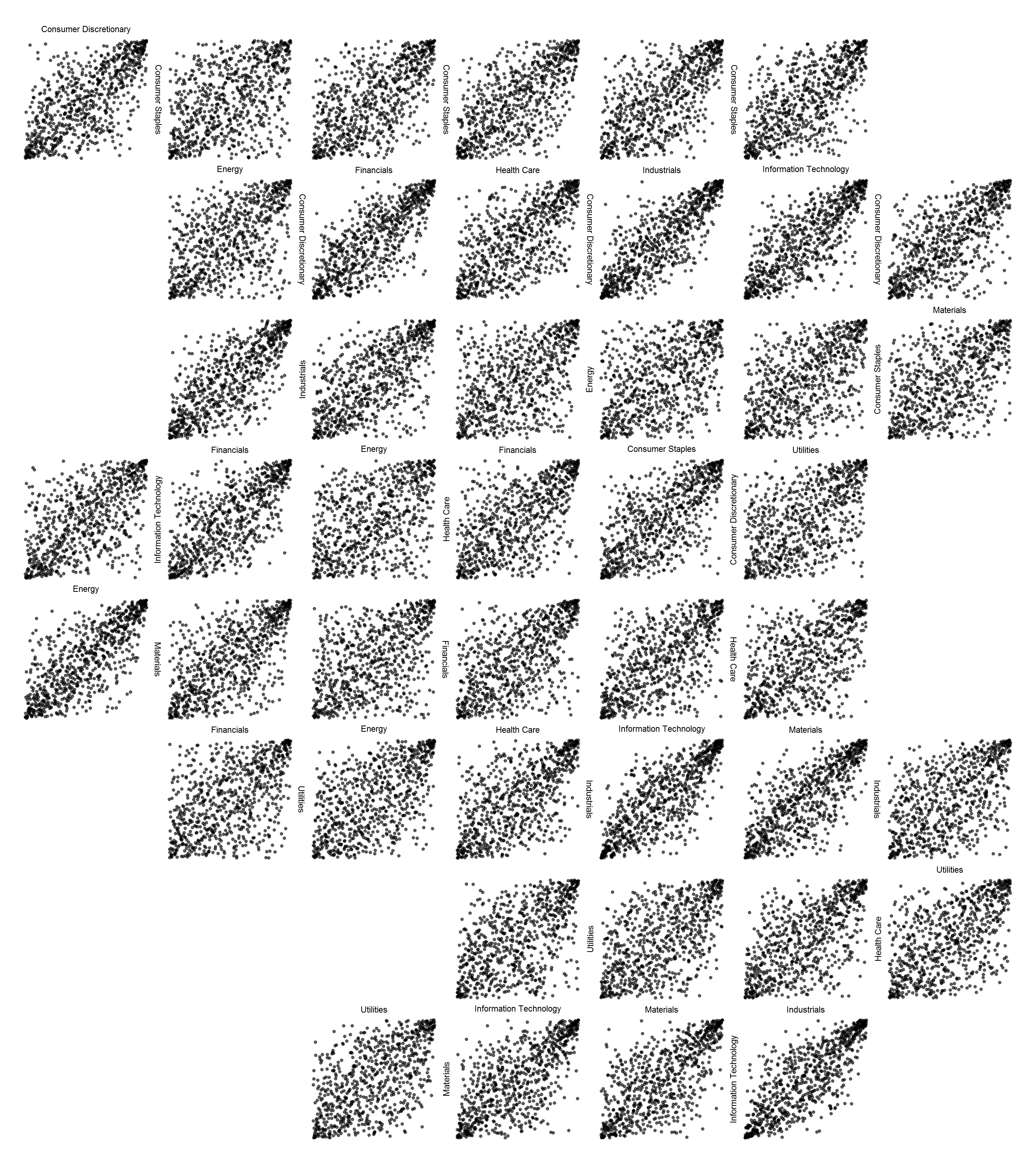}
  \end{center}
  \caption{Zenplot displaying all pairs of pseudo-observations for the nine GICS
    Sector ETFs.}
  \label{fig:SP500:zenplot:independence:test:etf}
\end{figure}

Figure~\ref{fig:SP500:zenplot:independence:test:groups} illustrates this
graphical assessment of independence with four zenplots, one for each choice of
collapsing function. As can be clearly detected from any of these choices of
collapsing functions, the business sectors cannot be assumed to be
independent. Furthermore to facilitate the comparison of the collapsed variables
with the benchmark, Figure~\ref{fig:SP500:zenplot:independence:test:etf}
depicts the pair-wise dependence structures between the nine sector ETFs.

The four plots in Figure~\ref{fig:SP500:zenplot:independence:test:groups} can be
interpreted as (data) realizations from the underlying (and unknown) collapsed
copula. Clearly realizations from the (Euclidean) distance collapsed copula is
denser in comparison to realizations from the other three collapsing functions
because it has $\binom{756}{2}$ realizations as opposed to just
756. In particular, due to the nature of the distance function, it is difficult
to interpret features (tail dependence, asymmetry, shape etc.) of the dependence
structure between business sectors in the context of the original variables
portrayed in the corresponding zenplot. As a result, for applications in
finance, the distance collapsing function should mostly be used for
(graphical) assessment of independence only.

Since the weighted average collapsing function is most natural for return data,
the interpretations of tail dependence and asymmetry translate well from the
bivariate case. We naturally see the similarity in the dependence
structures between the weighted average collapsing function and the benchmark (ETFs) in Figure~\ref{fig:SP500:zenplot:independence:test:etf}. Furthermore,
since the PIT collapsing function leads to realizations from the Kendall copula,
it also yields an attractive interpretation of the dependence structure depicted
in its corresponding zenplot. For instance as noted in Example~\ref{ex:tail_Kendallcop}, the tail dependence coefficients in
this case can be interpreted as natural multivariate extensions of bivariate
tail dependence. Owing to the justification of these two collapsing functions
and interpretability, one could potentially fit a copula model directly on the
collapsed variables if needed to model a notion of dependence between groups of
random variables, but this framework will in general not offer an analytically
tractable link back to the original random variables.

The maximum collapsing function appears to capture a weaker form of dependence
compared to the other collapsing functions and the benchmark. This is to be
expected as this collapsing function depicts a notion of dependence between the
worst performers only (in a plural sense in that the constituent chosen as the
maximum can change daily in each business sector over the time period
considered). One would use this collapsing function only if one is interested in
such a notion of dependence between random vectors.

\subsubsection{Dynamic S\&P~500 sector dependence}
Having garnered an understanding of the dependence structures between business
sectors in a fixed time period, we will now dynamically capture the dependence
between these sectors using a moving window setup. In particular, we investigate
time varying dependence from 2006-01-01 to 2015-12-31. Due to our missing data
handling, we will be working with a subset of 461 time series from the
S\&P~500 data.

Figure~\ref{fig: time-varying_collapsedvsetf} depicts for four randomly chosen
pairs of business sectors, the time-varying dependence as captured by the
distance, average, maximum, and PIT collapsing functions. Also included (for a
form of comparison with the benchmark) is the dependence measure between ETFs
for each pair. Note that we used a 250-day moving window for the plots in Figure~\ref{fig: time-varying_collapsedvsetf}. The first takeaway
from the four collapsed and ETF dependence series is that they seem to capture a
very similar shape across time. While the dependence measures resulting from
different collapsing functions lie on different scales, they all capture the
same shifts in dependence not only with respect to each other but also with
respect to the market-determined ETF dependence series. This indicates the
suitability of any of these collapsing function in the task of detecting
dependence and the shifts in the strength of dependence over time. Furthermore, note that ETFs are marketed as weighted average of sector constituents, but are
tradeable securities themselves and thus exposed to market forces. Such a
construction of the ETFs explains why the average collapsing function would most
closely track the dependence between ETFs (despite the use of equal weights in
our collapsing function and the influence that market forces might have on the
dependence between sector ETFs).

\begin{figure}[htbp]
  \begin{center}
    \includegraphics[width=0.49\linewidth]{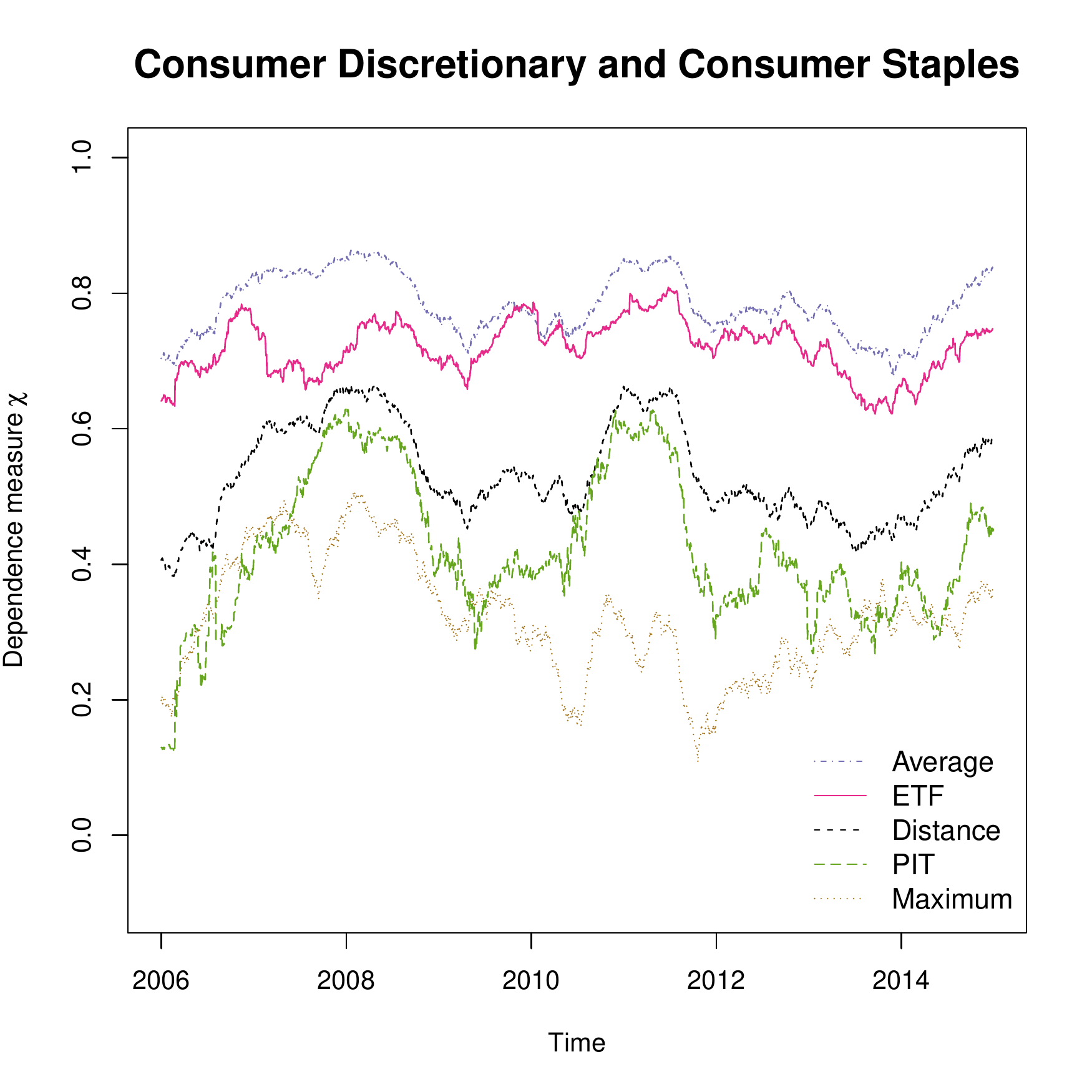}%
    \hspace{1mm}
    \includegraphics[width=0.49\linewidth]{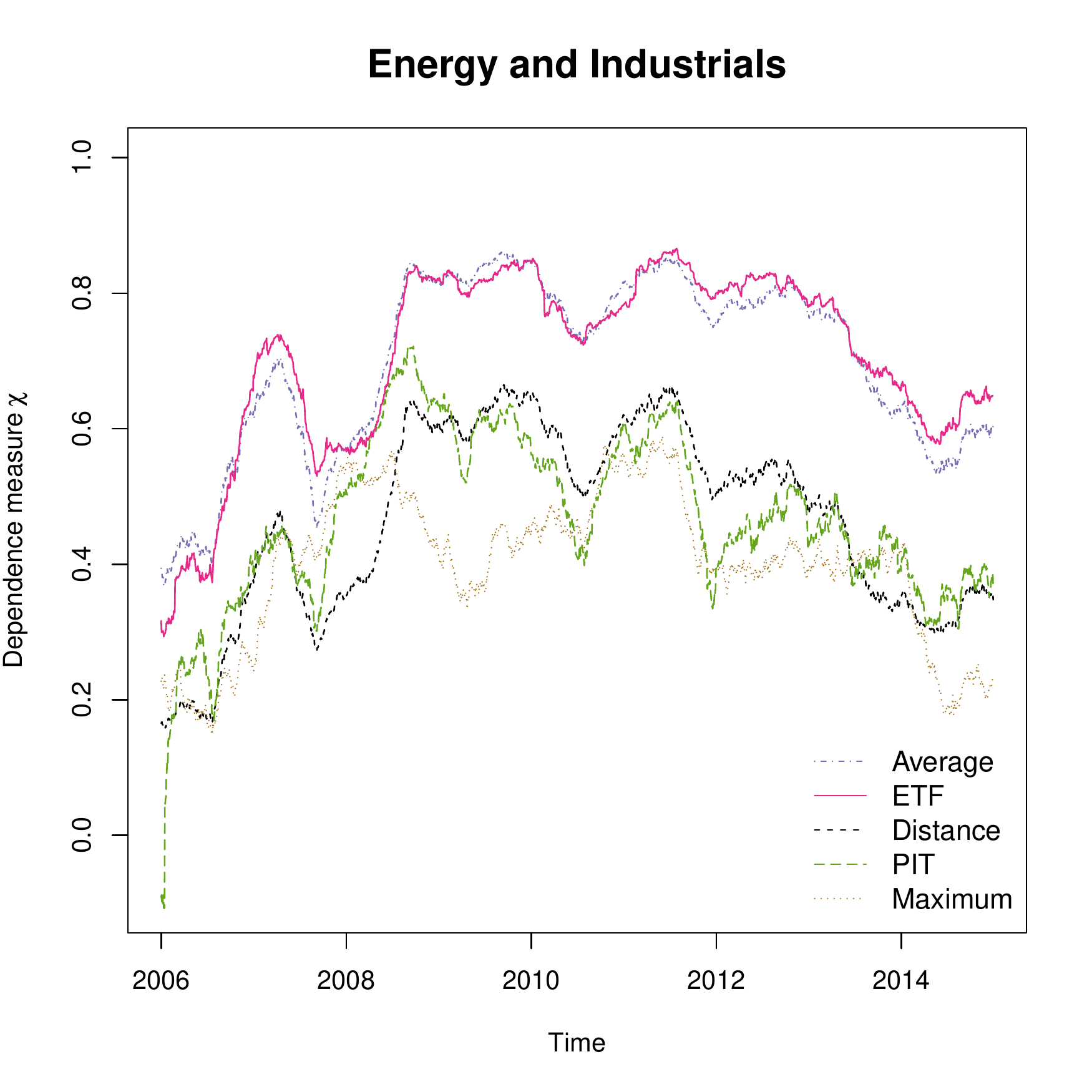}

    \vspace{1mm}
    \includegraphics[width=0.49\linewidth]{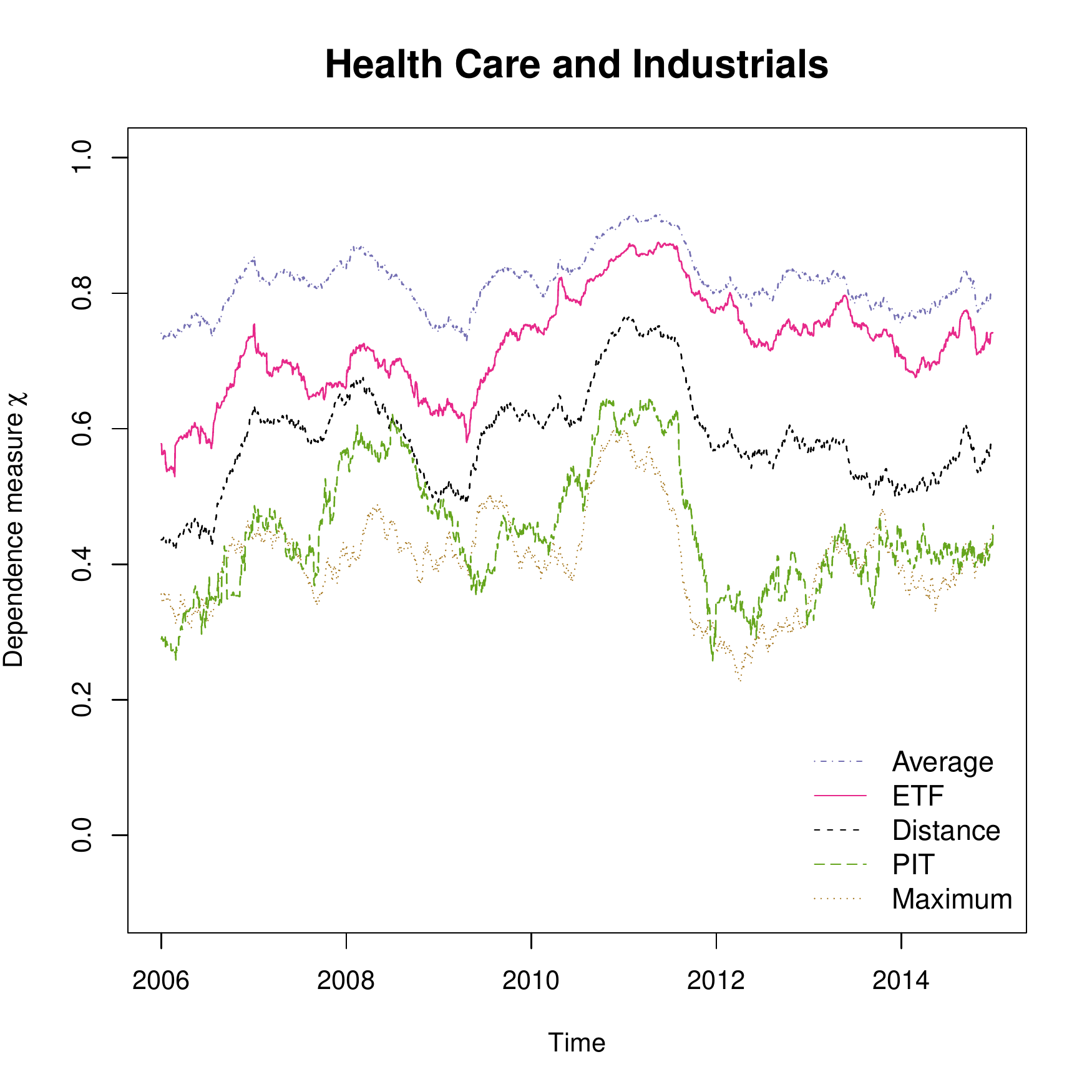}%
    \hspace{1mm}
    \includegraphics[width=0.49\linewidth]{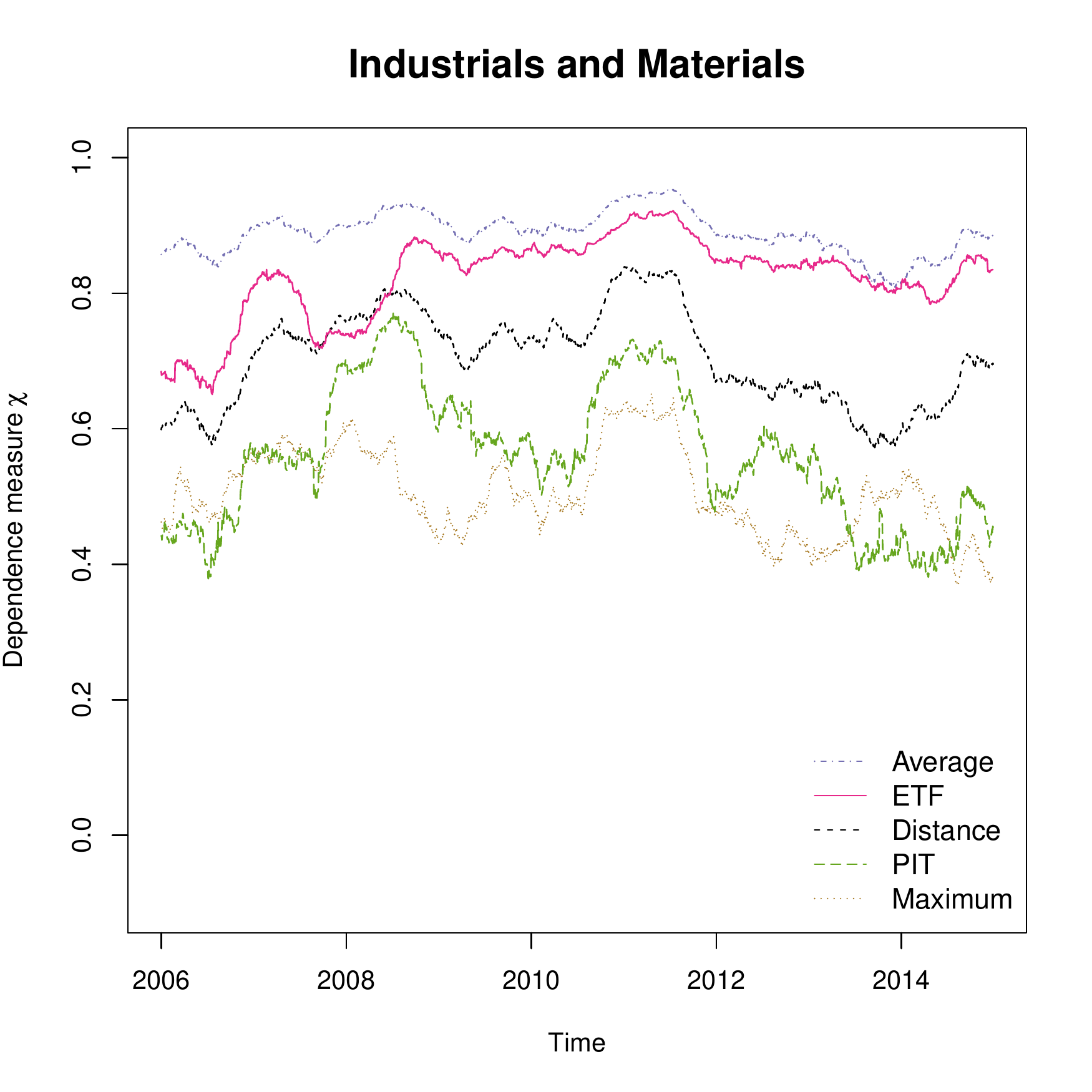}%
  \end{center}
  \caption{Time-varying dependence measure for various collapsing functions and
    the ETFs between a few selected pairs of business sectors. The
    four pairs of sectors arbitrarily selected are as follows: Consumer
    Discretionary vs. Consumer Staples (top left), Energy vs. Industrials (Top
    right), Health Care vs. Industrials (bottom left), and Industrials vs
    Materials (bottom right).}
  \label{fig: time-varying_collapsedvsetf}
\end{figure}

Figure~\ref{fig: time-varying_collapsedvsbiv} showcases the time-varying
dependence as captured by the distance, average, and maximum collapsing
functions with their corresponding confidence intervals constructed using
Proposition~\ref{proof:Prop 4.1} and Remark~\ref{remark:est}. For the plots in
this figure, we used a 150-day moving window. Shown in the
background are all pair-wise (bivariate) time-varying dependence
measures between individual constituents of the two sectors. This juxtaposition
highlights that the dependence measures between collapsed random variables
capture fairly similar shifts in strength of dependence over time compared with
all the pair-wise (classical) dependence measures between the
sectors. Furthermore, one can see that the width of confidence intervals for the
various collapsed dependence measures is well-within the width of the background
band representing all the bivariate dependence series between individual
constituents from each sector. This further provides some intuitive
corroboration that the collapsing functions in some sense sufficiently capture
time-varying dependence between groups of random variables (that is, sufficient
when compared to a series of matrices of pair-wise dependence measures).
\begin{figure}[htbp]
  \begin{center}
    \includegraphics[width=0.48\linewidth]{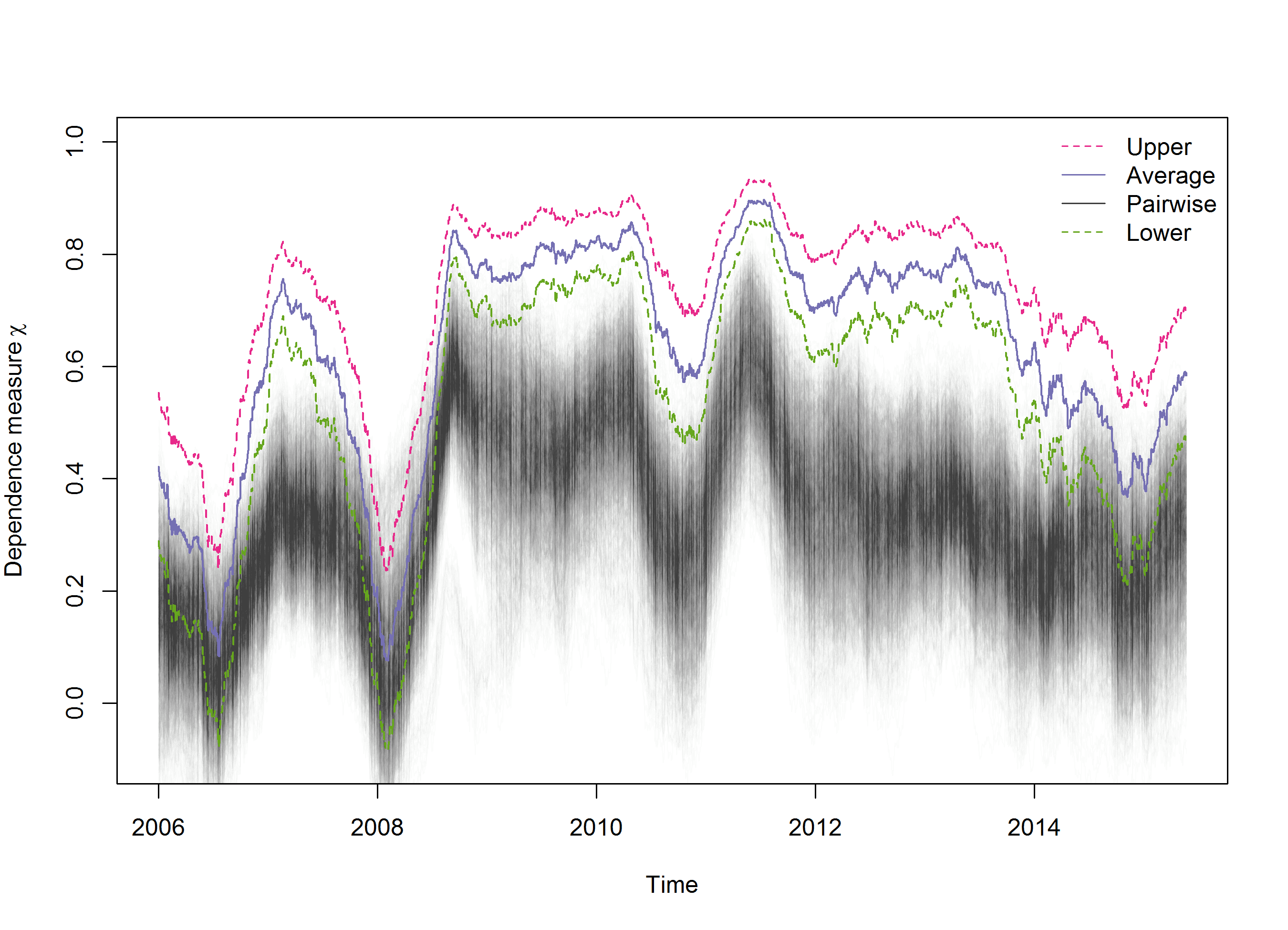}
    \hspace{1mm}
    \includegraphics[width=0.48\linewidth]{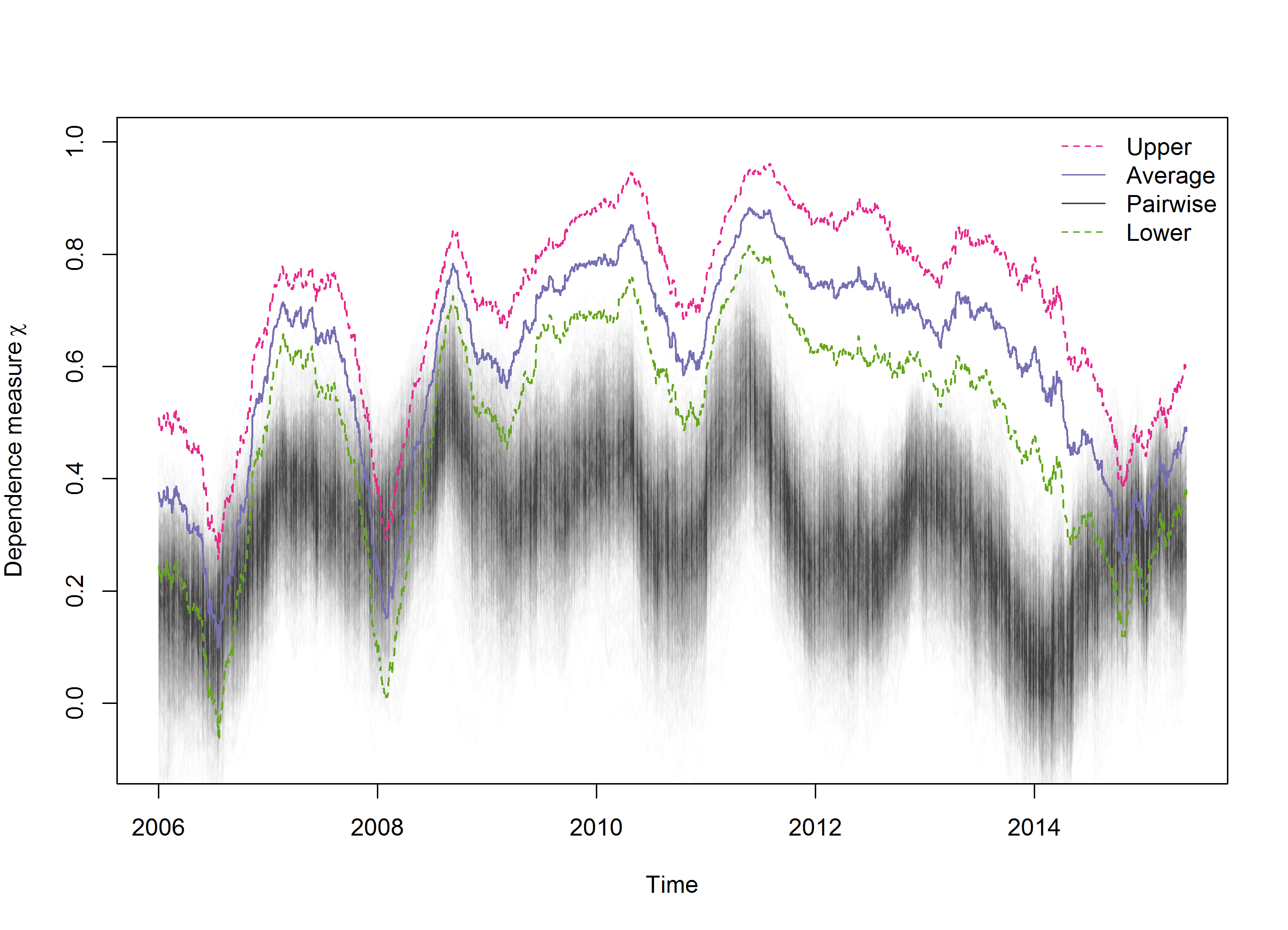}
    \vspace{0.1mm}
    \includegraphics[width=0.48\linewidth]{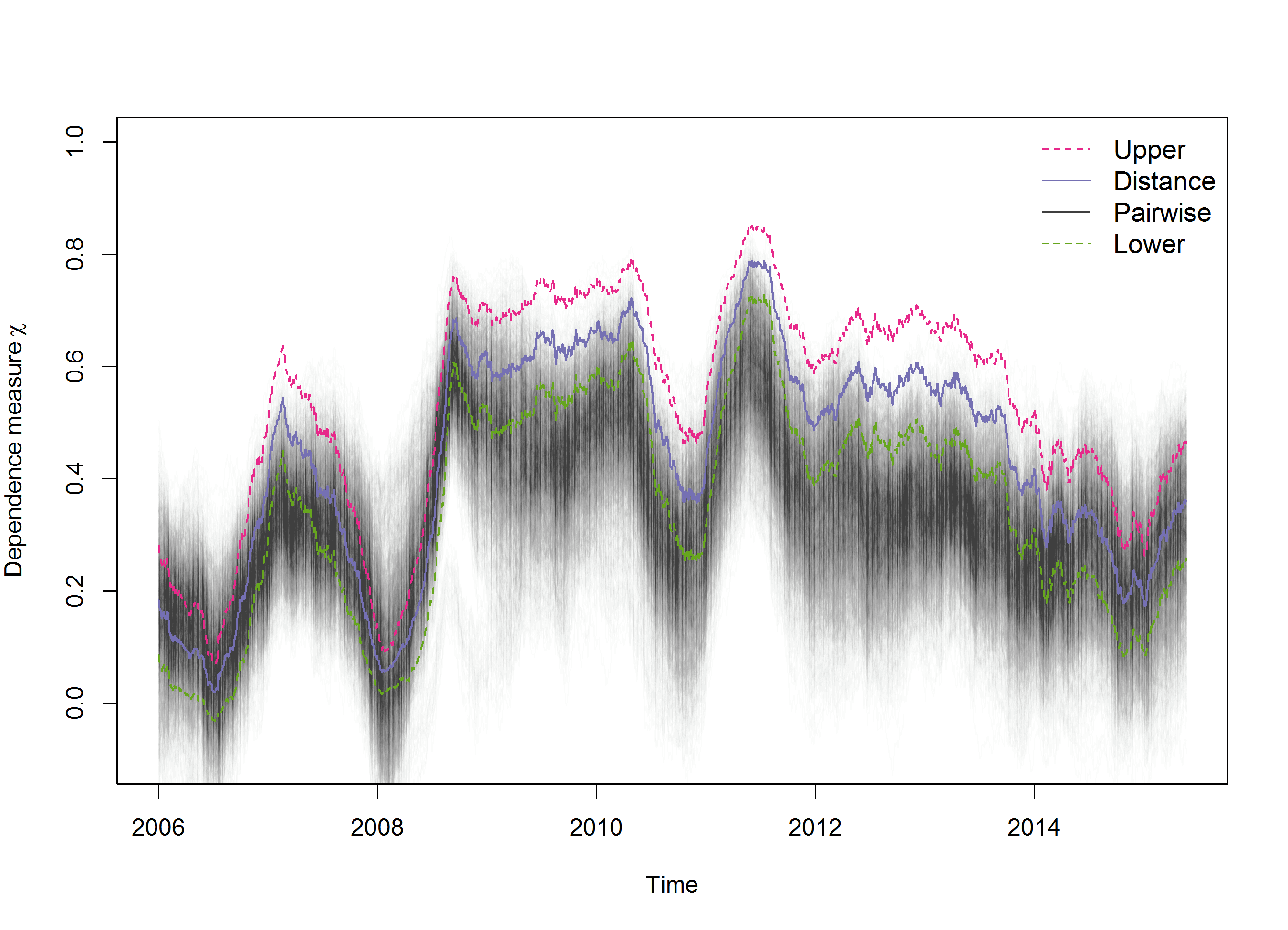}%
    \hspace{1mm}
    \includegraphics[width=0.48\linewidth]{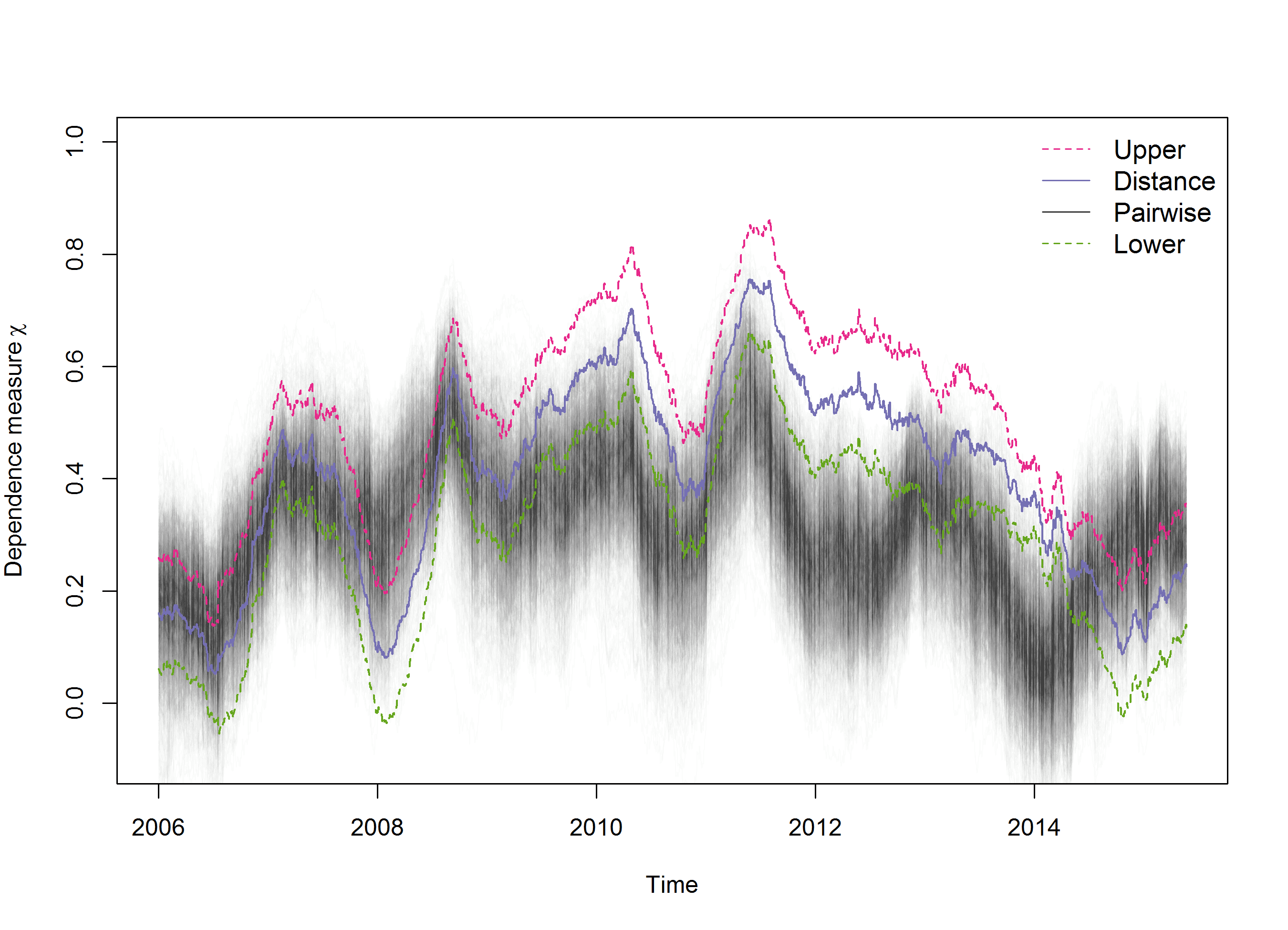}%
    \vspace{0.1mm}
    \includegraphics[width=0.48\linewidth]{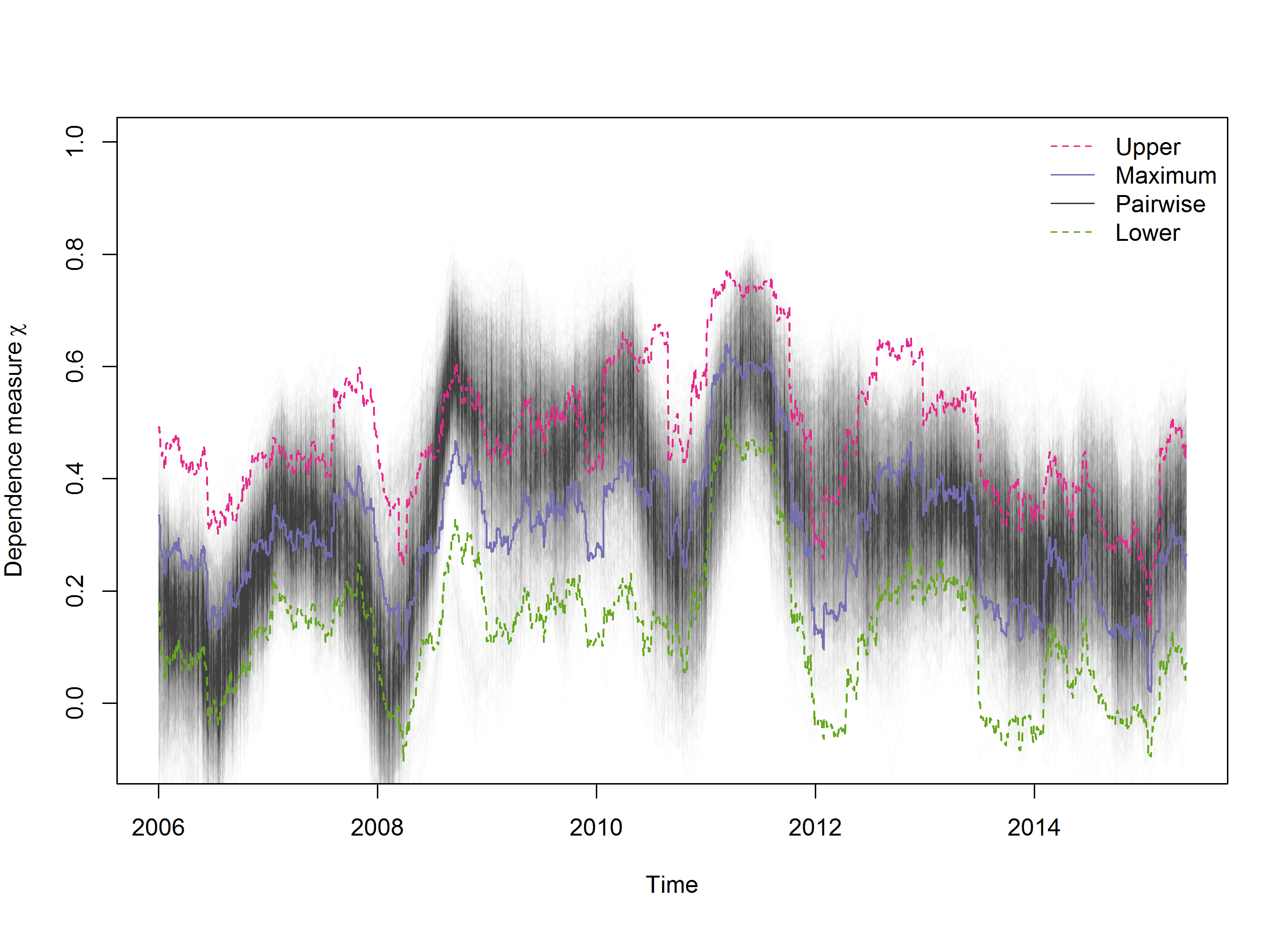}%
    \hspace{1mm}
    \includegraphics[width=0.48\linewidth]{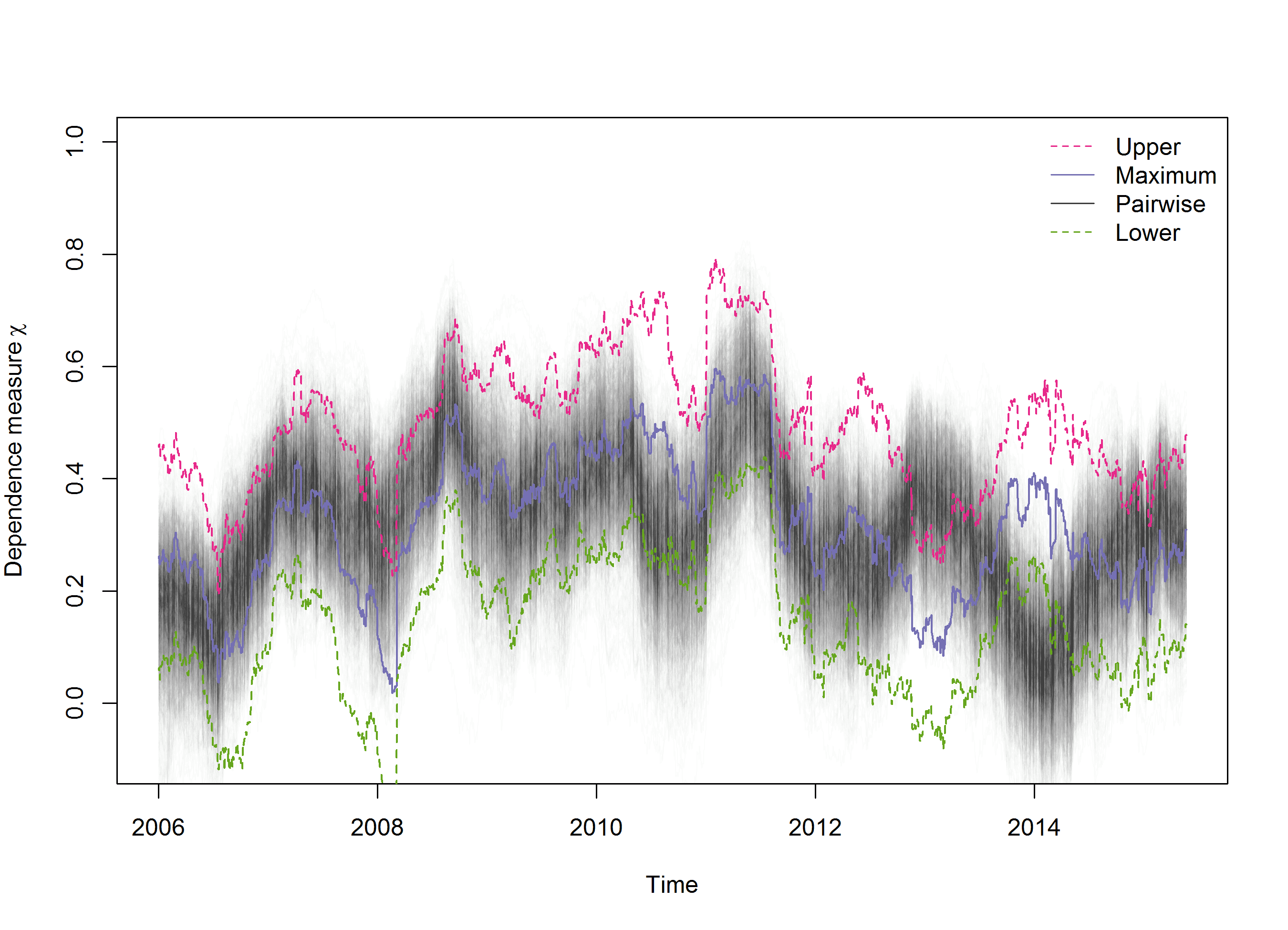}%
  \end{center}
  \caption{Time-varying dependence measure for average (top plots),
    distance (middle plots), and maximum (bottom plots) collapsing functions with 95\%
    confidence intervals against a backdrop of all pair-wise
    time-varying measures between assets in two business sectors. On the left
    panel we present the plots for Consumer Discretionary vs. Energy sectors and
    on the right panel we present the plots for Energy vs. Health Care
    sectors.}
  \label{fig: time-varying_collapsedvsbiv}
\end{figure}

\section{Conclusion and discussion}
In this paper, we introduced a framework for quantifying dependence between
random vectors. With the notion of a collapsing function, random vectors were
summarized by collapsed random variables which were then used as a proxy for the
purposes of studying dependence between random vectors. Based on this framework,
a graphical assessment of independence between random vectors was proposed and
its applicability demonstrated with examples from finance. Furthermore various
measures of association and dependence between random vectors were suggested as
a natural by-product of this framework.

Additionally, we introduced and explored the notion of a collapsed distribution
function and collapsed copula for the maximum and PIT collapsing functions. As a
result, we were able to relate the dependence between collapsed random variables
to the dependence between the original random vectors. Particularly for both of
these collapsing functions, we derived analytical forms of the collapsed
distribution or copula in the Archimedean case. Moreover, for the PIT collapsing
function, this lead to a multivariate extension of Kendall distributions.

General asymptotic results for the dependence measures resulting from the
framework were derived through the lens of U-statistics with the exception of
the PIT collapsing function for which an estimator not amenable to the
U-statistics theory was proposed. These asymptotic results then allowed us to
construct confidence intervals for collapsed dependence measures constructed
within our framework. Our results were showcased in Section~\ref{sec:ex:SP500}
where we captured the evolution of dependence between business sectors over
time. In addition, for this finance example we visualized dependence between
sectors via realizations from various collapsed copulas. Beyond this example, we
considered protein data from the realm of bioinformatics. The task involved to
rank pairs of residues which were modeled by random vectors of varying
dimensions. Dependence measures resulting from our framework were thus naturally
used as a metric for this ranking task. We showed that for some collapsing
functions, our measures were fairly comparable with prior tailor-made methods
used in the literature while requiring a lesser computational burden.

We conclude that there is no notion of a ``best'' collapsing function. All
reasonable collapsing functions we investigated tend to capture the dependence
between random vectors in a similar fashion with subtle variations. However,
there are some advantages for each collapsing function that are noteworthy. The
maximum and PIT collapsing functions allow for a (more direct) link between the
collapsed and the original (high-dimensional) distribution function or
copula. Furthermore, the PIT and multivariate rank collapsing functions lead to
multivariate extensions of Kendall's tau and Spearman's rho as highlighted in
\cite{grothe2014}. The distance function is often a good choice of collapsing
function for the graphical assessment of independence between random
vectors. Furthermore, the distance and kernel collapsing functions, if carefully
chosen, can naturally detect various non-linear (beyond monotone) dependencies
between groups of random variables. The maximum and weighted average collapsing
functions require the least computational time. Moreover, the weighted average
collapsing function seems natural in the context of finance and yields very
competitive results for the ranking task in the protein example.

An interesting and open challenge for our collapsing function framework lies in
understanding the relationship between the collapsed copula and the inherent
higher dimensional copula between the original random variables. Naturally, one
loses information when compressing random vectors into single random
variables. Having an explicit connection between the collapsed copula and the
original copula helps in better understanding this loss of information. Some
collapsing functions such as distance or kernel functions involve complicated
non-linear transformations of the original random variables and hence render
this task complicated. For the weighted average collapsing function,
understanding the collapsed copulas remains a pertinent and open question.

\printbibliography[heading=bibintoc]

\appendix

\section{Proofs and additional details for the asymptotic framework}
\subsection{Proof of Proposition~\ref{Prop:Asymptotics}}\label{proof:Prop 4.1}
\begin{proof}
  We begin by explicitly writing out the population version of our dependence
  measure. For a general collapsing function $S$,
  \begin{align*}
    &\chi(\bm{X},\bm{Y})=\rho(S(\bm{X}),S(\bm{Y}))=\frac {\mu_{xy}-\mu_x\mu_y}
      {\sqrt{\mu_{xx}-\mu_x^2} \sqrt{\mu_{yy}-\mu_y^2} }.
  \end{align*}
  \subsubsection*{Case 1: $S$ is a $p$-variate function}
  Based on the $n$ independent random samples, define
  \begin{align*}
    &m^{(1)}_{x}=\frac{1}{n} \sum_{i=1}^{n} S(\bm{X}_i),\quad  m^{(1)}_y=\frac{1}{n} \sum_{i=1}^{n} S(\bm{Y}_i),\quad  m^{(1)}_{xx}=\frac {1} {n} \sum_{i=1}^{n} S(\bm{X}_i)^2\\
    & m^{(1)}_{yy}=\frac {1} {n} \sum_{i=1}^{n} S(\bm{Y}_i)^2,\quad
      m^{(1)}_{xy}= \frac {1} {n} \sum_{i=1}^{n} S(\bm{X}_i)  S(\bm{Y}_i).
  \end{align*}
  By \cite{hoeffding1948}, $m^{(1)}_x$, $m^{(1)}_y$, $m^{(1)}_{xx}$,
  $m^{(1)}_{yy}$, $m^{(1)}_{xy}$ are U-statistics for $\mu_x$, $\mu_y$,
  $\mu_{xx}$, $\mu_{yy}$, $\mu_{xy}$ respectively. Following from Hoeffding's
  decomposition theorem, see \cite[Chapter~3]{lee1990}, we can
  conclude that, as $n\to\infty$,
  \begin{align*}
    &\sqrt{n}(m^{(1)}_x-\mu_x) = \frac{1}{\sqrt{n}} \sum_{i=1}^{n} (S(\bm{X}_i)-\mu_x) + o_{\text{p}}(1),\\
    &\sqrt{n}(m^{(1)}_y-\mu_y) = \frac{1}{\sqrt{n}} \sum_{i=1}^{n} (S(\bm{Y}_i)-\mu_y) + o_{\text{p}}(1),\\
    &\sqrt{n}(m^{(1)}_{xx}-\mu_{xx}) = \frac{1}{\sqrt{n}} \sum_{i=1}^{n} (S(\bm{X}_i)^2-\mu_{xx}) + o_{\text{p}}(1),\\
    &\sqrt{n}(m^{(1)}_{yy}-\mu_{yy}) = \frac{1}{\sqrt{n}} \sum_{i=1}^{n} (S(\bm{Y}_i)^2-\mu_{yy}) + o_{\text{p}}(1),\\
    &\sqrt{n}(m^{(1)}_{xy}-\mu_{xy}) = \frac{1}{\sqrt{n}} \sum_{i=1}^{n} (S(\bm{X}_i)S(\bm{Y}_i)-\mu_{xy}) + o_{\text{p}}(1).
  \end{align*}
  Combining all the terms, it follows that, for $n\to\infty$,
  \begin{align*}
    \sqrt{n} \begin{pmatrix}
      m^{(1)}_x -\mu_x\\ m^{(1)}_y-\mu_y \\m^{(1)}_{xx}-\mu_{xx} \\ m^{(1)}_{yy}-\mu_{yy}
      \\ m^{(1)}_{xy}-\mu_{xy}
    \end{pmatrix} \overset{\text{d}}{\longrightarrow} \N_5 (\bm{0},\Sigma_1),
  \end{align*}
  where $\Sigma_1$ is the covariance matrix of the random vector
  \begin{align*}
    (S(\bm{X}),S(\bm{Y}),S(\bm{X})^{2},S(\bm{Y})^{2},S(\bm{X})S(\bm{Y})).
  \end{align*}
  Then, we construct an estimator for the population dependence measure, $\chi(\bm{X},\bm{Y})$, as a function of the U-statistics.
  \begin{align*}
    \chi_n(\bm{X},\bm{Y})=
    f(m^{(1)}_x,m^{(1)}_y,m^{(1)}_{xx},m^{(1)}_{yy},m^{(1)}_{xy})=\frac {m^{(1)}_{xy}-m^{(1)}_xm^{(1)}_y}
    {\sqrt{m^{(1)}_{xx}-(m^{(1)}_x)^2} \sqrt{m^{(1)}_{yy}-(m^{(1)}_y})^2 },
  \end{align*}
  where $m^{(1)}_x$, $m^{(1)}_y$, $m^{(1)}_{xx}$, $m^{(1)}_{yy}$, and $m^{(1)}_{xy}$ are the sample quantities as previously defined.
  Then, by the delta method we have
  \begin{align*}
    \sqrt{n} (\chi_n(\bm{X},\bm{Y})-\chi(\bm{X},\bm{Y}))\overset {d} {\longrightarrow} \N(0,\sigma^2_{\chi}), \quad (n\rightarrow \infty),
  \end{align*}
  where  $\sigma_{\chi}^2=(\nabla f_{5 \times 1}|_{\bm{\mu}})'\Sigma_1(\nabla f_{5 \times 1}|_{\bm{\mu}})$. Note that the gradient vector is evaluated at $\bm{\mu}=(\mu_x,\mu_y,\mu_{xx},\mu_{yy},\mu_{xy})$.

  \subsubsection*{Case 2: $S$ is a $2p$-variate function}
  Consider
  \begin{align*}
    &m^{(2)}_{x}=\frac{1}{{n \choose 2}} \sum_{i=1}^{n}\sum_{j>i}^{n} S(\bm{X}_i,\bm{X}_j),\  m^{(2)}_y=\frac{1}{{n \choose 2}} \sum_{i=1}^{n}\sum_{j>i}^{n} S(\bm{Y}_i,\bm{Y}_j),\  m^{(2)}_{xx}=\frac {1} {{n \choose 2}} \sum_{i=1}^{n}\sum_{j>i}^{n} S(\bm{X}_i,\bm{X}_j)^2\\
    & m^{(2)}_{yy}=\frac {1} {{n \choose 2}} \sum_{i=1}^{n}\sum_{j>i}^{n} S(\bm{Y}_i,\bm{Y}_j)^2,\
      m^{(2)}_{xy}= \frac {1} {{n \choose 2}} \sum_{i=1}^{n}\sum_{j>i}^{n} S(\bm{X}_i,\bm{X}_j)  S(\bm{Y}_i,\bm{Y}_j).
  \end{align*}
  Similar to the setup presented in Case 1, these sample quantities are
  naturally U-statistics for their corresponding population quantities.  Again
  following from Hoeffding's decomposition theorem, we have that, as $n\to\infty$,
  \begin{align*}
    &\sqrt{n}(m^{(2)}_x-\mu_x) = \frac{2}{\sqrt{n}} \sum_{i=1}^{n} \big(\E_{\tiny \bm{X}'}\big[S(\bm{X}_i,\bm{X}')\big]-\mu_x\big) + o_{\text{p}}(1),\\
    &\sqrt{n}(m^{(2)}_y-\mu_y) = \frac{2}{\sqrt{n}} \sum_{i=1}^{n} (\E_{\tiny \bm{Y}'}\big[S(\bm{Y}_i,\bm{Y}')\big]-\mu_y) + o_{\text{p}}(1),\\
    &\sqrt{n}(m^{(2)}_{xx}-\mu_{xx}) = \frac{2}{\sqrt{n}} \sum_{i=1}^{n} (\E_{\tiny \bm{X}'}\big[S(\bm{X}_i,\bm{X}')^2 \big]-\mu_{xx}) + o_{\text{p}}(1),\\
    &\sqrt{n}(m^{(2)}_{yy}-\mu_{yy}) = \frac{2}{\sqrt{n}} \sum_{i=1}^{n} (\E_{\tiny \bm{Y}'}\big[S(\bm{Y}_i,\bm{Y}')^2\big]-\mu_{yy}) + o_{\text{p}}(1),\\
    &\sqrt{n}(m^{(2)}_{xy}-\mu_{xy}) = \frac{2}{\sqrt{n}} \sum_{i=1}^{n} (\E_{\tiny (\bm{X}',\bm{Y}')}\big[S(\bm{X}_i,\bm{X}')S(\bm{Y}_i,\bm{Y}')\big]-\mu_{xy}) + o_{\text{p}}(1),
  \end{align*}
  where the conditional expectations in the expressions above represent the
  first order Hoeffding decomposition of the corresponding U-statistic.
  Combining all the terms, it follows that
  \begin{align*}
    \sqrt{n} \begin{pmatrix}
      m^{(2)}_x -\mu_x\\ m^{(2)}_y-\mu_y \\m^{(2)}_{xx}-\mu_{xx} \\ m^{(2)}_{yy}-\mu_{yy}
      \\ m^{(2)}_{xy}-\mu_{xy}
    \end{pmatrix} \overset{\text{d}}{\longrightarrow} \N_5 (\bm{0},4 \Sigma_2),
  \end{align*}
  where $\Sigma_2$ is the covariance matrix of the random vector
  \begin{align*}
    (\E_{\tiny \bm{X}'}\big[S(\bm{X},\bm{X}')\big],\E_{\tiny \bm{Y}'}\big[S(\bm{Y},\bm{Y}')\big],\E_{\tiny \bm{X}'}\big[S(\bm{X},\bm{X}')^2\big],\E_{\tiny \bm{Y}'}\big[S(\bm{Y},\bm{Y}')^2\big],\E_{\tiny (\bm{X}',\bm{Y}')}\big[S(\bm{X},\bm{X}')S(\bm{Y},\bm{Y}')\big]).
  \end{align*}
  We can then construct an estimator for the population dependence measure
  $\chi(\bm{X},\bm{Y})$ exactly as we did in
  Case~1 but instead with the use of the sample quantities $m^{(2)}_x$,
  $m^{(2)}_y$, $m^{(2)}_{xx}$, $m^{(2)}_{yy}$, $m^{(2)}_{xy}$. By
  the delta method we have that, as $n\to\infty$,
  \begin{align*}
    \sqrt{n} (\chi_n(\bm{X},\bm{Y})-\chi(\bm{X},\bm{Y}))\overset {d} {\longrightarrow} \N(0,\sigma^2_{\chi}),
  \end{align*}
  where $\sigma_{\chi}^2=(\nabla f_{5 \times 1}|_{\bm{\mu}})'\Sigma_2(\nabla f_{5 \times 1}|_{\bm{\mu}})$.
\end{proof}

\subsection{Additional details for estimation of
  $\sigma_{\chi}^2$} \label{App:add_details}
Analytical forms of the components of the gradient vector are given below; note
that $\bm{m}=(m_x,m_y,m_{xx},m_{yy},m_{xy})$ acts as a place holder for both
$\bm{m}^{(1)}$ and $\bm{m}^{(2)}$ defined in Remark \ref{remark:est}:
\begin{align*}
  &\nabla f_1\rvert_{\bm{m}}= \frac{m_x(m_{xy}-m_{x}m_{y})}{(m_{xx}-m_x^2)^{3/2}\sqrt{m_{yy}-m_y^2}} -\frac{m_y}{\sqrt{m_{xx}-m_x^2} \sqrt{m_{yy}-m_y^2}}, \\
  &\nabla f_2\rvert_{\bm{m}}= \frac{m_y(m_{xy}-m_{x}m_{y})}{(m_{yy}-m_y^2)^{3/2}\sqrt{m_{xx}-m_x^2}} -\frac{m_x}{\sqrt{m_{xx}-m_x^2} \sqrt{m_{yy}-m_y^2}}, \\
  &\nabla f_3\rvert_{\bm{m}}=
    -\frac{m_{xy}-m_{x}m_{y}}{2(m_{xx}-m_x^2)^{3/2} \sqrt{m_{yy}-m_y^2}},\\
  &\nabla f_4\rvert_{\bm{m}}=
    -\frac{m_{xy}-m_{x}m_{y}}{2(m_{yy}-m_y^2)^{3/2} \sqrt{m_{xx}-m_x^2}},\\
  &\nabla f_5\rvert_{\bm{m}}=\frac{1}{\sqrt{m_{xx}-m_x^2} \sqrt{m_{yy}-m_y^2}}.
\end{align*}

\subsection{Additional asymptotic results}
An estimator $\tau_n$ of
$\tau(S(\bm X), S(\bm Y))=\rho(\I_{\{S(\bm X)\le S(\bm X')\}}, \I_{\{S(\bm Y)\le S(\bm
	Y')\}})$ can be constructed through the U-statistics framework with the corresponding asymptotic results following as a consequence of Proposition \ref{Prop:Asymptotics}.
\begin{corollary}[Asymptotic distribution of $\tau_n$]
Assume $\bm{X}',\bm{X}'',\bm{X}'''$ are independent copies of $\bm{X}$. Suppose $\tau_n(\bm{X},\bm{Y})$ is constructed as a function of U-statistics. Then, as $n\to\infty$,
\begin{align*}
\sqrt{n} ({\tau}_n(S(\bm{X}),S(\bm{Y}))-\tau(S(\bm{X}),S(\bm{Y})))\overset {d} {\longrightarrow} \N(0,\sigma^2_{\tau}),
\end{align*}
where
\begin{align*}
\sigma_{\tau}^2=\begin{cases}
4(\nabla f_{3 \times 1}|_{\bm{\mu}})'\Sigma_1(\nabla f_{3 \times 1}|_{\bm{\mu}}),&\text{if $S$ is a $p$-variate function},\\
16(\nabla f_{3 \times 1}|_{\bm{\mu}})'\Sigma_2(\nabla f_{3 \times 1}|_{\bm{\mu}}),&\text{if $S$ is a $2p$-variate function}.
\end{cases}
\end{align*}
Here, $\nabla f_{3 \times 1}|_{\bm{\mu}}$ denotes the gradient vector
of the function
\begin{align*}
f(a,b,c)=\frac {c-ab} {\sqrt{a-a^2} \sqrt{b-b^2}},
\end{align*}
evaluated at the population mean
$\bm{\mu}=(\mu_x,\mu_y,\mu_{xy})$, where
$\mu_x=\P[S(\bm{X}) \leq S(\bm{X}')]$, $\mu_y=\P[S(\bm{Y}) \leq S(\bm{Y}')]$ and $\mu_{xy}=\P[S(\bm{X})\leq S(\bm{X}'),S(\bm{Y})\leq S(\bm{Y}')]$.
Furthermore, $\Sigma_1$ denotes the covariance matrix of
\begin{align*}
(\P_{\bm{X}'|\bm{X}}[S(\bm{X}) \leq S(\bm{X}')],\P_{\bm{Y}'|\bm{Y}}[S(\bm{Y}) \leq S(\bm{Y}')],\P_{\bm{X}',\bm{Y}'|\bm{X},\bm{Y}}[S(\bm{X}) \leq S(\bm{X}'), S(\bm{Y}) \leq S(\bm{Y}')])
\end{align*}
and $\Sigma_2$ denotes the covariance matrix of
\begin{align}
&(\P_{\bm{X}',\bm{X}'',\bm{X}'''|\bm{X}}[S(\bm{X},\bm{X}') \leq S(\bm{X}'',\bm{X}''')], \P_{\bm{Y}',\bm{Y}'',\bm{Y}'''|\bm{Y}}[S(\bm{Y},\bm{Y}') \leq S(\bm{Y}'',\bm{Y}''')],\notag\\
&\P_{\bm{X}',\bm{X}'',\bm{X}''',\bm{Y}',\bm{Y}'',\bm{Y}'''|\bm{X},\bm{Y}}[S(\bm{X},\bm{X}') \leq S(\bm{X}'',\bm{X}'''),S(\bm{Y},\bm{Y}') \leq S(\bm{Y}'',\bm{Y}''')]),  \label{tau:case2:randomvec}
\end{align}
where $\P_{\cdot|\cdot}$ denotes a conditional probability.
\end{corollary}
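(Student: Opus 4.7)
The plan is to reduce the corollary to an application of Proposition~\ref{Prop:Asymptotics} applied to the indicator-based collapsing function $\tilde{S}(\bm{X},\bm{X}')=\I_{\{S(\bm{X})\le S(\bm{X}')\}}$. By Lemma~\ref{lem:repr:tau:cor}, the population Kendall's tau $\tau(S(\bm{X}),S(\bm{Y}))$ equals the Pearson correlation of the two indicators $\I_{\{S(\bm{X})\le S(\bm{X}')\}}$ and $\I_{\{S(\bm{Y})\le S(\bm{Y}')\}}$, so $\tau_n$ can be constructed by forming U-statistics for the population moments $\mu_x$, $\mu_y$, $\mu_{xy}$ exactly as in the proof of Proposition~\ref{Prop:Asymptotics} and then plugging these into the Pearson correlation functional.

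Next I would exploit the idempotence $\I^2=\I$: because $\tilde{S}$ is $\{0,1\}$-valued we have $\mu_{xx}=\mu_x$ and $\mu_{yy}=\mu_y$, so the 5-variate function $f$ of Proposition~\ref{Prop:Asymptotics} collapses to
\begin{align*}
    f(a,b,c)=\frac{c-ab}{\sqrt{a-a^2}\sqrt{b-b^2}}
\end{align*}
evaluated at $(\mu_x,\mu_y,\mu_{xy})$, which is precisely the reduction appearing in the corollary. The three resulting population means coincide with the probabilities $\P[S(\bm{X})\le S(\bm{X}')]$, $\P[S(\bm{Y})\le S(\bm{Y}')]$ and $\P[S(\bm{X})\le S(\bm{X}'),\ S(\bm{Y})\le S(\bm{Y}')]$ as claimed.

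Then I would split on the order of the induced U-statistic. If $S$ is $p$-variate, $\tilde{S}$ is $2p$-variate and Case~2 of Proposition~\ref{Prop:Asymptotics} applies directly; the first-order Hoeffding projection of $\tilde{S}$ is $\E_{\bm{X}'}[\I_{\{S(\bm{X})\le S(\bm{X}')\}}]=\P_{\bm{X}'|\bm{X}}[S(\bm{X})\le S(\bm{X}')]$, producing the multiplier $4$ and the stated $\Sigma_1$. If $S$ is $2p$-variate, then $\tilde{S}(\bm{X},\bm{X}',\bm{X}'',\bm{X}''')=\I_{\{S(\bm{X},\bm{X}')\le S(\bm{X}'',\bm{X}''')\}}$ uses four independent copies of $\bm{X}$, so the corresponding U-statistic has order $4$. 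A degree-$4$ Hoeffding decomposition replaces the leading constant $2$ by $4$, so the asymptotic variance picks up the factor $4^2=16$, with the first-order projections equal to the conditional probabilities displayed in~\eqref{tau:case2:randomvec}, producing $\Sigma_2$.

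The step I expect to be the main obstacle is the technical extension of the U-statistics apparatus of Proposition~\ref{Prop:Asymptotics} to the degree-$4$ kernel that arises in Case~2: this requires invoking a higher-order Hoeffding decomposition (e.g.\ \cite[Chapter~3]{lee1990}), verifying non-degeneracy of the first-order projection so that the $\sqrt{n}$-rate applies, and handling the fact that the indicator kernel $\tilde{S}$ is not symmetric in its arguments (standard symmetrization suffices and leaves the population means unchanged). Once these items are in place, the delta method together with the 3-variate reduction of the gradient closes out the argument exactly as in Proposition~\ref{Prop:Asymptotics}.
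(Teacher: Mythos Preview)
Your proposal is correct and follows essentially the same route as the paper: write $\tau$ as the Pearson correlation of the indicators (Lemma~\ref{lem:repr:tau:cor}), use $\I^2=\I$ to collapse the five-moment setup of Proposition~\ref{Prop:Asymptotics} to three moments, treat Case~1 via the order-$2$ U-statistics of Case~2 of Proposition~\ref{Prop:Asymptotics}, and treat Case~2 by an order-$4$ Hoeffding decomposition yielding the factor $16$. The paper's proof is terser (it simply says Case~1 ``can be constructed using the setup of the proof of Case~2 of Proposition~\ref{Prop:Asymptotics}'' and then writes out the order-$4$ decomposition for Case~2), and it defers your symmetrization concern to a separate remark, but the ideas match.
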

\begin{proof}
 We begin by explicitly writing out the population version of our dependence
 measure. For a general collapsing function $S$,
 \begin{align*}
 &\tau(S(\bm{X}),S(\bm{Y}))=\rho(\I_{\{S(\bm X)\le S(\bm X')\}}, \I_{\{S(\bm Y)\le S(\bm{Y}')\}})=\frac {\mu_{xy}-\mu_x\mu_y}
 {\sqrt{\mu_{x}-\mu_x^2} \sqrt{\mu_{y}-\mu_y^2} }.
 \end{align*}
  \subsubsection*{Case 1: $S$ is a $p$-variate function}
  Based on a random sample $(\bm{X}_1,\bm{Y}_1),\dots,(\bm{X}_n,\bm{Y}_n)$,
  estimators $m_x^{(1)},m_y^{(1)},$ and $m_{xy}^{(1)}$ can be constructed using
  the setup of the proof of Case~2 of Proposition~\ref{Prop:Asymptotics}. The convergence result follows from
  a similar delta method argument.
  \subsubsection*{Case 2: $S$ is $2p$-variate function}
    The sample quantities
    \begin{gather*}
    m^{(2)}_{x}=\frac{1}{{n \choose 4}} \sum_{i<j<k<l} \I_{\{S(\bm{X}_i,\bm{X}_j')\leq S(\bm{X}_k'',\bm{X}_l''')\}},\quad  m^{(2)}_{y}=\frac{1}{{n \choose 4}} \sum_{i<j<k<l} \I_{\{S(\bm{Y}_i,\bm{Y}_j')\leq S(\bm{Y}_k'',\bm{Y}_l''')\}},\ \\
    m^{(2)}_{xy}=\frac{1}{{n \choose 4}} \sum_{i<j<k<l} \I_{\{S(\bm{X}_i,\bm{X}_j')\leq S(\bm{X}_k'',\bm{X}_l'''),S(\bm{Y}_i,\bm{Y}_j')\leq S(\bm{Y}_k'',\bm{Y}_l''')\}}
    \end{gather*}
    are naturally U-statistics for their corresponding population
    quantities. Then, following Hoeffding's decomposition theorem, we have that,
    as $n \to \infty$,
\begin{align*}
\sqrt{n}(m^{(2)}_x-\mu_x) &= \frac{4}{\sqrt{n}} \sum_{i=1}^{n} \bigl(\P_{\tiny \bm{X}',\bm{X}'',\bm{X}'''|\bm{X}}(S(\bm{X}_i,\bm{X}')\leq S(\bm{X}'',\bm{X}'''))-\mu_x\bigr) + o_{\text{p}}(1),\\
\sqrt{n}(m^{(2)}_y-\mu_y) &= \frac{4}{\sqrt{n}} \sum_{i=1}^{n} \bigl(\P_{\tiny \bm{Y}',\bm{Y}'',\bm{Y}'''|\bm{Y}}(S(\bm{Y}_i,\bm{Y}')\leq S(\bm{Y}'',\bm{Y}'''))-\mu_y\bigr) + o_{\text{p}}(1),\\
  \sqrt{n}(m^{(2)}_{xy}-\mu_{xy}) &=\frac{4}{\sqrt{n}} \sum_{i=1}^{n} \bigl(\P_{\tiny \bm{X}',\bm{X}'',\bm{X}''',\bm{Y}',\bm{Y}'',\bm{Y}'''|\bm{X},\bm{Y}}(S(\bm{X}_i,\bm{X}')\leq S(\bm{X}'',\bm{X}'''),\\
  &\phantom{\,=\frac{4}{\sqrt{n}} \sum_{i=1}^{n} \bigl(}S(\bm{Y}_i,\bm{Y}') \leq S(\bm{Y}'',\bm{Y}'''))-\mu_{xy}\bigr) + o_{\text{p}}(1),
\end{align*}
where the conditional probabilities in the expressions above represent the
first order Hoeffding decomposition of the corresponding U-statistic. Combining all the terms, it follows that
\begin{align*}
\sqrt{n} \begin{pmatrix}
m^{(2)}_x -\mu_x\\ m^{(2)}_y-\mu_y \\ m^{(2)}_{xy}-\mu_{xy}
\end{pmatrix} \overset{\text{d}}{\longrightarrow} \N_3 (\bm{0},16 \Sigma_2),
\end{align*}
where $\Sigma_2$ denotes the covariance matrix of the random vector defined in
\eqref{tau:case2:randomvec}. One can then construct an estimator using
$\tau_n(S(\bm{X}),S(\bm{Y}))=f(m^{(2)}_x,m^{(2)}_y,m^{(2)}_{xy})$ where $f$ is defined as in the claim
Using the delta method, the convergence result follows.
\end{proof}

\begin{remark}
  In the U-statistics framework, one usually works with symmetric kernels as noted in \cite[Chapter~1]{lee1990}. For choices of collapsing functions which would yield non-symmetric kernels, note that one can easily replace the kernel with a symmetric variant. Suppose for example $\phi(X_1,\dots,X_m)$ is a kernel of order $m$. Then, the symmetric variant can be constructed as
  \begin{align*}
    \phi(X_1,\dots,X_m)=\frac{1}{m!} \sum_{\alpha_1,\dots,\alpha_m} \phi(X_{\alpha_1},\dots,X_{\alpha_m}),
  \end{align*}
  where the summation is taken over all permutations $(\alpha_1,\dots,\alpha_m)$ of $(1,\dots,m)$.
  By replacing any non-symmetric kernel with its symmetric variant, the rest of the derivation for the asymptotic distribution would then naturally follow.
\end{remark}
\end{document}

%
%
%
%
